\newif\ifHandout%
\DeclareMathOperator*{\argmin}{arg\,min} %
\DeclareMathOperator*{\argmax}{arg\,max}
\def\ddefloop#1{\ifx\ddefloop#1\else\ddef{#1}\expandafter\ddefloop\fi}
\def\ddef#1{\expandafter\def\csname bb#1\endcsname{\ensuremath{\mathbb{#1}}}}
\def\ddefloop#1{\ifx\ddefloop#1\else\ddef{#1}\expandafter\ddefloop\fi}
\def\ddef#1{\expandafter\def\csname b#1\endcsname{\ensuremath{\mathbf{#1}}}}
\def\ddef#1{\expandafter\def\csname c#1\endcsname{\ensuremath{\mathcal{#1}}}}
\def\ddef#1{\expandafter\def\csname h#1\endcsname{\ensuremath{\widehat{#1}}}}
\def\ddef#1{\expandafter\def\csname hc#1\endcsname{\ensuremath{\widehat{\mathcal{#1}}}}}
\def\ddef#1{\expandafter\def\csname t#1\endcsname{\ensuremath{\widetilde{#1}}}}
\def\ddef#1{\expandafter\def\csname tc#1\endcsname{\ensuremath{\widetilde{\mathcal{#1}}}}}
\newcommand{\pref}[1]{\prettyref{#1}}
\newcommand{\Tcal}{\mathcal{T}}
\newcommand{\Rcal}{\mathcal{R}}
\newcommand{\Hcal}{\mathcal{H}}
\newcommand{\Fcal}{\mathcal{F}}
\newcommand{\shull}{\ensuremath{\text{star}}}
\newtheorem*{rep@theorem}{\rep@title}
\newcommand{\newreptheorem}[2]{%
\newenvironment{rep#1}[1]{%
 \def\rep@title{#2 \ref{##1}}%
 \begin{rep@theorem}}%
 {\end{rep@theorem}}}
\newtheorem{theorem}{Theorem}
\newtheorem{corollary}[theorem]{Corollary}
\newtheorem{lemma}[theorem]{Lemma}
\newtheorem{definition}{Definition}
\newtheorem{example}{Example}
\newtheorem{remark}{Remark}
\crefname{equation}{}{}
\crefname{proposition}{Proposition}{Propositions}
\crefname{appendix}{Appendix}{Appendices}
\newcommand{\kibitz}[2]{\ifnum\Comments=1{\color{#1}{#2}}\fi}
\newcommand{\R}{\mathbb{R}}
\newcommand{\eg}{e.g., \xspace}
\newcommand{\ie}{i.e.,\xspace}
\newcommand{\mcX}{{\mathcal X}}
\newcommand{\mcE}{{\mathcal E}}
\newcommand{\ba}{\begin{array}}
\newcommand{\ea}{\end{array}}
\newcommand{\bs}{\begin{align}\begin{split}\nonumber}
\newcommand{\bsnumber}{\begin{align}\begin{split}}
\newcommand{\es}{\end{split}\end{align}}
\renewcommand{\[}{\left[}
\newtheorem{assumption}{ASSUMPTION}
\DeclareMathOperator{\E}{\mathbb{E}}
\def\balign#1\ealign{\begin{align}#1\end{align}}
\def\balignat#1\ealign{\begin{alignat}#1\end{alignat}}
\def\bitemize#1\eitemize{\begin{itemize}#1\end{itemize}}
\def\benumerate#1\eenumerate{\begin{enumerate}#1\end{enumerate}}
\newenvironment{talign}
 {\csname align\endcsname}
 {\endalign}
\def\balignt#1\ealignt{\begin{talign}#1\end{talign}}%
\newcommand{\Bcal}{\ensuremath{{\cal B}}}
\newcommand{\Qcal}{\ensuremath{{\cal Q}}}
\newcommand{\Gcal}{\ensuremath{{\cal G}}}
\newcommand{\clos}{\ensuremath{\mathrm{cl}}}
\newcommand{\prns}[1]{\left(#1\right)}
\newcommand{\braces}[1]{\left\{#1\right\}}
\DeclareMathSymbol{\nmid}{\mathrel}{AMSb}{"2D}
\renewcommand{\tilde}{\widetilde}
\DeclareMathSymbol{\varnothing}{\mathord}{AMSb}{"3F}
  \newenvironment{solution}
    {\smallskip\par\noindent\emph{Solution: }\color{white}}
    {}
\colorlet{shadecolor}{gray!35}
\newtheorem*{scholium}{Scholium}
\newenvironment{comments}{}{}
\newcommand{\myrule}{\nobreak\par\noindent\mbox{}\hspace*{.333333\textwidth}%
            \rule{.333333\textwidth}{.01in}\hspace*{.33333\textwidth}\par}
\newcommand{\varemdash}[1][10pt]{%
  \makebox[#1]{\leaders\hbox{---}\hfill\kern0pt}%
}
\begin{document}

\title{Source Condition Double Robust Inference on Functionals of Inverse Problems}

\author{Andrew Bennett\\
Cornell University\\
\texttt{awb222@cornell.edu}
\and
Nathan Kallus\\
Cornell University\\
\texttt{kallus@cornell.edu}
\and
Xiaojie Mao\\
Tsinghua University\\
\texttt{maoxj@sem.tsinghua.edu.cn}
\and
Whitney Newey\\
MIT\\ 
\texttt{wnewey@mit.edu}
\and
Vasilis Syrgkanis\\
Stanford University\\
\texttt{vsyrgk@stanford.edu}
\and
Masatoshi Uehara\\
Cornell University\\
\texttt{mu223@cornell.edu}}

\maketitle
\begin{abstract}
We consider estimation of parameters defined as linear functionals of solutions to linear inverse problems. Any such parameter admits a doubly robust representation that depends on the solution to a dual linear inverse problem, where the dual solution can be thought as a generalization of the inverse propensity function. We provide the first source condition double robust inference method that ensures asymptotic normality around the parameter of interest as long as either the primal or the dual inverse problem is sufficiently well-posed, without knowledge of which inverse problem is the more well-posed one. Our result is enabled by novel guarantees for iterated Tikhonov regularized adversarial estimators for linear inverse problems, over general hypothesis spaces, which are developments of independent interest.
\end{abstract}

\section{Introduction}\label{sec: intro}

Many important problems in social and biomedical sciences can be formulated as the estimation of linear functionals of unknown functions that are defined as solutions to linear inverse problems. Examples include nonparametric instrumental variable (IV) regression problems \citep[e.g., ][]{newey2013nonparametric,ai2012semiparametric,ai2003efficient,chen2015sieve}, missing-not-at-random problems \citep[e.g., ][]{d2010new,miao2015identification,LiMiao2022}, causal inference with unmeasured confounders in the presence of proxy variables (a.k.a. proximal causal inference) \citep[e.g., ][]{miao2018a,cui2020semiparametric,deaner2018proxy,kallus2021causal}, partially linear regression problems with endogenous regressors \citep[e.g., ][]{chen2021robust,bennett2022inference,ai2007estimation}, and off-policy evaluation in confounded contextual bandits or partially observable Markov decision processes \citep[e.g., ][]{tennenholtz2020off,shen2022optimal,bennett2021proximal,Miao2022OffPolicy}. 

All these problems are encompassed by the following general statistical estimation problem: we are given data that contain samples of the random variable $W$ and our parameter of interest is defined as:
\begin{align}\label{eqn:functional}
    \theta_0 =\E[\tilde m(W;h_0)]
\end{align}
where $h \mapsto \tilde m(W;h)$ is a known linear functional of $h$.
Given two $W$-measurable variables $X$ and $Z$,
the function $h_0$ is defined as a solution to a linear inverse problem:
\begin{align}\label{eqn:primal}
    \Tcal h=r_0,  
\end{align}
where $h_0$ lies in a closed linear sub-space $\bar{\Hcal}$ of the space $L_2(X)$ of square integrable functions of $X$,
$\Tcal: \bar{\Hcal} \mapsto \bar{\Qcal}$ is a projected conditional expectation operator, \ie, $\Tcal h=\Pi_{\bar{\Qcal}} \E[h(X)\mid Z=\cdot]$, where $\bar{\Qcal}$ is a closed linear sub-space of $L_2(Z)$ and $\Pi_{\bar{\Qcal}}$ denotes the mean-squared projection on the space $\bar{\Qcal}$, and $r_0\in \bar{\Qcal}$ is the Riesz representer of another known linear functional 
$q \mapsto m(W;q)$, \ie:
\begin{align}\label{eq:Reisz}
\forall q \in \bar{\Qcal}:~  \E[m(W;q)] = \E[r_0(Z)\, q(Z)].
\end{align}
The prototypical case is when $\bar{\Hcal}=L_2(X), \bar{\Qcal}=L_2(Z)$ and the moment $m$ is of the simple form $m(W;f)=Y f(Z)$ for some $W$-measurable variable $Y$, in which case $r_0=\E[Y\mid Z]$ and $h_0$ corresponds to the solution of a non-parametric instrumental variable (IV) regression problem, \ie, $\E[Y - h(X)\mid Z]=0$.

Despite the significance of this problem, asymptotically normal inference for the parameter of interest $\theta_0$ presents considerable challenges, particularly when the nuisance function $h_0$ is weakly identified. In particular, since $h_0$ is the solution to an inverse problem, many qualitative attributes of $h_0$ could be smoothened out or distorted by the linear operator $\Tcal$, to the point that we would need very many samples to recover these attributes well, or even to the point that these attributes are irrecoverable even in the limit of infinite samples, \ie, the function $h_0$ is not uniquely identified by \cref{eqn:primal}. These problems are typically referred to in the literature as the ill-posedness of the inverse problem \citep{carrasco2007linear,cavalier2011inverse}. A large line of work assumes unique identification in the limit and further imposes quantitative bounds on measures of ill-posedness, so as to establish estimation rates for the function $h_0$. However, it is known that the uniqueness assumption is easily violated in practical scenarios \citep{newey2003instrumental,andrews2005inference,santos2012inference,kallus2021causal}. 

We instead focus on the minimum norm solution to the inverse problem, which removes the need for a uniquely identified $h_0$; even when the inverse problem has multiple solutions, the minimum norm solution is necessarily unique. Most importantly, under reasonable assumptions, the parameter $\theta_0$ of interest is invariant to the chosen solution of the inverse problem and therefore focusing on the minimum norm solution $h_0$ is without loss of generality. More concretely, let 
$a_0 \in \bar{\Hcal}$ denote the Riesz representer of the linear functional $h\mapsto \E[\tilde{m}(W;h)]$:
\begin{align}
\forall h\in \bar{\Hcal}: \E[\tilde{m}(W;h)]=\E[a_0(X)\, h(X)].
\end{align}
Moreover, let $\Tcal^*:\bar{\Qcal}\mapsto \bar{\Hcal}$ denote the adjoint operator of $\Tcal$, which corresponds to the projected conditional expectation operator $\Tcal^* q=\Pi_{\bar{\Hcal}}\E[q(Z)\mid X=\cdot]$. As was already shown in prior work of \cite{severini2006some,severini2012efficiency,bennett2022inference}, if the dual inverse problem $\Tcal^* q = a_0$ admits any solution $q_0 \in \bar{\Qcal} \subseteq L_2(Z)$, then $\theta_0=\E[\tilde{m}(W;h_0)]$ takes the same value for any $h_0\in \bar{\Hcal}$
solving \cref{eqn:primal}, irrespective of what solution we use. 
Intuitively, existence of a solution $q_0$ is an assumption that the Riesz representer $a_0$ lies primarily on the higher order spectrum of the eigendecomposition of the operator $\Tcal$. Even though $h_0$ is not uniquely identified, the parameter $\theta_0$ is the projection of $h_0$ on the higher order spectrum, and this projection is uniquely identified. 

When such a solution $q_0$ exists, the parameter $\theta_0$ also admits a doubly robust representation:
\begin{align}
    \theta(h, q) := \E[\tilde m(W;h) + m(W;h)  - h(X)\, q(Z) ], 
\end{align}
which satisfies the mixed bias property:
\begin{align}
    \theta(h, q) - \theta_0 = \E[(q(Z) - q_0(Z))\, (h_0(X) - h(X))]
\end{align}
This formula lends itself to a natural estimation strategy: estimate $\hat{h}$ and $\hat{q}$ on a separate sample and then estimate $\theta_0$ in a plug-in manner, by taking the empirical analogue of the doubly robust representation formula:
\begin{align}\label{eqn:estimator}
    \hat{\theta} := \E_n[\tilde m(W;\hat h) + m(W;\hat h)  - \hat h(X) \hat q(Z)]. 
\end{align}
Prior work of \cite{chernozhukov2023simple, bennett2022inference} shows that this estimate is root-$n$ asymptotically normal when
\begin{align}
\Bcal_n := \min\Big\{ \|\Tcal (\hat h-h_0)\|_{L_2}\|\hat q-q_0\|_{L_2},~~\|\hat h-h_0\|_{L_2}\|\Tcal^{*}(\hat q-q_0)\|_{L_2} \Big\} = o_p(n^{-1/2}). 
\end{align}
and both $\|\hat{h}-h_0\|_{L_2}=o_p(1)$ and $\|\hat{q}-q_0\|_{L_2}=o_p(1)$. This observation implies that obtaining estimators with sufficient convergence guarantees for either $ \|\Tcal (\hat h-h_0)\|_{L_2}\|\hat q-q_0\|_{L_2}$ or $\|\hat h-h_0\|_{L_2}\|\Tcal^{*}(\hat q-q_0)\|_{L_2} $ is adequate for achieving asymptotically normal inference. Note that the 
estimation metric 
$\|\Tcal (\hat{h}-h_0)\|_{L_2}=\sqrt{\E\left[\left(\Pi_{\bar{\Qcal}}\E\left[\hat{h}(X)-h_0(X)\mid Z\right]\right)^2\right]}$,
which we refer to as the weak metric, can be much smaller than $\|\hat{h}-h_0\|_{L_2}=\sqrt{\E[(\hat{h}(X)-h_0(X))^2]}$, which we refer to as the strong metric. Estimation of $h_0$ with respect to the weak metric does not typically suffer from the ill-posedness of the inverse problem that defines $h_0$. Similarly for estimating $q_0$ with respect to its corresponding weak metric $\|\Tcal^{*}(\hat q-q_0)\|_{L_2}$. 
Thus for $\hat{\theta}$ to be root-$n$ asymptotically normal, it suffices to estimate only one of the two nuisance functions $h_0, q_0$ at a sufficiently fast rate with respect to the strong metric and the other with respect to the weak metric. Moreover, we always need to ensure consistency for both functions with respect to the strong metric, but without any rate.

\vspace{.5em}\noindent\textbf{Main contribution.} The main goal of our work is to establish novel estimators for $\hat{h}$ and $\hat{q}$ that allow for general non-linear function spaces and which provide guarantees on the quantity $\Bcal_n$ as a function of measures of ill-posedness of the primal and dual inverse problems that define $h_0$ and $q_0$, respectively, in the absence of unique identification.

\emph{Our main result is an ill-posedness doubly robust estimator:} we will give a single estimation algorithm for $\hat{h}$ and $\hat{q}$ such that if one of the two inverse problems is sufficiently well-posed then the parameter estimate $\hat{\theta}$ is root-$n$ asymptotically normal. \emph{Crucially the estimation algorithm does not need to know which inverse problem is the well-posed one.} Moreover, our estimation algorithm adapts to the degree of well-posedness of the most well-posed of the two inverse problems. For instance, as the largest of the two degrees goes to infinity, our requirements for asymptotic normality converge to the ones that correspond to the case when $h_0$ corresponds to the solution of a simple regression problem. \emph{This is the first such double robustness result, with respect to ill-posedness, in the literature.}

We measure the ill/well-posedness of the inverse problems using the \emph{source condition}.
Unlike other measures proposed in the literature, the source condition is an appropriate measure even in the absence of unique identification. To describe the source condition, let us restrict for the moment to linear operators that admit a countable singular value decomposition 
\begin{align}
    \Tcal h = \sum_{i=1}^\infty \sigma_i\, \langle h, v_i\rangle_{L_2}\, u_i ,
\end{align}
where $\sigma_1\geq\sigma_2\geq\ldots$ are the singular values;  
$\Tcal$ and form an orthonormal basis of $\bar{\Qcal}$, \ie, 
$\E[u_i(Z) u_j(Z)] = 1\{i=j\}$; and 
$v_i: X\to \R$ are the right eigenfunctions of $\Tcal$ and also form an orthonormal basis of $\bar{\Hcal}$, \ie, $\E[v_i(X) v_j(X)]=1\{i=j\}$.
The $\beta$-source condition on 
\cref{eqn:primal} states that the minimum norm solution $h_0$ is primarily supported on the 
lower part of the spectrum of the eigendecomposition of $\Tcal$:
\begin{align}
     \sum_{i=1}^\infty 1\{\sigma_i\neq 0\} \frac{\langle h_0, v_i\rangle_{L_2}^2 }{\sigma_i^{2\beta}} <\infty.
\end{align}
Note this implies that for any $m$, $\sum_{i=m}^\infty \langle h_0, v_i\rangle_{L_2}^2 \lesssim \sigma_m^{2\beta}$ (note also that the minimum norm solution has zero inner product with eigenfunctions for which $\sigma_i=0$). Thus the parameter $\beta$ in the source condition controls the amount of support that $h_0$ is allowed to have on the tail of the spectrum. As $\beta$ goes to infinity, the function $h_0$ behaves as being supported only on a finite set of eigenfunctions and the ill-posedness problem vanishes. More generally, $\beta$ controls the degree of ill-posedness and larger $\beta$ means that the inverse problem is more well-posed. 

The source condition is well defined even when the linear operator does not admit a singular value decomposition and, in its general form, requires the minimum norm solution $h_0$ to satisfy
\begin{align}
   \exists w_0\in \bar{\Hcal} : h_0 = (\Tcal^* \Tcal)^{\beta/2} w_0.
\end{align}

Let $\beta_h, \beta_q$ denote the degree of well-posedness of the primal and dual inverse problems that define $h_0$ and $q_0$ respectively, and let $\beta = \max\{\beta_h,\beta_q\}$, denote the largest of the two degrees, \ie, the degree of the most well-posed of the two inverse problems. 
Moreover, we will impose the inductive biases that minimum norm solutions to the inverse problems and regularized variants of them belong to the smaller function classes $\Hcal\subseteq \bar{\Hcal}, \Qcal\subseteq \bar{\Qcal}$.
Let $\delta_n$ denote the statistical complexity of appropriately defined function classes related to $\Hcal, \Qcal$ and function spaces $\Fcal, \Gcal$ that encompass their composition with the linear operators, \ie, $\bar{\Qcal} \supseteq \Fcal \supseteq \Tcal \circ (h_0 - \Hcal)=\{\Tcal (h_0 - h): h\in \Hcal\}$ and $\bar{\Hcal} \supseteq \Gcal \supseteq \Tcal^*\circ (q_0 - \Qcal) = \{\Tcal^* (q_0 - q): q\in \Qcal\}$. We will measure statistical complexity using the well-established notion of the critical radius, defined via the means of localized Rademacher complexities of the corresponding classes. 

Our main technical result is the development of an estimation algorithm for $\hat{h},\hat{q}$, such that the resulting parameter estimate $\hat{\theta}$ is root-$n$ asymptotically normal if:
\begin{align}
    \delta_n = o(n^{-\alpha}),\quad \alpha := \min\left( \frac{1 + \min(\beta,1) }{ 2 + 4 \min(\beta,1) }, \frac{1+\beta}{4 \beta} \right). 
\end{align}
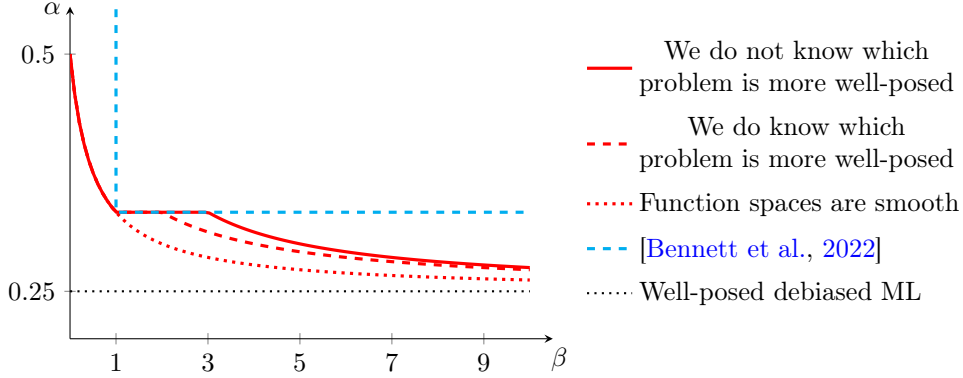
\begin{figure}[!t]
    \centering
    \begin{tikzpicture}
\begin{axis}[
    axis lines = left,
    xlabel = $\beta$,
    ylabel = {$\alpha$},
    legend style={at={(1.05,0.5)}, anchor=west, draw=none, row sep=0.15cm,cells={align=center}},
    ytick={0.25,0.5},
    xtick={1,3,...,9},
    ymin=0.2, ymax=0.55,
    xmin=0, xmax=10.5,
    no markers,
    legend cell align={left},
    width=8cm,
    height=6cm,
    xlabel near ticks,
    ylabel near ticks,
    xlabel style = {at={(axis description cs:1.05,0)},anchor=north east},
    ylabel style = {at={(axis description cs:0,0.95)},anchor=south east},
    ylabel style={rotate=-90}
]
\addplot[red, very thick, domain=0:10, samples=100] {min((1+min(x,1))/(2+4*min(x,1)),(1+x)/(4*x))};
\addlegendentry{We do not know which\\problem is more well-posed}
\addplot[red, dashed, very thick, domain=0:10, samples=100] {min((1+min(x,1))/(2+4*min(x,1)),(2+x)/(4+4*x))};
\addlegendentry{We do know which\\problem is more well-posed}
\addplot[red, dotted, very thick, domain=0:10, samples=100] {(1+x)/(2+4*x)};
\addlegendentry{Function spaces are smooth}
\addplot[cyan, dashed, very thick, domain=0:10, samples=100] coordinates {(0,1) (1,1) (1,1/3) (10,1/3)};
\addlegendentry{\citep{bennett2022inference}}
\addplot[black, dotted, thick, domain=0:10] {1/4};
\addlegendentry{Well-posed debiased ML}
\end{axis}
\end{tikzpicture}
    \caption{Required rate $\alpha$ in $\delta_n=o(n^{-\alpha})$ to achieve asymptotically normal inference, where $\delta_n$ is the critical radius of certain function classes. The x-axis corresponds to the degree of the source condition, and y-axis is a required rate for $\delta_n$ to achieve asymptotically normal inference.}
     \label{fig:required_rate}
\end{figure}

Notably, as $\beta\to \infty$, we get that we require $\delta_n = o(n^{-1/4})$, which is the requirement when $h_0$ is the solution to a regression, or equivalently a conditional expectation, problem \citep{chernozhukov2017double}. Moreover, for $\beta \in [1, 3]$, the requirement is $\delta_n=o(n^{-1/3})$, matching the prior work of \cite{bennett2022inference}, which applied only when $\beta=1$ and this degree of well-posedness was assumed to be satisfied by the inverse problem that defines $q_0$. Finally, even for severely ill-posed problems, where $\beta\geq \epsilon>0$ for some small $\epsilon$, the requirement is $\delta_n=o(n^{-1/2 + \kappa})$ for some $\kappa>0$, 
which is satisfied, for instance, for VC-subgraph classes.

If we know which of the two inverse problems is more well-posed, then we show that we can further weaken the requirement to:
\begin{align}
    \delta_n = o(n^{-\alpha}),\quad \alpha := \min\left( \frac{1 + \min(\beta,1) }{ 2 + 4 \min(\beta,1) }, \frac{2+\beta}{4 + 4 \beta} \right). 
\end{align}
Notably the loss due to not knowing which inverse problem is more well posed is minimal and primarily occurs when $\beta\in [2,3]$. Importantly, there is no loss for moderately ill-posed problems when $\beta\leq 1$, which is arguably the most practically relevant case. Moreover, if we make the further assumption that the function spaces are smooth enough that the projected $L_2$ norm is related to the projected $L_\infty$ norm, \ie, $\|T h\|_{L_\infty}=O\left(\|Th\|_{L_2}^\gamma\right)$, and similarly for $q$, then this loss can be further ameliorated. For instance, as $\gamma\to 1$ (a property satisfied for instance by Reproducing Kernel Hilbert Spaces with an exponential eigendecay, which is the case for the Gaussian kernel), then there is no loss to not knowing which inverse function is more well-posed and the requirement is always of the even weaker form:
\begin{align}
    \delta_n = o(n^{-\alpha}),\quad \alpha := \frac{1 + \beta }{ 2 + 4 \beta}
\end{align}

\vspace{.5em}\noindent\textbf{Main techniques.} Our result is enabled by several novel contributions of independent interest. Our first goal is the development of estimation algorithms for a function defined via a linear inverse problem that satisfies a known $\beta$-source condition. Such algorithms can be applied both for the primal and for the dual inverse problems. We describe our results here in the context of the primal inverse problem $\Tcal h=r_0$. 

Our overall goal is an estimation algorithm that produces an estimate $\hat{h}$, such that irrespective of whether the source condition holds, it guarantees a fast convergence rate of $O(\delta_n^2)$, with respect to the weak metric. Moreover, when the source condition does hold it also guarantees a statistical rate of $O(\delta_n^{\kappa(\beta_h)})$, for some exponent $\kappa(\beta_h)$ with respect to the strong metric. Note that if we manage to construct such an estimation algorithm, then if we apply this algorithm for both the primal and the dual inverse problems, then we will be guaranteeing that $\Bcal_n=O\left(\delta_n^{2 + \kappa(\beta)}\right)$, where $\beta=\max\{\beta_h,\beta_q\}$, which would then lead to the required condition on $\delta_n$.

Achieving a fast learning rate with respect to the weak metric has already been established in prior work of \cite{dikkala2020minimax}, with the use of an adversarial estimation strategy, which was further extended in \citep{bennett2022inference} to ensure consistency with respect to the strong metric, via the means of Tikhonov regularization, for the following estimator:
\begin{align}
    \hat{h} = \argmin_{h\in \Hcal} \max_{f\in \Fcal} \E_n\left[2\,m(W;f) - h(X)\, f(Z) - f(Z)^2\right] + \lambda \E_n\left[h(X)^2\right]
\end{align}
Moreover, prior work of \cite{LiaoLuofeng2020PENE} established strong metric rates for this Tikhonov regularized adversarial estimator, for smooth function classes with neural network function approximation. However, the rate in \citep{LiaoLuofeng2020PENE} is sub-optimal, does not adapt to the critical radius of arbitrary function spaces, and does not adapt to large values of $\beta$ (only to $\beta\leq 1$). Our first result is a fast strong metric rate result for the Tikhonov regularized estimator for $\beta \leq 1$. Our second result is a fast rate result for an iterated version of the Tikhonov regularized estimator, that uses prior iteration estimates to center the regularization appropriately and which adapts to large values of $\beta$, leading to strong metric rates of $O(\delta_n^2)$ as $\beta \to \infty$. These two results are novel in the literature on estimation of linear inverse problems under a source condition and are of independent interest.

Finally, we show that the desired simultaneous guarantee can be ensured via a constrained Tikhonov regularized adversarial estimator. In particular, our estimator first solves the un-regularized objective to find a solution that guarantees a fast weak metric rate. Subsequently, it solves the regularized objective within the sub-space of solutions that also achieve a small un-regularized risk, compared to the un-regularized solution. We show that this estimator simultaneously enjoys both guarantees: a weak metric rate of $\delta_n^2$, without requiring a $\beta$-source condition and a strong metric rate of $\delta_n^{2\max\left\{\frac{\min(\beta,1)}{1+\min(\beta,1)}, \frac{\beta-1}{\beta+1}\right\}}$, when the $\beta$-source condition holds. This theorem enables our main double robustness result.

%

%

%

%
%
%

%
%
%
%
%
%
%
%
%
%
%

%
%
%
%
%

%
%
%
%
%

%

%

\section{Related Work}\label{sec:related}

We first discuss related work specifically focusing on nonparametric IV regression functions, \ie, solutions to $\E[Y-h
(X)\mid Z]=0$. Later, we delve into related work that focuses on estimating functionals $\theta_0$ of nuisance functions that are defined as linear inverse problems, including nonparametric IV regression functions.

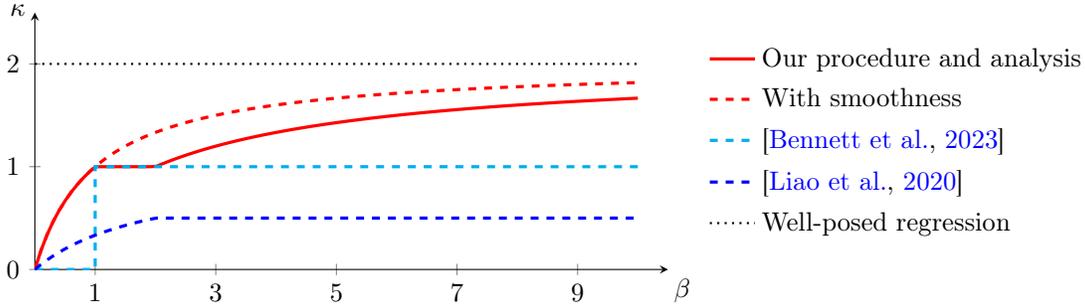
\begin{figure}[!t]
\centering
\begin{tikzpicture}
\begin{axis}[
    axis lines = left,
    xlabel = $\beta$,
    ylabel = {$\kappa$},
    legend style={at={(1.05,0.5)}, anchor=west, draw=none, row sep=0.1cm},
    ytick={0,1,2},
    xtick={1,3,...,9},
    ymin=0, ymax=2.5,
    xmin=0, xmax=10.5,
    no markers,
    legend cell align={left},
    width=10cm,
    height=5cm,
    xlabel near ticks,
    ylabel near ticks,
    xlabel style = {at={(axis description cs:1.05,0)},anchor=north east},
    ylabel style = {at={(axis description cs:0,0.95)},anchor=south east},
    ylabel style={rotate=-90}
]
\addplot[red, very thick, domain=0:10, samples=100] {2 * max(min(x,1)/(1+min(x,1)), x/(2+x))};
\addlegendentry{Our procedure and analysis}
\addplot[red, dashed, very thick, domain=0:10, samples=100] {2 * max(min(x,1)/(1+min(x,1)), x/(1+x))};
\addlegendentry{With smoothness}
\addplot[cyan, dashed, very thick, domain=0:10, samples=100] coordinates {(0,0) (1,0) (1,1) (10,1)};
\addlegendentry{\citep{bennett2023minimax}}
\addplot[blue, dashed, very thick, domain=0:10, samples=100] {min(x,2)/(2+min(x,2))};
\addlegendentry{\citep{LiaoLuofeng2020PENE}}
\addplot[black, dotted, thick, domain=0:10] {2};
\addlegendentry{Well-posed regression}
\end{axis}
\end{tikzpicture}
\caption{Exponent $\kappa$ in the rate $\|\hat{h}-h_0\|^2_2\sim \delta_n^{\kappa}$ as a function of the degree $\beta$ of the source condition in our work and related works \citep{LiaoLuofeng2020PENE,bennett2023minimax}. Here, $\delta_n$ represents a critical radius of a specific function class.}
\label{fig:comparison}
\end{figure}

\noindent\textbf{Nonparametric IV regression.}
Instrumental variable estimation has garnered significant interest as a particular subset within the realm of inverse problems, as exemplified by the comprehensive investigations \citep{carrasco2007linear,cavalier2011inverse, newey2013nonparametric}.
Nonparametric instrumental variable estimation encounters considerable challenges due to its ill-posed nature, even when the operator $\mathcal{T}$ and response $r_0$ are known. The ill-posedness is often characterized by the presence of one or more of the following aspects: (1) the absence of solutions, (2) the existence of multiple solutions, and (3) the discontinuity of the pseudo-inverse of $\Tcal$. To tackle these challenges, various regularization techniques have been proposed, such as imposing compactness on the solution space \citep{newey2003instrumental} and employing Tikhonov regularization \citep{carrasco2007linear}. In practical settings where $\mathcal{T}$ and $r_0$ are unknown, a range of estimators has been proposed in the literature, including series-based estimators \citep{ai2003efficient,hall2005nonparametric,blundell2007semi,chen2011rate,darolles2011nonparametric,chen2012estimation,florens2011identification,chen2021robust}, kernel-based estimators \citep{hall2005nonparametric,horowitz2007asymptotic}, RKHS-based estimators \citep{singh2019kernel,muandet2020dual,bennett2020variational} and high-dimensional linear estimators under sparsity \citep{gautier2011high,gautier2018high,gautier2022fast}. 

Recently, there has been an increasing interest in applying general function approximation techniques, such as deep neural networks and random forests, to instrumental variable problems in a unified manner \citep{dikkala2020minimax,lewis2018adversarial,bennett2019deep,zhang2020maximum,dikkala2020minimax,bennett2020variational}. However, the guarantees of most current methods remain unclear when solutions are not unique. Notable exceptions include the works of \citep{LiaoLuofeng2020PENE,bennett2023minimax},which provide finite-sample convergence rate guarantees even when solutions may not be unique.

In the closely related work of \citep{LiaoLuofeng2020PENE}, they establish $L_2$ convergence by connecting minimax optimization with Tikhonov regularization under the assumption of the source condition. Specifically, when the number of iterations is limited to one in our method, their method coincides with ours in solving inverse problems. However, our paper presents two important contributions beyond their work. Firstly, we introduce a new iterative procedure that achieves a fast convergence rate under high-order source conditions with $\beta \geq 2$. Secondly, even when the number of iterations is limited to one, our paper achieves a faster convergence rate than theirs due to improved analysis. We note that their paper also makes its own contribution by specifically considering scenarios where function classes are neural networks and providing a theoretical analysis that takes into account the optimization procedure.

Another closely related work \citep{bennett2023minimax} proposes a method that treats IV regression as a constrained optimization problem. While their method does not require the ``closedness assumption,''  which implies smoothness of the operator $\mathcal{T}$, compared to our work, the guaranteed convergence rate in their paper is slower than ours because their method cannot effectively exploit potentially high-order source conditions with $\beta \geq 2$. Additionally, their method does not provide any guarantees under the source condition with $\beta \geq 2$.

We note that there are a number of alternative approaches for integrating machine learning into instrumental variable estimation \citep{hartford2017deep,yu2018deep,xu2020learning,liu2020deep,kato2021learning,lu2021machine}. However, to the best of our knowledge, these approaches do not offer an $L_2$ convergence rate guarantee in the absence of the assumption of uniqueness.

%
%


\section{Problem Statement and Preliminaries}

We are given access to a set of independent and identically distributed observations $\{X_i,Z_i,W_i\}_{i=1}^n$ drawn from the distribution of the random variables $X, Z, W$. Our ultimate objective is to estimate the parameter $\theta_0 = \E[\tilde{m}(W; h_0)]$, as presented in \eqref{eqn:functional}, and
construct a valid confidence interval around it. 

Throughout this work, we assume there exists a solution to the linear inverse problem given by \cref{eqn:primal,eq:Reisz}. 
(Numbered assumptions are assumed to hold throughout the paper.) 
\begin{assumption}[Primal solution exists]\label{assum:existence}
   We have  $r_0 \in \mathcal{R}(\Tcal)$, where $\mathcal{R}(\Tcal):= \{\Tcal h: h\in \bar{\Hcal}\}$. 
\end{assumption}
Moreover, it will be convenient to express the constraints that identify $h_0$ in a combined manner, which can be derived by simple algebra from \cref{eqn:primal,eq:Reisz} 
and the properties of mean-squared projections onto closed linear spaces:\footnote{For any $q\in \bar{\Qcal}$, $r\in L_2(Z)$ it follows from properties of projections on closed linear spaces that $\langle r, q\rangle_{L_2(Z)} = \langle \Pi_{\bar{\Qcal}}r, q\rangle_{L_2(Z)}$. Hence: $\E[h_0(X) q(Z)] = \E[\E[h_0(X)\mid Z]\, q(Z)] = \E[(\Tcal h_0)(Z)\, q(Z)] = \E[r_0(Z) q(Z)] = \E[m(W;q)]$.}
\begin{align}\label{eqn:moment-restrictions}
    \forall q\in \bar{\Qcal}: \E[m(W; q) - h_0(X)\, q(Z)] = 0.
\end{align}

In general, the solution mentioned above may not be unique. For this reason we aim to estimate the least $L_2$-norm solution, \ie, we define $h_0$ as 
\begin{align}
    h_0 = \argmin_{h: \Tcal h=r_0} \|h\|_{L_2}
\end{align}
This least norm solution always exists uniquely, as shown in Lemma 1 of \cite{bennett2023minimax}, and is frequently employed as a target in the literature when solutions are not unique \citep{santos2011instrumental,florens2011identification}. Notably, as we elaborate in the next section, for many linear functionals, the specific choice of the solution to the linear inverse problem is irrelevant and all solutions lead to the same value for the parameter $\theta_0$.

Our setting encompasses many well-studied problems in econometrics and statistics. We present here two illustrative examples. 
Other examples that fall in our framework include partially linear IV and proximal causal inference models,
missing-not-at-random data with shadow variables, and offline policy evaluation in partially observable MDPs.

\begin{example}[Proximal Causal Inference \citep{cui2020semiparametric}]
In proximal causal inference, we aim to estimate the average treatment effect $\E[Y(1)-Y(0)]$ in the presence of unmeasured confounders. Given proxy variables $Z,Q$ that satisfy certain conditions in \citep{cui2020semiparametric}, and observed treatment $D$, the target is expressed as 
\begin{align}
\theta_0 = \E[h_0(X,Q, 1) - h_0(X,Q,0)],\quad 
\E[Y - h_0(X,Q,D) \mid X, Z, D] = 0. 
\end{align}
A function $h_0$ is often referred to as the outcome bridge function.
In this case, $\tilde{m}(X,Q;h) = h(X,Q,1) - h(X,Q,0)$, $m(W;q)=Y\, q(X,Z,D)$ and the operator $\Tcal$ maps any function $h(X,Q,D) \in L_2(X,Q,D)$ to $\E[h(X,Q,D)|X,Z,D] \in L_2(X,Z,D)$.
Furthermore, the Riesz representer of $\tilde m(W;h)$ is 
\begin{align}
    \alpha_0(X,Q,D) = \frac{D}{P(D=1 | X,Q)} - \frac{1-D}{P(D=0 | X,Q)}
\end{align}
since $\theta_ 0 = \E[\alpha_0(X,Q,D) h_0(X,Q,D)]$. 
\end{example}

\begin{example}[Average Price Elasticity] In many demand estimation problems, we want to use cost-shifters $Z$ (variables that affect the cost of a product, and hence the price, but not the demand), so as to estimate the demand $Y$ of some product as a function of price $D$, conditional on market characteristics $X$. In this case the problem typically boils down to a non-parametric instrumental variable regression, where $(Z,X)$ is the instrument, $(D,X)$ is the treatment, $Y$ is the outcome, and $h_0$ is the demand curve. Since learning the whole demand curve can be statistically challenging, many times it might suffice for policy purposes to simply learn the average price elasticity of demand. One way to formalize the average price elasticity is through the average derivative of the demand curve:
\begin{align}
    \theta_0 :=~& \E[\partial_D h_0(X, D)] &
    \E[Y - h_0(X, D)\mid Z, D] =~& 0
\end{align}
In this case, $\tilde{m}(W;h) = \partial_D h_0(X, D)$ and $m(W;q)=Y\, q(Z)$. Furthermore the Riesz representer of $\tilde{m}(W;h)$ is $a_0(X, D) = \partial_{D} \log(p(D\mid X))$, where $p(D\mid X)$ is the conditional density of the price $D$ given $X$.
\end{example}

\vspace{.5em}\noindent\textbf{Notation and preliminary definitions.} Before delving into the main technical exposition we need to introduce some technical notation. \emph{Throughout the paper, whenever we use a generic norm of a function $\|h\|$, we will be referring to the $L_2$-norm with respect to the distribution of the input of the function, \ie,}
\begin{align}
\forall h\in L_2(X): \|h\| =~& \|h\|_{L_2} = \sqrt{\E[h(X)^2]} &
\forall f\in L_2(Z): \|f\| =~& \|f\|_{L_2} = \sqrt{\E[f(Z)^2]}
\end{align}
We will also be using the shorthand notation $\E_n[\cdot]$ for the empirical average, \eg, $\E_n[X] = \frac{1}{n}\sum_{i=1}^n X_i$. For any set $A$, we denote the closure of $A$ by $\clos(A)$.
For any function space $\Fcal$, containing functions that are uniformly and absolutely bounded by $1$, we will be using the critical radius as the measure of statistical complexity (c.f. \citep{wainwright2019high} for a more detailed exposition). To define the critical radius we first define the localized Rademacher complexity:
\begin{align}
    \Rcal(\Fcal, \delta) = \frac1{2^n}\sum_{\epsilon\in\{-1,1\}^n}\E\left[\sup_{f\in \Fcal: \|f\|\leq \delta} \frac{1}{n}\sum_{i=1}^n \epsilon_i f(X_i)\right].
\end{align}
The star hull of a function space is defined as $\shull(\Fcal)=\{\gamma f: f\in \Fcal, \gamma\in [0,1]\}$. The critical radius $\delta_n$ of $\shull(\Fcal)$ 
is the smallest positive solution to the inequality:
\begin{align}
    \Rcal(\shull(\Fcal), \delta) \leq \delta^2
\end{align}
Throughout we will be assuming a uniform absolute bound of $1$ for all random variables and functions. This can be lifted to any finite bound $b$ by rescaling.
\begin{assumption}[Uniform absolute bound]\label{ass:ubound}
    All random variables and random functions are uniformly absolutely bounded by $1$.
\end{assumption}

\section{Debiased Machine Learning Inference for Functionals}

We begin by noting that any linear functional $\tilde{m}$ of $h_0$ admits a doubly robust representation:
\begin{align}\label{eqn:doubly_robust}
    \theta_0 = \theta(h_0,q_0),\quad \theta(h,q) := \E[\tilde{m}(W;h) + m(W;q) - q(Z)\, h(X)],
\end{align}
where $q_0$ solves a dual inverse problem of the same nature as $h_0$, but with respect to functional $\tilde{m}$ instead of $m$. More specifically:
\begin{align}\label{eqn:IV-q}
    \Tcal^* q_0 =~& a_0 &
    \forall h\in \Hcal: \E[\tilde{m}(W;h)] = \E[a_0(X)\, h(X)],
\end{align}
where $\Tcal^*:\bar{\Qcal}\mapsto \bar{\Hcal}$ is the adjoint operator to $\Tcal:\bar{\Hcal}\mapsto \bar{\Qcal}$. The adjoint operator is defined as:
\begin{align}
    \Tcal^* q = \Pi_{\bar{\Hcal}} \E[q(Z)\mid X=\cdot] = \argmin_{h\in \bar{\Hcal}} \E\left[\left(h(X) - \E[q(Z)\mid X]\right)^2\right].
\end{align}
If $\bar{\Hcal}= L_2(\mcX)$, then $\Tcal^*$ simplifies to a conditional expectation $T^*q = \E[q(Z)\mid X=\cdot]$. Note that the conditions that define $q_0$ can be expressed in a combined manner, similar to Equation~\eqref{eqn:moment-restrictions}:
\begin{align}\label{eqn:dual-combined}
    \forall h\in \bar{\Hcal}: \E[\tilde{m}(W;h) - q_0(Z)\, h(X)] = 0.
\end{align}

We assume the existence of a solution $q_0\in \bar{\Qcal}$ to the inverse problem defined in Equation~\eqref{eqn:IV-q}.
If the inverse problem in Equation~\eqref{eqn:IV-q} has multiple solutions, then we will again denote by $q_0$ the minimum $L_2$-norm solution to the inverse problem.
\begin{assumption}[Dual solution exists]
    We have $a_0  \in \Rcal(\Tcal^*)$, with $\mathcal{R}(\Tcal^*):= \{\Tcal^* q: q\in \bar{\Qcal}\}$.  
\end{assumption}

The reader might wonder why we need this assumption to estimate $\theta_0$ on top of the existence of solutions in the primal inverse problem. As was shown in \citep{severini2012efficiency}, when 
$\bar{\Hcal}=L_2(X)$ and $\bar{\Qcal}=L_2(Z)$,
the assumption
$a_0  \in \Rcal(\Tcal^*)$ is a necessary condition for the $\sqrt{n}$-estimability of the parameter $\theta_0$. In this sense the assumption is unavoidable for root-$n$ asymptotic normality. 

Now, we are ready to state the mixed bias property of \eqref{eqn:doubly_robust}. 
\begin{lemma}\label{lem:doubly_robust}
    $\theta(h,q)$ defined in \cref{eqn:doubly_robust} satisfies the mixed bias (or, double robustness) property 
    that for all $h\in \bar{\Hcal}$ and $q\in \bar{\Qcal}$:
    \begin{align}
        \theta(h,q) - \theta_0 = \E[(q_0(Z) - q(Z))\, (h(X) - h_0(X))]
    \end{align}
\end{lemma}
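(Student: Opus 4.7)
My plan is to expand $\theta(h,q) - \theta_0$ algebraically and then eliminate two of the three resulting expectations using the combined moment characterizations of $h_0$ and $q_0$. Using the definition of $\theta(h,q)$ in \eqref{eqn:doubly_robust}, the definition $\theta_0 = \E[\tilde{m}(W;h_0)]$, and linearity of $\tilde m(W;\cdot)$ in its second argument, the first step is to write
\[
\theta(h,q) - \theta_0 = \E[\tilde{m}(W; h - h_0)] + \E[m(W;q)] - \E[q(Z)\, h(X)].
\]

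Next, I would apply the two combined moment restrictions. The dual identity \eqref{eqn:dual-combined}, evaluated at the test function $h - h_0 \in \bar{\Hcal}$, yields $\E[\tilde{m}(W; h - h_0)] = \E[q_0(Z)(h(X) - h_0(X))]$. The primal identity \eqref{eqn:moment-restrictions}, evaluated at the test function $q \in \bar{\Qcal}$, yields $\E[m(W;q)] = \E[h_0(X)\, q(Z)]$. Substituting both into the previous display and collecting the two $q$-terms gives
\[
\theta(h,q) - \theta_0 = \E[q_0(Z)(h(X) - h_0(X))] - \E[q(Z)(h(X) - h_0(X))] = \E[(q_0(Z) - q(Z))(h(X) - h_0(X))],
\]
which is the claimed mixed-bias identity.

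I do not anticipate any genuine obstacle, since the argument is purely algebraic once the two combined moment conditions are in hand. The only subtlety worth flagging is that $\bar{\Hcal}$ is a closed linear subspace, so $h - h_0 \in \bar{\Hcal}$, which is what legitimizes applying the dual restriction at the test function $h - h_0$; symmetrically, $q \in \bar{\Qcal}$ is required to invoke the primal restriction. Nonuniqueness of $h_0$ or $q_0$ is not a concern, because \eqref{eqn:moment-restrictions} and \eqref{eqn:dual-combined} hold for every solution of the respective inverse problems, so one may work with the chosen minimum-norm representatives without loss of generality.
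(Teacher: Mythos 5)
Your proof is correct and takes essentially the same route as the paper: both expand $\theta(h,q)-\theta_0$ by linearity of $\tilde m$ and $m$, then invoke the two combined moment restrictions \eqref{eqn:dual-combined} and \eqref{eqn:moment-restrictions} (the paper applies the primal identity at $q-q_0$, you apply it at $q$ directly, a cosmetic difference). Your remark that $h-h_0\in\bar\Hcal$ is what legitimizes the dual identity is a worthwhile point that the paper leaves implicit.
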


Based on this crucial lemma, we can then apply the general machinery of Neyman orthogonality and debiased machine learning \citep{chernozhukov2017double} to arrive at a corollary that shows how and when one can deduce root-$n$ consistency and asymptotic normality of the estimate $\hat{\theta}$ of $\theta_0$ presented in Equation~\eqref{eqn:estimator} (see also \citep{chernozhukov2023simple} for a finite sample version in a slightly simpler setting):
\begin{corollary}[DML Inference for Functionals of Inverse Problems]\label{cor:functionals-endo}
    Assume that estimates $\hat{h},\hat{q}$ of $h_0,q_0$ are estimated on a separate sample with
    \begin{align}
        \sqrt{n} \E\left[(\hat{q}(Z) - q_0(Z))\, (\hat{h}(X) - h_0(X))\right] = o_p(1),
    \end{align}
    $\|\hat{h}-h_0\|=o_p(1)$, and $\|\hat{q}-q_0\|=o_p(1)$.
Assume $\hat{\theta}$ is constructed as in \cref{eqn:estimator}
    and that the following mean-squared-continuity property is satisfied for $\tilde{m}$:
    \begin{align}
        \E[(\tilde{m}(W;\hat{h})-\tilde{m}(W;h_0))^2] = O\left(\|\hat{h}-h_0\|^{\iota}\right)\quad\text{for some $\iota>0$}.
    \end{align}
     Then:
    \begin{align}
        \sqrt{n} (\hat{\theta} - \theta_0) = \frac{1}{\sqrt{n}} \sum_{i=1}^n \rho_0(W_i) + o_p(1)
    \end{align}
    with $\rho_0(W) = \tilde{m}(W;h_0) + m(W;q_0) - q_0(Z)\,h_0(X) - \theta_0$. Thus, the random variable $\sqrt{n}\,(\hat{\theta}-\theta_0)$ converges in distribution to a normal $N\left(0, \E[\rho_0(W)^2]\right)$. Moreover, if we let $\hat{\sigma} = \E_n\left[\left(\tilde{m}(W;\hat{h}) + m(W;\hat{q}) - \hat{q}(Z)\, \hat{h}(X) - \hat{\theta}\right)^2\right]$,
     we can construct an asymptotically valid 95\% confidence interval using:
    \begin{align}
        \theta_0 \in [\hat{\theta} \pm 1.96\cdot \hat{\sigma}/\sqrt{n}]
    \end{align}
\end{corollary}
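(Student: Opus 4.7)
The plan is to reduce $\sqrt{n}(\hat\theta-\theta_0)$ to a sample average of the influence function $\rho_0$, plus a bias term killed by the mixed-bias identity of \cref{lem:doubly_robust}, plus an empirical-process remainder killed by cross-fitting. With the shorthand $\psi(W;h,q) := \tilde m(W;h) + m(W;q) - q(Z)h(X)$, so that $\hat\theta = \E_n[\psi(W;\hat h,\hat q)]$ and $\theta(h,q) = \E[\psi(W;h,q)]$, I would add and subtract $\E_n[\psi(W;h_0,q_0)]$ and $\E[\psi(W;\hat h,\hat q)]$ to decompose
\begin{align*}
\sqrt{n}(\hat\theta - \theta_0) = \underbrace{\tfrac{1}{\sqrt{n}}\sum_{i=1}^n \rho_0(W_i)}_{(A)} + \underbrace{\sqrt{n}\,(\E_n - \E)[\psi(W;\hat h,\hat q) - \psi(W;h_0,q_0)]}_{(B)} + \underbrace{\sqrt{n}\,(\theta(\hat h,\hat q) - \theta_0)}_{(C)},
\end{align*}
using $\E[\psi(W;h_0,q_0)] = \theta(h_0,q_0) = \theta_0$, which is the $h=h_0,q=q_0$ specialization of \cref{lem:doubly_robust}.

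Term $(C)$ is the easy piece: \cref{lem:doubly_robust} gives $\theta(\hat h,\hat q) - \theta_0 = \E[(q_0(Z)-\hat q(Z))(\hat h(X)-h_0(X))]$, and the stated product-of-errors hypothesis immediately delivers $(C) = o_p(1)$. Term $(A)$ converges in distribution to $N(0,\E[\rho_0^2])$ by the ordinary i.i.d.\ CLT, with integrability of $\rho_0$ supplied by \cref{ass:ubound}.

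The main obstacle is $(B)$, the stochastic equicontinuity remainder. The plan is to invoke cross-fitting: since $\hat h,\hat q$ are trained on a sample independent of $\{W_i\}_{i=1}^n$, conditionally on the nuisances the summands of $(B)$ are i.i.d.\ with mean zero, and Chebyshev gives
\begin{align*}
|(B)|^2 = O_p\bigl(\E\bigl[(\psi(W;\hat h,\hat q)-\psi(W;h_0,q_0))^2 \,\big|\, \hat h,\hat q\bigr]\bigr).
\end{align*}
Linearity splits this difference into $\tilde m(W;\hat h-h_0)$, $m(W;\hat q-q_0)$, and $\hat q(Z)\hat h(X) - q_0(Z) h_0(X) = (\hat q-q_0)(Z)\hat h(X) + q_0(Z)(\hat h-h_0)(X)$. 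The first piece contributes $O_p(\|\hat h-h_0\|^{\iota})$ by the assumed mean-squared continuity; the product piece contributes $O_p(\|\hat q-q_0\|^2 + \|\hat h-h_0\|^2)$ by \cref{ass:ubound}; and the middle $m$-piece admits an analogous mean-squared-continuity bound directly from its linearity and \cref{ass:ubound}. Each summand is $o_p(1)$ by $L_2$-consistency of the nuisances, so $(B) = o_p(1)$.

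Combining the three pieces yields the asymptotic linear expansion $\sqrt{n}(\hat\theta-\theta_0) = n^{-1/2}\sum_i\rho_0(W_i) + o_p(1)$ and hence the normal limit. For the confidence interval, the same template applied to the squared summands yields $\hat\sigma \to_p \E[\rho_0^2]$: I would decompose $\hat\sigma - \E[\rho_0^2]$ into $(\E_n-\E)[\rho_0(W)^2]$, a cross-fit empirical-process remainder on the squared nuisance differences (controlled exactly as for $(B)$), and a negligible term coming from $\hat\theta-\theta_0$. Slutsky's lemma then closes the argument and delivers the claimed $95\%$ coverage.
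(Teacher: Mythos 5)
The paper does not supply its own proof of \cref{cor:functionals-endo}; it appeals to ``the general machinery of Neyman orthogonality and debiased machine learning'' of \citet{chernozhukov2017double}. Your proof is precisely that machinery instantiated here: the three-term decomposition into an i.i.d.\ CLT term $(A)$, a cross-fit stochastic-equicontinuity remainder $(B)$, and the mixed-bias term $(C)$ from \cref{lem:doubly_robust}, with the Chebyshev-conditional-on-the-nuisances argument for $(B)$. So the approach matches what the paper intends, and the structure is correct.

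One spot is under-justified. In controlling $(B)$ you split $\psi(W;\hat h,\hat q)-\psi(W;h_0,q_0)$ into three pieces, and for the $m$-piece you assert that $\E[m(W;\hat q - q_0)^2]=o_p(1)$ follows ``directly from its linearity and \cref{ass:ubound}.'' That is not automatic: linearity plus a uniform sup bound on $m(W;q)$ over the class does not by itself yield $\E[m(W;f)^2]\lesssim \|f\|^2$, and without some such mean-squared-continuity you cannot convert $\|\hat q-q_0\|=o_p(1)$ into $\E[m(W;\hat q-q_0)^2]=o_p(1)$. What is really needed is the condition (d) of \cref{thm:adv-l2}, $\E[m(W;f)^2]=O(\|f\|^2)$, which the corollary's statement does not list but which is implicitly in force in the surrounding development (it holds trivially in the prototypical case $m(W;q)=Yq(Z)$). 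You should invoke that condition explicitly rather than attribute the bound to linearity and \cref{ass:ubound}. With that repair, your argument (including the brief outline for $\hat\sigma\to_p\E[\rho_0^2]$ and Slutsky) is sound.
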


\begin{remark}[Uniqueness of Parameter under Non-Uniqueness of Functions]
    We note that when \cref{eqn:IV-q} admits a solution $q_0$, then the parameter $\theta_0$ is uniquely identified, even when the primary inverse problem in Equation~\eqref{eqn:moment-restrictions} admits many solutions. Let $h_1, h_2$ be any two solutions to \cref{eqn:moment-restrictions} and $q_0$ any solution to \cref{eqn:IV-q} and let $\theta_1$ and $\theta_2$ the parameters that correspond to these two solutions. Then note that:
    \begin{align}
        \theta_1 := \E[\tilde{m}(W; h_1)] =~& \E[q_0(Z)h_1(X)] \tag{by Equation~\eqref{eqn:dual-combined} for $h=h_1$}\\
        =~& \E[m(W;q_0)] \tag{$h_1$ satisfies Equation~\eqref{eqn:moment-restrictions} for $q=q_0$}\\
        =~& \E[q_0(Z) h_2(Z)] \tag{$h_2$ satisfies Equation~\eqref{eqn:moment-restrictions} for $q=q_0$}\\
        =~& \E[\tilde{m}(W; h_2)] =: \theta_2  \tag{by Equation~\eqref{eqn:dual-combined} for $h=h_2$}
    \end{align}
    Thus the linear functional parameter $\theta_0$ is the same, whether we calculate it using $h_1$ or $h_2$. Thus for functional estimation, it does not matter that our estimate is converging to the minimum norm solution $h_0$. Under the existence of a solution $q_0$ to \cref{eqn:IV-q}, any other solution to the inverse problem in \cref{eqn:moment-restrictions} would have identified the exact same parameter $\theta_0$.
\end{remark}

If we want to apply the latter corollary, it remains to show how we can estimate $\hat{h},\hat{q}$ in a manner such that:
\begin{align}\label{eqn:mixed-bias-req}
    \sqrt{n} \E\left[(\hat{q}(Z) - q_0(Z))\, (\hat{h}(X) - h_0(X))\right] = o_p(1)
\end{align}
and such that $\|\hat{h}-h_0\|=o_p(1)$ and $\|\hat{q}-h_0\|=o_p(1)$. 
By the definition of $\Tcal$, applying a tower law, a Cauchy-Schwarz inequality and orthogonality of projections on linear spaces, we have
\begin{align}
    \E\left[(\hat{q}(Z) - q_0(Z))\, (\hat{h}(X) - h_0(X)) \right] 
    &\leq \|\hat{q}-q_0\|\, \|\Tcal(\hat{h}-h_0)\|\,.
\end{align}
Similarly, by the definition of $\Tcal^*$ and the orthogonality of projections, we also have that:
\begin{align}
\E\left[(\hat{q}(Z) - q_0(Z))\, (\hat{h}(X) - h_0(X)) \right] 
&\leq \|\Tcal^*(\hat{q}-q_0)\|\, \|\hat{h}-h_0\|\,,
\end{align}
and therefore a sufficient condition for Equation~\eqref{eqn:mixed-bias-req} to hold is:
\begin{align}
    \sqrt{n} \min\left\{\|\hat{q}-q_0\|\, \|\Tcal(\hat{h}-h_0)\|, \|\Tcal^*(\hat{q}-q_0)\|\, \|\hat{h}-h_0\|\right\}  = o_p(1) \,.
\end{align}

Note that both $\hat{h}$ and $\hat{q}$ are instances of the same estimation problem. In particular, both statistical problems can be defined as finding a function $h\in \bar{\Hcal}$ that satisfies a linear inverse problem $\Tcal h=r_0$, where $r_0$ is the Riesz reprsenter of a linear functional and $\Tcal$ is a projected conditional expectation operator. Thus we will describe how one can solve the primal problem of estimating $\hat{h}$ and the exact same analysis can be applied to the dual problem where the operator $\Tcal$ is replaced by the dual $\Tcal^*$, the function space $\bar{\Hcal}$ is replaced by $\bar{\Qcal}$, and the linear functional $m(W;q)$ is replaced by the linear functional $\tilde{m}(W;h)$.

We will provide estimation rates as a function of inductive biases further imposed on the solutions $h_0$ and $q_0$. In particular, we will consider the following realizability assumptions:
\begin{assumption}[Realizability and Closedness]\label{ass:realizable}
    For some known convex function spaces $\Hcal\subseteq \bar{\Hcal}$ and $\Qcal\subseteq \bar{\Qcal}$, we assume $h_0 \in \Hcal$ and $q_0\in \Qcal$. Moreover, for some known function spaces $\Fcal\subseteq \bar{\Qcal}$ and $\Gcal\subseteq \bar{\Hcal}$, we have $\Tcal \circ (h_0 - \Hcal) :=\{\Tcal (h_0 - h): h\in \Hcal\}\subseteq \Fcal$ and $\Tcal^* \circ (q_0 - \Qcal) := \{\Tcal^* (q_0 - q): q\in \Qcal\}\subseteq \Gcal$.
\end{assumption}
The function classes $\Hcal, \Qcal, \Fcal, \Gcal$, are meant to be classes of bounded statistical complexity, such as for instance norm-constrained high-dimensional linear models, Reproducing Kernel Hilbert Spaces, or neural network classes. This assumption can be relaxed to allow for approximation errors in all inclusion statements, at the expense of additive such error bounds in all our theorems. We omit such an extension for simplicity of exposition.

\section{Linear Inverse Problems and the Source Condition}

Given access to $n$ samples $\{X_i, Z_i, W_i\}_{i=1}^n$, our objective is to find an estimator $\hat h$ that converges to the minimum norm solution $h_0$ to Equation~\eqref{eqn:primal}. Our goal is to provide estimation rates with respect to both the strong and weak metrics defined below:
\begin{align}
    \|\hat{h}-h_0\|^2 =~& \E[(\hat{h}(X) - h_0(X))^2] \tag{strong metric}\\
    \|\Tcal(\hat{h}-h_0)\|^2 =~& 
    \E[\Pi_{\bar{\Qcal}}\E[\hat{h}(X) - h_0(X)\mid Z]^2]
    \tag{weak metric}
\end{align}
The weak metric is ``weak'' in the sense that it is smaller than the strong metric and is a pseudo metric. Specifically, $\|\Tcal(\hat{h}-h_0)\|^2= \|\Tcal(\hat{h}-h'_0)\|^2$ whenever $Th_0'=r_0$ even if $h'_0\neq h_0$. 

Within the literature on nonparametric instrumental variable (IV) regression, a commonly employed approach focuses on optimizing empirical versions of the weak metric criterion, also known as the minimum distance criterion. This optimization process is carried out over simple hypothesis spaces denoted as $\Hcal_n$ of increasing complexity. These hypothesis spaces, often referred to as sieves, aim to approximate the function $h_0$ while enforcing some uniform control over the ratio between the strong and weak metrics across the entire class, also known as the measure of ill-posedness of the inverse problems:
\begin{align}
    \sup_{h \in \Hcal_n}\frac{\|(h_0-h)\|^2 }{\|\Tcal(h_0-h)\|^2}. 
\end{align}
This measure, as indicated in previous works \citep{dikkala2020minimax,chen2012estimation}, reflects the degree of ill-posedness in inverse problems. However, in cases where unique identification is absent this measure can easily diverge to infinity. If the sieves $\Hcal_n$ are taken so that eventually they uniformly approximate the function space $\Hcal$, then note that there will exist a different solution $h_0'\in \Hcal$ to the inverse problem, that is $\epsilon$ approximated by a function $h_{\epsilon}$ in $\Hcal_n$. Thus we will have that $\|h_0-h_0'\| \to \gamma > 0$ and $\|\Tcal(h_0-h_{\epsilon})\|\leq \epsilon + \|\Tcal(h_0-h_0')\| = \epsilon\to 0$.


An alternative approach that remains effective even in the absence of the uniqueness assumption involves imposing the $\beta$-source condition on the minimum norm solution $h_0$. Under the $\beta$-source condition, it becomes possible to achieve desirable rates on the strong metric by introducing an additional $L_2$ regularization penalty to the minimum distance criterion. This regularization technique is also known as Tikhonov regularization \citep{Carrasco2007}. In this study, we adopt the latter approach.

\vspace{.5em}\noindent\textbf{The Source Condition.} We begin by introducing the $\beta$-source condition, which is commonly used in the literature on inverse problems \citep{Carrasco2007} and captures mathematically how strongly the function $h_0$ is identified by the data that we observed.

\begin{definition}[$\beta$-source condition]\label{ass:source-primal}
There exists $w_0 \in \bar{\Hcal}$ such that $h_0$ satisfies $$h_0 = (\Tcal^* \Tcal)^{\beta/2} w_0.$$
\end{definition}


Essentially, this assumption is claiming 
\begin{align}
    h_0 \in \Rcal( (\Tcal^* \Tcal)^{\beta/2}),\,\,\text{or equivalently,}\,\,  r_0 \in \Rcal( (\Tcal \Tcal^*)^{(\beta+1)/2}). 
\end{align}
Intuitively, when the parameter $\beta$ is large, the function $h_0$ exhibits greater smoothness, in the sense  that the $L_2$-inner product of $h_0$ with eigenfunctions that have smaller eigenvalues relative to the operator $\Tcal$ tend to decay faster. A more concrete interpretation of the assumption when
the operator $\Tcal$ is compact and admits a singular value decomposition is given in \cref{sec: intro}.

\section{Tikhonov Regularized Adversarial Estimator}


Inspired by the combined moment constraints in Equation~\eqref{eqn:moment-restrictions} (and similarly Equation~\eqref{eqn:dual-combined} for $q_0$), we will consider an adversarial population criterion for the estimation of $h_0$:
\begin{align}
    L(h) := \max_{f\in \Fcal} \E[2\,(m(W;f) - h(X)\, f(Z)) - f(Z)^2]
\end{align}
where $\Fcal \subseteq \bar{\Qcal}$, as defined in Assumption~\ref{ass:realizable}. By the definition of the function $h_0$, we have $\E[m(W;f)]=\E[h_0(X)\, f(Z)]$. Hence, the above population criterion is equivalent to:
\begin{align}
    L(h) = \max_{f\in \Fcal} \E[2\,(h_0(X) - h(X))\, f(Z)) - f(Z)^2].
\end{align}
Moreover, by an application of the tower law of expectations and the properties of projections onto the closed linear space $\bar{\Qcal}$ and since $\Fcal\subseteq \bar{\Qcal}$, the criterion is also equivalent to:
\begin{align}
    L(h) =~& \max_{f\in \Fcal} 2\langle \E[h_0(X) - h(X)\mid Z=\cdot], f\rangle_{L_2(Z)} - \|f\|^2 \tag{tower law}\\
    =~& \max_{f\in \Fcal} 2\langle \Pi_{\bar{\Qcal}}\E[h_0(X) - h(X)\mid Z=\cdot], f\rangle_{L_2(Z)} - \|f\|^2 \tag{$f\in \bar{\Qcal}$ and $\bar{\Qcal}$ closed linear}\\
    =~& \max_{f\in \Fcal} 2\langle \Tcal (h_0 - h), f\rangle_{L_2(Z)} - \|f\|^2 \tag{definition of $\Tcal$}
\end{align}
Since $\Fcal$ contains functions of the form $\Tcal (h_0 - h)$ for any $h\in \Hcal$ (with $\Hcal$ as in Assumption~\ref{ass:realizable}):
\begin{align}
    f_h := \argmax_{f\in \Fcal} \E[2(m(W;f) - h(X)\, f(Z)) - f(Z)^2] = \Tcal (h_0 - h)
\end{align}
and the population criterion is equivalent to the weak metric, for any $h\in \Hcal$:
\begin{align}
    \begin{aligned}
        L(h) =~& 2 \langle \Tcal (h_0 -h), f_h\rangle_{L_2(Z)} - \|f_h\|^2
    ~=~ \|\Tcal(h_0-h)\|^2
    \end{aligned}\label{eqn:equivalence-adv-weak}
\end{align}
In other words, by minimizing the adversarial population criterion, we are essentially minimizing the weak metric distance to $h_0$.

\vspace{.5em}\noindent\textbf{Tikhonov Regularized Adversarial Estimator.} 
As a first step, we consider a Tikhonov regularized adversarial estimator. The regularized population criterion and the corresponding regularized population solution is defined as:
\begin{align}\label{eqn:populatoin_objective}
    h_{*} := \argmin_{h\in \bar{\Hcal}} \|\Tcal(h_0 - h)\|^2 + \lambda \|h\|^2 
\end{align}
We define the Tikhonov regularized adversarial estimator, as the solution to an empirical analogue the adversarial formulation of the regularized population criterion, replacing also $\bar{\Hcal}$ with $\Hcal$ (as defined in Assumption~\ref{ass:realizable}):
\begin{align}\label{eqn:tikhonov}
    \hat{h} := \argmin_{h\in \Hcal} \max_{f\in \Fcal} \E_n \Big[ 2\,\Big(m(W;f) - h(X)\, f(Z) \Big) - f(Z)^2 + \lambda h(X)^2 \Big]
\end{align}
An essential implication of the source condition is the ability to control the regularization bias caused by the Tikhonov regularization term, denoted as $h_* - h_0$, as a function of both $\beta$ and the regularization strength $\lambda$ as follows.

\begin{lemma}[Tikhonov Regularization Bias]\label{lem:bias-tikhonov} 
Suppose that the minimum $L_2$-norm solution $h_0$ satisfies the $\beta$-source condition. Then, we have 
\begin{align}
    \|h_* - h_0\|^2 \leq~& \|w_0\|\, \lambda^{\min\{\beta, 2\}} & 
    \|\Tcal(h_* - h_0)\|^2 \leq~& \|w_0\|\, \lambda^{\min\{\beta + 1, 2\}} & 
\end{align}
\end{lemma}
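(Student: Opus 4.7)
First I would derive a clean closed-form expression for the regularization bias $h_* - h_0$. The population objective $\|\Tcal(h_0 - h)\|^2 + \lambda \|h\|^2$ is strongly convex and quadratic on the closed linear subspace $\bar{\Hcal}$, so setting its Fr\'echet derivative to zero gives the normal equation $(\Tcal^*\Tcal + \lambda I)\, h_* = \Tcal^*\Tcal\, h_0$. Since $\lambda > 0$, the operator $\Tcal^*\Tcal + \lambda I$ is boundedly invertible, so
\begin{align*}
    h_* - h_0 = -\lambda\, (\Tcal^*\Tcal + \lambda I)^{-1}\, h_0.
\end{align*}

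Next I would substitute the source condition $h_0 = (\Tcal^*\Tcal)^{\beta/2}\, w_0$ and reduce the task to a scalar spectral estimate. Writing $A := \Tcal^*\Tcal$ and noting that $A$ is a positive self-adjoint bounded operator whose spectrum is contained in $[0,1]$ (from \cref{ass:ubound} one gets $\|\Tcal\|_{op} \leq 1$ by Jensen and conditional expectations), the spectral theorem gives
\begin{align*}
    \|h_* - h_0\|^2 \;\leq\; \Big(\sup_{\sigma \in [0,1]} \tfrac{\lambda\, \sigma^{\beta/2}}{\sigma + \lambda}\Big)^{2}\, \|w_0\|^2 .
\end{align*}
The analogous bound for the weak metric follows from the identity $\|\Tcal g\|^2 = \langle g, A g\rangle = \|A^{1/2} g\|^2$ applied to $g = h_* - h_0$, which inserts an extra $\sigma^{1/2}$ factor into the scalar supremum, effectively replacing $\beta$ by $\beta + 1$ throughout the following computation.

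The main technical step, and the only place where any real work happens, is bounding the scalar multiplier $\phi_\beta(\sigma,\lambda) := \lambda\, \sigma^{\beta/2}/(\sigma+\lambda)$ uniformly over $\sigma \in [0,1]$. I would split into two regimes. For $\sigma \leq \lambda$, bounding $\lambda/(\sigma+\lambda) \leq 1$ gives $\phi_\beta \leq \sigma^{\beta/2}$, which is $\leq \lambda^{\beta/2}$ when $\beta \leq 2$ and $\leq \sigma \leq \lambda$ when $\beta \geq 2$ (using $\sigma \leq 1$). For $\sigma \geq \lambda$, bounding $\sigma/(\sigma+\lambda) \leq 1$ gives $\phi_\beta \leq \lambda\, \sigma^{\beta/2 - 1}$, which is $\leq \lambda \cdot \lambda^{\beta/2-1} = \lambda^{\beta/2}$ when $\beta \leq 2$ (since then $\beta/2 - 1 \leq 0$) and $\leq \lambda \cdot 1^{\beta/2 - 1} = \lambda$ when $\beta \geq 2$. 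Combining the two regimes yields $\sup_\sigma \phi_\beta \lesssim \lambda^{\min\{\beta,2\}/2}$, and squaring delivers the first claim $\|h_* - h_0\|^2 \lesssim \|w_0\|^2 \lambda^{\min\{\beta, 2\}}$; the same piecewise argument with $\beta$ replaced by $\beta+1$ delivers the weak-metric bound $\|\Tcal(h_* - h_0)\|^2 \lesssim \|w_0\|^2 \lambda^{\min\{\beta+1, 2\}}$.

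The only subtle point is the qualitative ``saturation'' phenomenon: once $\beta$ exceeds $2$ (respectively $\beta+1$ exceeds $2$), the scalar multiplier stops improving with $\beta$ because the $\sigma \geq \lambda$ branch is controlled by the operator norm rather than by $\lambda$. This is why the exponents are capped at $2$, and it is precisely the obstacle that the iterated Tikhonov procedure promised later in the paper is designed to overcome.
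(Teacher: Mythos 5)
Your proof is correct and follows the same spectral strategy as the paper: derive a closed form for $h_*-h_0$, insert the source condition, and reduce to bounding a scalar spectral multiplier over $[0,1]$. Two small differences are worth noting, both in your favor. First, the paper restricts to compact $\Tcal$ with an explicit SVD and waves at the non-compact case, whereas your normal-equation form $h_*-h_0 = -\lambda(\Tcal^*\Tcal+\lambda I)^{-1}(\Tcal^*\Tcal)^{\beta/2}w_0$ plus functional calculus for the self-adjoint operator $A=\Tcal^*\Tcal$ handles both at once, and also cleanly absorbs the paper's separate observation that the minimum-norm solution puts no weight on the kernel of $\Tcal$ (that is handled automatically since $(A+\lambda I)^{-1}A^{\beta/2}$ vanishes on $\ker A$ for $\beta>0$). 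Second, the paper finds the exact interior maximizer of $x\mapsto x^\beta\lambda^2/(x+\lambda)^2$ via first-order conditions; your two-regime split at $\sigma=\lambda$ is more elementary and gives the same order bound. One bookkeeping item: your derivation, like the paper's own proof, delivers $\|w_0\|^2\,\lambda^{\min\{\beta,2\}}$ on the right side rather than the $\|w_0\|\,\lambda^{\min\{\beta,2\}}$ appearing in the lemma statement; the statement appears to carry a typo and your version matches the actual computation.
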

The first inequality is widely recognized (c.f. \citep[Theorem 1.4.]{cavalier2011inverse}). In our analysis we also incorporate the second inequality, which bounds the bias in terms of the weak metric.

Having established control over the bias, we are now poised to formally demonstrate the previously mentioned estimation rates for the Tikhonov regularized estimator presented in \eqref{eqn:tikhonov}. In the following theorem, we make use of the concept of critical radius, a standard measure for quantifying convergence in nonparametric regression \citep{wainwright2019high}. In particular, when the function classes under consideration are VC classes, the critical radius is determined to be $\delta_n = O(n^{-1/2})$. It is worth noting that the critical radius can be readily computed for various function classes such as Sobolev balls, Holder balls, and sparse linear models (for detailed calculations see \citep{dikkala2020minimax,wainwright2019high,foster2019orthogonal}).

\begin{theorem}\label{thm:adv-l2}
    Consider the function spaces:
    \begin{align}
        \Hcal\cdot \Fcal :=~& \{(x, z) \to h(x)\, f(z): h\in \Hcal, f\in \Fcal\} &
        m \circ \Fcal :=~& \{w \to m(w; f): f\in \Fcal\}
    \end{align}
    Suppose that $m(W; f), h(X), f(X)$ are a.s. absolutely bounded, uniformly over $h\in \Hcal, f\in \Fcal$. Let $\delta_n=\Omega\left(\sqrt{({\log\log(n) + \log(1/\zeta)})/{n}}\right)$ be an upper bound on the critical radius of $\shull(\Hcal\cdot \Fcal)$, $\shull(m\circ \Fcal)$, $\shull(\Fcal)$ and $\shull(\Hcal)$.\footnote{The star hull of a function space is defined as $\shull(\Gcal)=\{\gamma g: g\in \Gcal, \gamma \in [0, 1]\}$. Note that by the linearity of the moment $\shull(m\circ \Fcal)=m\circ \shull(\Fcal)$.} 
Assume (a) the minimum $L_2$-norm solution $h_0$ satisfies the $\beta$-source condition,(b) realizability, $h_{*} \in \Hcal$, (c) closedness assumption, which says that the function space $\Fcal$ satisfies 
    \begin{align}
        \forall h\in \Hcal: \E[h_0(X) - h(X)\mid Z=\cdot] \in \Fcal
    \end{align}
    and (d) the moment $m$ satisfies mean-squared-continuity:
    \begin{align}\label{eqn:msc-l2}
        \forall f\in \Fcal: \E[m(W;f)^2] \leq O\left(\|f\|^2\right). 
    \end{align}
    Then the estimate $\hat{h}$ from Equation~\eqref{eqn:tikhonov}, for any $\lambda < 2$, satisfies that w.p. $1-\zeta$:
    \begin{align}
    \|\hat{h}-h_0\|^2 =~& O\left(\frac{\delta_n^2}{\lambda} + \|w_0\|\, \lambda^{\min\{\beta, 1\}} \right) &
    \|\Tcal(\hat{h}-h_0)\|^2 =~& O\left(\delta_n^2 + \|w_0\|\, \lambda^{\min\{\beta + 1, 2\}}\right)
    \end{align}
\end{theorem}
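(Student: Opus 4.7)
The plan is to establish an oracle inequality that compares $\hat h$ to the population Tikhonov solution $h_*$ defined in \cref{eqn:populatoin_objective}, and then to pass from $h_*$ to $h_0$ via the bias bound in \cref{lem:bias-tikhonov} and the triangle inequality. Write $\Psi_n(h,f)$ for the integrand inside $\E_n$ in \cref{eqn:tikhonov} and $\Psi(h,f)$ for its population counterpart. Using the Riesz identity $\E[m(W;f)]=\E[h_0(X)f(Z)]$ and projection onto the closed linear space $\bar\Qcal$, one has $\max_{f\in\Fcal}\Psi(h,f)=F(h):=\|\Tcal(h_0-h)\|^2+\lambda\|h\|^2$, with maximizer $f_h=\Tcal(h_0-h)\in\Fcal$ by the closedness assumption (c). The functional $F$ is $2\lambda$-strongly convex on $\bar\Hcal$, its global minimizer $h_*$ lies in $\Hcal$ by realizability, and a quadratic expansion at this minimizer gives $F(\hat h)-F(h_*)\ge\|\Tcal(\hat h-h_*)\|^2+\lambda\|\hat h-h_*\|^2$.

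My first technical step would be a uniform localized concentration
\begin{align*}
|\Psi_n(h,f)-\Psi(h,f)|\ \lesssim\ \delta_n(\|f\|+\lambda\|h\|)+\delta_n^2\qquad\forall h\in\Hcal,\ f\in\Fcal,
\end{align*}
holding with probability $1-\zeta$. This follows term-by-term from the critical-radius bound for star hulls of the listed function classes: $m\circ\Fcal$ paired with the mean-squared-continuity \eqref{eqn:msc-l2} contributes $O(\delta_n\|f\|)$; $\Hcal\cdot\Fcal$ with $\|hf\|\le\|f\|$ (by the uniform bound on $\Hcal$) handles the cross term; $\Fcal$ composed with squaring gives $O(\delta_n\|f\|)$ since $\|f^2\|\le\|f\|$; and $\Hcal$ similarly yields $O(\lambda\delta_n\|h\|)$ for the regularizer.

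Next I would derive the basic inequality by ``completing the square'' in $f$. From the empirical optimality $\max_f\Psi_n(\hat h,f)\le\max_f\Psi_n(h_*,f)$, plugging $f^\star:=\Tcal(h_0-\hat h)\in\Fcal$ on the left and applying concentration at $(\hat h,f^\star)$ yields $F(\hat h)\le\max_f\Psi_n(h_*,f)+O(\delta_n\|\Tcal(h_0-\hat h)\|+\lambda\delta_n\|\hat h\|+\delta_n^2)$. For the right side I would upper-bound the constrained maximum by the unconstrained one over $L_2$, exploiting the self-normalizing $-\|f\|^2$ term:
\begin{align*}
\max_{f\in\Fcal}\Psi_n(h_*,f) &\le \max_{f\in L_2}\bigl\{2\inner{\Tcal(h_0-h_*)}{f}-\|f\|^2+C\delta_n\|f\|\bigr\}+\lambda\|h_*\|^2+O(\lambda\delta_n\|h_*\|+\delta_n^2)\\
&= F(h_*)+O\bigl(\delta_n\|\Tcal(h_0-h_*)\|+\lambda\delta_n\|h_*\|+\delta_n^2\bigr).
\end{align*}
Triangulating $\|\Tcal(h_0-\hat h)\|\le\|\Tcal(h_0-h_*)\|+\|\Tcal(\hat h-h_*)\|$ and $\|\hat h\|\le\|h_*\|+\|\hat h-h_*\|$ then produces
\begin{align*}
F(\hat h)-F(h_*)\ \lesssim\ \delta_n\bigl(\|\Tcal(\hat h-h_*)\|+\|\Tcal(h_0-h_*)\|\bigr)+\lambda\delta_n\bigl(\|\hat h-h_*\|+\|h_*\|\bigr)+\delta_n^2.
\end{align*}

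The last step combines strong convexity with Young's inequality applied to the pair $(\delta_n,\|\Tcal(\hat h-h_*)\|)$ and, after writing $\lambda\delta_n\|\hat h-h_*\|=(\sqrt\lambda\delta_n)\cdot(\sqrt\lambda\|\hat h-h_*\|)$, to its regularization counterpart. The resulting quadratic pieces are absorbed into the strong-convexity lower bound, yielding
\begin{align*}
\|\Tcal(\hat h-h_*)\|^2+\lambda\|\hat h-h_*\|^2\ \lesssim\ \|\Tcal(h_0-h_*)\|^2+\lambda^2\|h_*\|^2+\delta_n^2.
\end{align*}
Plugging in $\|\Tcal(h_0-h_*)\|^2\le\|w_0\|\lambda^{\min\{\beta+1,2\}}$ from \cref{lem:bias-tikhonov}, using $\|h_*\|\le\|h_0\|\le 1$ (from $F(h_*)\le F(h_0)$ plus the uniform bound) so that $\lambda^2\|h_*\|^2\lesssim\lambda^{\min\{\beta+1,2\}}$ for $\lambda\le 1$, and applying the bias bound $\|h_*-h_0\|^2\le\|w_0\|\lambda^{\min\{\beta,2\}}$ together with a second triangle inequality yields both displayed rates. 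The main obstacle is the Young's-inequality bookkeeping: the weights must be chosen so that the linear-in-error remainders fit simultaneously inside $\|\Tcal(\hat h-h_*)\|^2$ and $\lambda\|\hat h-h_*\|^2$, while the un-paired slack $\lambda\delta_n\|h_*\|$ ends up dominated by the stated rates; the self-normalizing $-\|f\|^2$ term inside $\Psi_n$ is precisely what enables the completing-the-square step to yield a bound scaling with $\|\Tcal(h_0-h_*)\|$ rather than the crude $O(\delta_n)$ envelope of $\Fcal$.
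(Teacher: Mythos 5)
Your proposal is correct and follows essentially the same route as the paper: compare $\hat h$ to the population Tikhonov minimizer $h_*$ via strong convexity of $F(h)=\|\Tcal(h_0-h)\|^2+\lambda\|h\|^2$, use localized concentration (enabled by the closedness assumption so that $f_h=\Tcal(h_0-h)\in\Fcal$ and mean-squared continuity to convert $\sqrt{\E[m(W;f)^2]}$ into $\|f\|$), absorb the linear error terms via Young's inequality into the strongly convex lower bound, and finish with the bias bounds of \cref{lem:bias-tikhonov} and a triangle inequality. The only cosmetic differences are that the paper factors the weak-metric excess-risk step into \cref{lem:weak-conv-tikh} (so it can be re-used verbatim for the iterated estimator in \cref{thm:adv-l2-iter}) and centers the concentration of the regularizer term $(\E_n-\E)[\lambda h^2]$ at $h_*$ rather than at $0$, which avoids the extra slack term $\lambda^2\|h_*\|^2$ that you have to subsequently dominate by $\lambda^{\min\{\beta+1,2\}}$ using $\|h_*\|\le\|h_0\|\le 1$ and $\lambda<2$; both treatments are valid and yield the stated rates.
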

Hereafter, we present the implications of the aforementioned theorem. Firstly, assumptions (b) and (c) are standard, as utilized in \citep{dikkala2020minimax,LiaoLuofeng2020PENE}. In cases where these assumptions are violated, additional calculations allow us to obtain results with extra bias terms, which quantify the extent of the violation. Secondly, if we optimally choose $\lambda\sim \delta_n^{\frac{2}{1 + \min(\beta,1)}}$, so as to minimize the strong metric upper bound in Theorem~\ref{thm:adv-l2}, then we simultaneously get the rates:
\begin{align}
    \|\hat{h}-h_0\|^2 =~& O\left(\delta_n^{2\frac{\min(\beta,1)}{1+\min(\beta,1)}}\right) & 
    \|\Tcal(\hat{h}-h_0)\|^2 =~& O\left(\delta_n^2\right)
\end{align}
For special cases of the setting we cover here, these rates are equivalent to the current state-of-the-art rates, in terms of the weak metric as presented in \citep{dikkala2020minimax}, and the strong metric as shown in \citep{bennett2023minimax}. Moreover, unlike these prior works, our analysis offers simultaneously state-of-the-art guarantees for both metrics. Consequently, our analysis shows that for the one-step Tikhonov regularized adversarial estimator there is no trade-off in terms of which metric to prioritize (up to multiplicative constants). We now provide a more detailed comparison of our results with existing ones.

\vspace{.5em}\noindent\textbf{Comparison with \citep{LiaoLuofeng2020PENE}.} The authors proposed the same estimator and focus on $\Hcal$ and $\Fcal$ that are neural networks classes.
By naively following their analysis and extending it to cases involving general function classes, we obtain the following result:
\begin{align}
    \|\hat h-h_0 \|^2 =  O(\delta_n/\lambda + \lambda^{\min(\beta,2))}).
\end{align}
Consequently, for $\beta\leq 2$, the resulting rate is of $\|\hat h-h_0 \|$ is $\delta_n^{\beta/2(1+\beta)}$, while for $\beta\geq 2$, the corresponding rate is $\delta_n^{1/3}$. 
As depicted in Figure~\ref{fig:comparison}, this rate is slower compared to the one we obtained. The looser analysis in their work stems from their failure to fully exploit the strong convexity of the loss function \eqref{eqn:populatoin_objective} with respect to $h$ in terms of the strong metric. In our analysis, we refine the bound by employing the localization technique, which entails bounding the empirical process term using the errors $\|\hat h-h_0\|$ and $\|\Tcal(\hat h-h_0)\|$.

\vspace{.5em}\noindent\textbf{Comparison with \citep{bennett2023minimax}.} They propose an estimator that achieves $\| \hat h-h_0 \|^2=O(\delta_n)$ when $\beta\geq 1$.
Although their estimator can relax assumption (c) by replacing it with a realizability assumption of the form ``there exists $w_0 \in \Hcal$ such that $h_0 = \Tcal^* w_0$'', it remains unclear whether their estimator can achieve $\|\Tcal(\hat h-h_0)\|=O(\delta_n^2)$. Furthermore, their estimator does not provide any guarantee when $\beta < 1$.

\vspace{.5em}\noindent\textbf{Comparison with \citep{dikkala2020minimax}.}  They derive an estimator that achieves $\|\Tcal(\hat h-h_0)\|^2=O(\delta_n^2)$. While their result does not necessitate the source condition, their estimator lacks any guarantee regarding the convergence in terms of the strong metric $\|\hat h-h_0\|$, especially when the primal inverse problem does not have a unique solution.

\section{Iterated Tikhonov Regularized Estimator}

One limitation of the result in the previous section is its lack of adaptability to the degree of ill-posedness in the inverse problem, particularly for larger values of $\beta$ corresponding to milder problems. Ideally, as $\beta\to\infty$, it is expected that the strong metric rates would converge to $O(\delta_n^2)$. Indeed, when the operator $\Tcal$ coincides with the identity operator, leading to $\beta=\infty$, nonparametric IV regression reduces to standard nonparametric regression. In such cases, achieving an $O(\delta_n^2)$ rate is well-known \citep{wainwright2019high}. The observed convergence rate saturation after some small value of $\beta$ is a recognized drawback of Tikhonov regularization \citep{Carrasco2007}.
To address this limitation, we propose an iterated Tikhonov regularized adversarial estimator. At each iteration $t$ (commencing with $h_{*,0}=\hat{h}_0=0$) \footnote{Technically, we can set any function as $\hat{h}_0$.},
we consider the population criterion: 
\begin{align}\label{eqn:iter-tikh-reg-solution}
    h_{*,t} := \argmin_{h\in \bar{\Hcal}} \|\Tcal(h_0 - h)\|^2 + \lambda \|h-h_{*,t-1}\|^2
\end{align}
and the corresponding empirical criterion:
\begin{align}\label{eqn:iter-tikhonov}
    \hat{h}_t = \argmin_{h\in \Hcal} \max_{f\in \Fcal} \E_n \Big[ 2\,\Big(m(W;f) - h(X)\, f(Z) \Big) - f(Z)^2 + \lambda \Big( h(X) - \hat{h}_{t-1}(X) \Big)^2 \Big]
\end{align}
We will show that by choosing $t$ and $\lambda$ appropriately then with high probability, as long as $n$ 
is larger than some constant then this iterated estimator achieves a strong metric rate of $\approx \delta_n^{2 \frac{\beta}{\beta+2}}$ that adapts to large values of $\beta$, \ie, becomes faster as $\beta\to \infty$ and converges to $\delta_n^2$ in the limit.

The limitation observed in the previous analysis stems from the inability of the bias in \pref{lem:bias-tikhonov} to adapt to higher values of $\beta$. However, the iterative nature of Tikhonov regularization possesses a crucial characteristic that mitigates this issue. It is demonstrated that the regularization bias of the $t$-th population iterate, denoted as $h_{*,t} - h_0$, is significantly smaller than that of the non-iterated one, particularly when $\beta$ is large.

\begin{lemma}[Iterated Tikhonov Regularization Bias]\label{lem:bias-iter-tikhonov} If $h_0$ is the minimum $L_2$-norm solution to the linear inverse problem, and satisfies the $\beta$-source condition, then the solution to the 
$t$-th iterate of Tikhonov regularization $h_{*,t}$, defined in Equation~\eqref{eqn:iter-tikh-reg-solution}, 
with $h_{*,0}=0$, satisfies that:
\begin{align}
    \|h_{*,t} - h_0\|^2 \leq~& \|w_0\|^2  \lambda^{\min\{\beta, 2t\}} & 
    \|\Tcal(h_* - h_0)\|^2 \leq~& \|w_0\|^2 \lambda^{\min\{\beta + 1, 2t\}} & 
\end{align}
\end{lemma}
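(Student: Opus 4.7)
The plan is to derive an explicit closed form for the error $h_{*,t} - h_0$ from the first-order optimality condition of the population iterate, and then to bound the resulting operator expression by functional calculus combined with the source condition. Since $h_{*,t}$ minimizes a quadratic functional of $h$ over the Hilbert space $\bar{\Hcal}$, the normal equation reads
\[
\Tcal^*\Tcal\,(h_{*,t} - h_0) + \lambda\,(h_{*,t} - h_{*,t-1}) = 0.
\]
Writing $A := \Tcal^*\Tcal$ and rearranging gives the recursion $h_{*,t} - h_0 = \lambda (A + \lambda I)^{-1}(h_{*,t-1} - h_0)$. Iterating from $h_{*,0} = 0$ and invoking the source-condition representation $h_0 = A^{\beta/2} w_0$ yields the compact identity
\[
h_{*,t} - h_0 \;=\; -\,\lambda^{t}\,(A + \lambda I)^{-t}\,A^{\beta/2}\,w_0.
\]

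Next, I would apply the functional calculus for the positive self-adjoint operator $A$ on $\bar{\Hcal}$ (which specializes to manipulating the singular value decomposition in the compact case described in the introduction). Using $\|\Tcal u\|^2 = \langle u, Au\rangle$ and the fact that $A$ commutes with every function of itself, the strong and weak metric errors admit the representations
\[
\|h_{*,t} - h_0\|^2 = \langle w_0,\, \varphi_t(A)\, w_0\rangle, \qquad \|\Tcal(h_{*,t} - h_0)\|^2 = \langle w_0,\, \psi_t(A)\, w_0\rangle,
\]
with $\varphi_t(s) = \lambda^{2t} s^{\beta}/(s+\lambda)^{2t}$ and $\psi_t(s) = \lambda^{2t} s^{\beta+1}/(s+\lambda)^{2t}$, the extra factor of $s$ in $\psi_t$ coming from inserting an extra $A$ into the sandwich. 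By the spectral theorem it then suffices to bound $\varphi_t$ and $\psi_t$ pointwise for $s$ in the spectrum of $A$, which is contained in $[0, \|\Tcal\|^2]$.

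For the pointwise bounds I would split into two cases. When $\beta \leq 2t$, the factorization
\[
\varphi_t(s) = \left(\frac{\lambda}{s+\lambda}\right)^{2t-\beta} \left(\frac{\lambda s}{s+\lambda}\right)^{\beta}
\]
gives $\varphi_t(s) \leq \lambda^{\beta}$, since $\lambda/(s+\lambda) \leq 1$ and $\lambda s/(s+\lambda) \leq \lambda$. When $\beta > 2t$, I would invoke Assumption~\ref{ass:ubound}: conditional expectation is an $L_2$-contraction, so $\|\Tcal\| \leq 1$ and the spectrum of $A$ lies in $[0, 1]$; rewriting $\varphi_t(s) = \lambda^{2t}\,s^{\beta-2t}\,(s/(s+\lambda))^{2t}$ then gives $\varphi_t(s) \leq \lambda^{2t}$. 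Repeating the same two factorizations with $\beta$ replaced by $\beta+1$ yields $\psi_t(s) \leq \lambda^{\min(\beta+1,\,2t)}$. Combined with the inner-product representations, these imply the two claimed inequalities $\|h_{*,t}-h_0\|^2 \leq \|w_0\|^2\,\lambda^{\min(\beta,2t)}$ and $\|\Tcal(h_{*,t}-h_0)\|^2 \leq \|w_0\|^2\,\lambda^{\min(\beta+1,2t)}$.

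The only delicate step is the regime $\beta > 2t$, which is precisely where iteration breaks the saturation of one-shot Tikhonov regularization (Lemma~\ref{lem:bias-tikhonov}); it is here that the contraction $\|\Tcal\| \leq 1$ is essential, since without it the factor $s^{\beta-2t}$ would be unbounded. Everything else is routine bookkeeping in the functional calculus.
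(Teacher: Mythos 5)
Your proof is correct and is, at its core, the same spectral argument the paper uses: express the iterated error as a function of $A = \Tcal^*\Tcal$ acting on $w_0$, then bound the resulting filter function pointwise on the spectrum. Where the paper restricts to the compact/SVD case and asserts the coefficient formula $a_{i,*}=\bigl((\sigma_i^2+\lambda)^t-\lambda^t\bigr)(\sigma_i^2+\lambda)^{-t}a_{i,0}$ ``by some algebra,'' you actually derive it: the normal equation gives the one-step recursion $h_{*,t}-h_0 = \lambda(A+\lambda I)^{-1}(h_{*,t-1}-h_0)$, iterating from $h_{*,0}=0$ yields the closed form $h_{*,t}-h_0 = -\lambda^t(A+\lambda I)^{-t}h_0$, and the source condition then gives $-\lambda^t(A+\lambda I)^{-t}A^{\beta/2}w_0$; this is a genuine improvement in rigor and also removes the restriction to compact operators, since it goes through the spectral theorem directly. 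The other place you depart from the paper is the pointwise bound on $\varphi_t(s)=\lambda^{2t}s^\beta/(s+\lambda)^{2t}$ in the regime $\beta\le 2t$: the paper finds the maximizer $s^*=\lambda\beta/(2t-\beta)$ by calculus and evaluates there (incidentally the exponent in the paper's displayed intermediate quantity contains a typo), whereas your factorization $\varphi_t(s)=\bigl(\lambda/(s+\lambda)\bigr)^{2t-\beta}\bigl(\lambda s/(s+\lambda)\bigr)^{\beta}$ together with $\lambda/(s+\lambda)\le 1$ and $\lambda s/(s+\lambda)\le\lambda$ gives the same $\lambda^\beta$ bound with no calculus; this is tidier. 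The $\beta>2t$ case, the appeal to $\|\Tcal\|\le 1$, and the substitution $\beta\mapsto\beta+1$ for the weak metric all match the paper. Net: same approach, but your write-up is cleaner and slightly more general.
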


With such an improved bias control at hand we can prove our claimed guarantee for the iterated tikhonov regularizedd estimator. The proof is based on starkly different approach to controlling the variance part, than what we used in Theorem~\ref{thm:adv-l2}. This alternative approach is required, so as to avoid compounding of bias terms over the $t$ iterates.

\begin{theorem}[Iterated Tikhonov]\label{thm:adv-l2-iter}
    Under Assumption~\ref{ass:ubound} and further assuming that 
    $\delta_n=\Omega\left(\sqrt{\frac{\log\log(n) + \log(t/\zeta)}{n}}\right)$ 
    upper bounds the critical radii of the function classes define in Theorem~\ref{thm:adv-l2} and that $\delta_n$ also upper bounds the critical radius of $\shull(\Hcal-\Hcal):=\{\gamma\,(h-h'):h,h'\in \Hcal, \gamma\in [0,1]\}$.
    Assume the conditions (a), (c), (d) of Theorem~\ref{thm:adv-l2}, and (b') the realizability $h_{*,\tau} \in \Hcal$ for any $\tau\leq t$. Let $M_{\leq t} = \max\left\{\lambda, \max_{k\in [t]} \|\Tcal(h_{*,k}-h_0)\|_{L_{\infty}}^2\right\}$. Then the estimate $\hat{h}_t$ from Equation~\eqref{eqn:iter-tikhonov}, for and $t\geq 1$ and $\lambda\leq 1$ satisfies w.p. $1-4\zeta$:
    \begin{align}
    \|\hat{h}_t-h_0\|^2 =~& O\left(16^t \frac{\delta_n^2 M_{\leq t} }{\lambda^2} + \|w_0\|_2 \lambda^{\min\{\beta, 2t\}}\right),\\
    \|\Tcal(\hat{h}_t - h_0)\|^2 =~& O\left(\delta_n^2 + \min\left\{\lambda, \frac{16^t\delta_n^2M_{\leq t-1}}{\lambda}\right\} +  \|w_0\|\, \lambda^{\min\{\beta+1,2t\}}\right).
    \end{align}
\end{theorem}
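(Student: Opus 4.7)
The plan is to prove the bound by induction on the iteration index $t$, where at each step we combine the bias control from Lemma~\ref{lem:bias-iter-tikhonov} for the population iterate $h_{*,t}$ with a localized empirical process bound for the deviation $\hat{h}_t - h_{*,t}$. For the base case $t=1$, the argument essentially reduces to that of Theorem~\ref{thm:adv-l2} with center $\hat h_0 = 0$, after verifying that the required quantities match the claimed exponents when $t=1$. For the inductive step, we assume the stated bounds hold for iteration $t-1$ and derive them for iteration $t$, being careful to track how the error from $\hat h_{t-1}$ propagates into the bound for $\hat h_t$.

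The starting point is the basic optimality inequality: since $\hat h_t$ minimizes the empirical minimax objective over $\Hcal$ and $h_{*,t}\in\Hcal$ by assumption (b$'$), we have
\[
\max_{f\in\Fcal}\E_n[2(m(W;f)-\hat h_t(X)f(Z))-f(Z)^2] + \lambda\E_n[(\hat h_t - \hat h_{t-1})^2]
\le \max_{f\in\Fcal}\E_n[\cdots h_{*,t}\cdots] + \lambda\E_n[(h_{*,t}-\hat h_{t-1})^2].
\]
The first step of the plan is to convert this empirical inequality to a population inequality using localized Rademacher complexity bounds on the classes $\Hcal\cdot\Fcal$, $m\circ\Fcal$, $\Fcal$, $\Hcal$, and $\shull(\Hcal-\Hcal)$, obtaining uniform deviation bounds of size $O(\delta_n\cdot\|\text{residual}\|)$ (this is where we need critical radii of all those classes). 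Combined with the closedness assumption (c), which ensures that the maximizing $f$ in the population problem is exactly $\Tcal(h_0-h)\in\Fcal$, one obtains a population inequality of the form $\|\Tcal(\hat h_t - h_0)\|^2 + \lambda\|\hat h_t - \hat h_{t-1}\|^2 \lesssim \|\Tcal(h_{*,t}-h_0)\|^2 + \lambda\|h_{*,t}-\hat h_{t-1}\|^2 + \text{concentration error}$.

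The second step is to decompose the regularization terms via the identity
\[
\lambda\|\hat h_t - \hat h_{t-1}\|^2 - \lambda\|h_{*,t}-\hat h_{t-1}\|^2 = \lambda\|\hat h_t - h_{*,t}\|^2 + 2\lambda\langle \hat h_t - h_{*,t},\, h_{*,t}-\hat h_{t-1}\rangle,
\]
applying Young's inequality to the cross term, and then bounding $\|h_{*,t}-\hat h_{t-1}\|^2 \le 2\|h_{*,t}-h_{*,t-1}\|^2 + 2\|h_{*,t-1}-\hat h_{t-1}\|^2$. The first summand is controlled by Lemma~\ref{lem:bias-iter-tikhonov} applied to adjacent iterates (yielding a $\lambda^{\min\{\beta,2t\}}$-type bias), and the second summand is exactly the quantity controlled by the inductive hypothesis. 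Each application of Young's and the triangle inequality costs a multiplicative factor of order $O(1)$, which accumulates geometrically across iterations and produces the $16^t$ factor in the final rate. Finally, to extract the weak-metric bound one uses that the $\|\Tcal(\hat h_t - h_0)\|^2$ term sits separately on the left of the population inequality, giving the $\min\{\lambda,\ 16^t\delta_n^2 M_{\le t-1}/\lambda\}$ expression: the $\lambda$ arises when we simply drop the regularization term by triangle inequality on $\|h_{*,t}-\hat h_{t-1}\|$, and the $16^t\delta_n^2 M_{\le t-1}/\lambda$ expression arises from the inductive bound on $\|\hat h_{t-1}-h_0\|^2$.

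The main obstacle is the third step: avoiding a crude variance cost of $\delta_n^2/\lambda^2$ which would blow up badly for small $\lambda$. This requires a Bernstein/Talagrand-type refinement of the concentration inequalities, where the variance proxy for the empirical processes involving the adversarial maximizer $f_h \approx \Tcal(h_0 - h)$ is bounded by $\|\Tcal(h_0-h)\|_{L_\infty}^2 \cdot \|\Tcal(h_0-h)\|_{L_2}^2$; since $h$ is close to some $h_{*,k}$ (for $k\le t$), the sup-norm factor is absorbed into $M_{\le t}$. This sharper localization is what replaces the naive $\delta_n^2/\lambda^2$ with $\delta_n^2 M_{\le t}/\lambda^2$, which is essential for the result to be useful. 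A secondary challenge is verifying that the $\beta$-source condition together with $\lambda\le 1$ makes the $\lambda^{\min\{\beta,2t\}}$ bias terms dominate the $\lambda^{2t-2k}$ factors arising in the recursion, so that the induction indeed closes cleanly with the stated exponents and the $16^t$ constant growth.
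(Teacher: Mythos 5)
Your proposal misses the key structural idea in the paper's argument, and as written the induction does not close with the claimed exponent $\lambda^{\min\{\beta,2t\}}$.

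The step that fails is your Step~1, which yields a population inequality of the form $\|\Tcal(\hat h_t - h_0)\|^2 + \lambda\|\hat h_t - \hat h_{t-1}\|^2 \lesssim \|\Tcal(h_{*,t}-h_0)\|^2 + \lambda\|h_{*,t}-\hat h_{t-1}\|^2 + \text{error}$. This is essentially what the paper's Lemma~\ref{lem:weak-conv-tikh} gives, and the paper explicitly explains why that form is inadequate for the iterated case: the multiplicative factor on $\|\Tcal(h_{*,t}-h_0)\|^2$ produces, after division by $\lambda$, a term $\approx \lambda^{\min\{\beta,2t-1\}}$, and the adjacent-difference $\|h_{*,t}-h_{*,t-1}\|^2$ that you invoke is \emph{not} of order $\lambda^{\min\{\beta,2t\}}$ as you assert — a direct SVD computation shows it is $O(\lambda^{\min\{\beta,2t-2\}})$, matching what the triangle inequality $\|h_{*,t}-h_{*,t-1}\|^2 \le 2\|h_{*,t}-h_0\|^2+2\|h_{*,t-1}-h_0\|^2$ gives. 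Moreover, relating $\|h_{*,t-1}-\hat h_{t-1}\|^2$ to your inductive hypothesis on $\|\hat h_{t-1}-h_0\|^2$ costs yet another $\|h_{*,t-1}-h_0\|^2 \approx \lambda^{\min\{\beta,2(t-1)\}}$ term. All of these are strictly larger than $\lambda^{\min\{\beta,2t\}}$ when $\beta>2t-2$, which is precisely the regime the iterated estimator is designed to exploit, so the induction cannot produce the stated bias exponent.

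The paper's fix is Lemma~\ref{lem:weak-conv-tikh-biasless}, a ``bias-less'' excess weak-metric risk bound that compares $\E_n[\ell(\hat h, f_{\hat h})]$ directly against $\E_n[\ell(h_*, f_{h_*})]$ and applies localized concentration (Lemma~\ref{lem:concentration}) to their \emph{difference}, which is Lipschitz in $(h,f)$ jointly. A simple H\"older step then produces a variance bound $O(\delta_n\|\hat h - h_*\|\,\|f_{h_*}\|_{L_\infty} + \delta_n\|\Tcal(\hat h - h_*)\| + \delta_n^2)$ with \emph{no} $\|\Tcal(h_0-h_{*,t})\|^2$ term on the right. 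This is not a Bernstein/Talagrand refinement as you suggest; it is purely a change of centering in the standard localized concentration bound (centering the empirical process at $h_{*,t}$ rather than at $h_0$). Combined with tracking $\gamma_t := \|\hat h_t - h_{*,t}\|^2$ directly rather than $\|\hat h_t - h_0\|^2$, the paper obtains a bias-free recursion $\gamma_t \le C\,\delta_n^2 M_{\le t}/\lambda^2 + 16\gamma_{t-1}$ with $\gamma_0=0$, so that $\gamma_t \le 16^t C\,\delta_n^2 M_{\le t}/\lambda^2$ and the bias $\lambda^{\min\{\beta,2t\}}$ enters exactly once at the end via $\|h_{*,t}-h_0\|^2$. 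You correctly identify that the $M_{\le t}$ factor must come from an $L_\infty$ bound on $f_{h_*}$, but without the bias-less centering the approach cannot reach the claimed bound.
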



If one uses the crude bound of $\max_{k\in [t-1]} \|\Tcal(h_{*,k}-h_0)\|_{L_{\infty}}^2=O(1)$, then this theorem yields strong and weak metric rates of:
\begin{align}
 \|\hat{h}_t-h_0\|^2 =~& O\left(16^t \frac{\delta_n^2}{\lambda^2} + \lambda^{\min\{\beta, 2t\}}\right) \\
  \|\Tcal(\hat{h}_t - h_0)\|^2 =~& O\left(\delta_n^2 + \min\left\{\lambda, \frac{16^t\delta_n^2}{\lambda}\right\} +  \lambda^{\min\{\beta+1,2t\}}\right)
\end{align}
If we let $\beta_t = \min\{\beta, 2t\}$ and choose a regularization strength of $\lambda\sim \delta_n^{\frac{1}{\beta_t+2}}$, then we get rates:
\begin{align}
    \|\hat{h}_t-h_0\|^2 =~& O\left(16^t \delta_n^{2\frac{\beta_t}{\beta_t+2}}\right)
\end{align}
If we choose $t=\lceil \min\{\beta/2, \log\log(1/\delta_n)\}\rceil$ then we get a rate of:
\begin{align}
    \|\hat{h}_t-h_0\|^2 =~& O\left(\min\{16^{\beta}, \log(1/\delta_n)\} \delta_n^{2\frac{\beta_t}{\beta_t+2}}\right) = \tilde{O}\left(\delta_n^{2\frac{\beta_t}{\beta_t+2}}\right)
\end{align}
Hence for any constant $\beta$, as $n$ grows, eventually $\log\log(1/\delta_n)\geq \beta$ and we get the rate of $O\left(\delta_n^{2 \frac{\beta}{\beta+2}}\right)$. This rate can also be achieved, even if $\beta$ is allowed to grow with $n$ in the asymptotics, as long as it grows slower than $\log\log(1/\delta_n)$. If $\delta_n \sim n^{-\alpha}$ for some $\alpha>0$, then we note that if we take $t= \lceil \min\{\beta/2, \sqrt{\log(1/\delta_n)}\}\rceil$, then $16^{\sqrt{\log(1/\delta_n)}}=o(n^{\epsilon})$ for any $\epsilon>0$, thus we again conclude the rate $\tilde{O}\left(\delta_n^{2 \frac{\beta_t}{\beta_t+2}}\right)$. This result allows us to claim a rate of $O\left(\delta_n^{2 \frac{\beta}{\beta+2}}\right)$ even if $\beta$ is growing with $n$ in the asymptotics as long as it grows slower than $\sqrt{\log(1/\delta_n)}$, which is considerably faster then $\log\log(1/\delta_n)$.

In summary, as depicted in Figure~\ref{fig:comparison}, when $\beta \geq 2$ and the sample size $n$ is sufficiently large, the iterated algorithm exhibits an improved rate in terms of the strong metric compared to the non-iterated version and the existing work \citep{LiaoLuofeng2020PENE}, where the rate saturates after $\beta=2$.
Notably, as $\beta$ tends to infinity, the rate of the iterated algorithm approaches the fast rate $O(\delta_n^2)$. However, such hyperparameter settings tuned for the strong metric do not yield a fast rate in terms of the weak metric, but rather a rate of $O\left(\delta_n^{2\frac{1+\beta}{2+\beta}}\right)$. 
Hence, there is a trade-off in the choice of estimation hyperparameters, dependent on which metric one is targeting.

\begin{remark}[Smooth Function Spaces]\label{rem:smoothness}
If we further assume that the function space $T\circ (\Hcal-h_0)$ satisfies a smoothness assumption that implies a relationship between the $L_{\infty}$ and $L_2$ norms, \ie, for some $\gamma\leq 1$:
\begin{align}\label{eqn:norm-bounds}
    \|\Tcal(h_{*,t} - h_0)\|_{L_{\infty}} \leq O\left(\|\Tcal(h_{*,t} - h_0)\|_{L_{2}}^{\gamma}\right)
\end{align}
then we immediately get by the bias lemma that $\|\Tcal(h_{*,t} - h_0)\|_{L_{\infty}}^2 = O\left(\lambda^{\gamma \min\{1+\beta, 2t\}}\right)=O\left(\lambda^{\gamma}\right)$ 
Thus the strong metric rate becomes:
\begin{align}
    \|\hat{h}_t-h_0\|^2 =~& O\left(16^t\frac{\delta_n^2}{\lambda^{2-\gamma}} + \lambda^{\min\{\beta, 2t\}}\right)
\end{align}
In this case, with the optimal choice of $\lambda$, for any constant $\beta$, with $t=\lceil \beta/2\rceil$ we get a rate:
\begin{align}
    \|\hat{h}_t - h_0\|^2 =O\left(\delta_n^{2\frac{\beta}{2 - \gamma +\beta}}\right).
\end{align}
As $\gamma \to 1$, this recovers the rate of $\delta_n^{2\frac{\beta}{1+\beta}}$ that we have established in the small $\beta\leq 1$ regime and extends it for larger values of $\beta$. Such a relationship is known for instance to hold for Sobolev spaces and more generally for reproducing kernel Hilbert spaces (RKHS) with a polynomial eigendecay. For instance, Lemma 5.1 of \citep{mendelson2010regularization} shows that if the eigenvalues of an RKHS decay at a rate of $1/j^{1/p}$ for some $p\in (0,1)$, and both $h_{*,t},h_0$ lie in the RKHS with a bounded RKHS norm, then we have that Equation~\eqref{eqn:norm-bounds} holds with $\gamma=1-p$. For such function spaces, we then achieve  rate of $O\left(\delta_n^{2\frac{\beta}{1+p+\beta}}\right)$. For the Gaussian kernel, which has an exponential eigendecay, we can take $p$ arbitrarily close to $0$.
\end{remark}

\section{Conditions for Asymptotically Normal Inference  }

As was already noted, the problem that defines $q_0$ is of exactly the same nature as the primary inverse problem that defines $h_0$. The only difference is that the moment is $\tilde{m}$ instead of the moment $m$ that defines $h_0$, the linear operator is the adjoint $\Tcal^*$
of the operator that defines $h_0$ and the roles of $X$ and $Z$ are reversed, \ie, $X$ becomes the ``exogenous'' or ``conditioning'' variable and $Z$ the ``endogenous''. Hence, the Tikhonov Regularized adversarial estimator for $q_0$ becomes:
\begin{align}\label{eqn:tikhonov_dual}
    \hat{q} := \argmin_{q \in \Qcal} \max_{g\in \Gcal} \E_n[2\,(\tilde m(W;g) - q(Z)\, g(X)) - g(X)^2 + \lambda q(Z)^2]
\end{align}
where $\Qcal$ and $\Gcal$ are defined in Assumption~\ref{ass:realizable}. We can similarly define the iterated version. 
Moreover, we can obtain convergence guarantees for $\hat q$ and its iterated version via \pref{thm:adv-l2} and \pref{thm:adv-l2-iter}.
We will assume that both the primal and dual inverse problems satisfy a source condition for some positive $\beta$ (a very benign assumption, \ie, arbitrary weak source condition).
\begin{assumption}
There exist $w_0\in L_2(X)$ and $\tilde w_0 \in L_2(Z)$ such that for $\beta_h,\beta_q>0$:
\begin{align}
h_0 =~& (\Tcal^* \Tcal)^{\beta_h/2} w_0 &
q_0 =~& (\Tcal \Tcal^*)^{\beta_q/2} \tilde w_0.
\end{align}
\end{assumption}
The condition for the dual problem states that $q_0$
should be primarily supported on the lower spectrum of the  operator $T^*$, equivalently, $a_0$ should be primarily supported on the lower spectrum of the operator $T$ (see Appendix~\ref{app:riesz-smooth} for a more detailed discussion).

We present sufficient conditions for guaranteeing asymptotically normal inference on $\hat{\theta}$. We will denote with $\beta=\max\{\beta_h, \beta_q\}$, to be the largest of the two source condition parameters (\ie, the degree of the most well-posed inverse problem). First, we examine the case where we know which of the two inverse problems satisfies the $\beta$-source condition and without loss of generality we will assume that to be dual inverse problem that defines $q_0$, while the primal inverse problem that defines $h_0$ satisfies an arbitrary weak source condition. These roles can be interchanged without difficulty. Subsequently, we consider a more challenging scenario where we do not know which one satisfies the non-vacuous source condition.

\subsection{Guarantees under Knowledge of the Most Well-Posed Inverse Problem}

Suppose that $\beta=\beta_q \geq \beta_h > 0$. The complementary case follows identically, interchanging the roles of $h$ and $q$. Theorem~\ref{thm:adv-l2} implies that by choosing 
$\lambda \sim\delta_n^{\frac{2}{1+\min(\beta_h, 1)}}$, w.p. $1-O(\zeta)$:
\begin{align}
    \|\hat{h}-h_0\|^2 =~& O\left(\delta_n^{2\frac{\min(\beta_h,1)}{1 + \min(\beta_h,1)}}\right)= o_p(1), &
    \|\Tcal(\hat{h}-h_0)\|^2 =~& O\left(\delta_n^2\right).
\end{align}
Thus we guarantee fast rates of $O(\delta_n^2)$
for the weak metric and consistency with respect to the strong metric; as required by the asymptotic normality Corollary~\ref{cor:functionals-endo}. Here $\delta_n$ is an upper bound on the critical radius of the function classes $\shull(\Hcal\cdot \Fcal)$, $\shull(m \circ \Fcal)$, $\shull(\Fcal)$, $\shull(\Hcal - \Hcal)$. Moreover, since $q_0$ satisfies a $\beta_q$-source condition, we can apply the best of Theorem~\ref{thm:adv-l2} or Theorem~\ref{thm:adv-l2-iter} with an optimal choice of regularization hyperparameter and number of iterations, to get:
\begin{align}
    \|\hat{q} - q_0\|^2 =~& O\left(\min\left\{\delta_n^{2 \frac{\min\{\beta, 1\}}{1 + \min\{\beta, 1\}}}, \delta_n^{2 \frac{\beta}{2 + \beta}}\right\}\right) 
\end{align}
where $\delta_n$ is an upper bound on the critical radius of the function classes $\shull(\Qcal\cdot \Gcal)$, $\shull(\tilde{m} \circ \Gcal)$, $\shull(\Gcal)$, $\shull(\Qcal - \Qcal)$.

Combining the two aforementioned observations, we can construct estimators $\hat{h}, \hat{q}$ for both $h_0$ and $q_0$ to be used in the context of Corollary~\ref{cor:functionals-endo}, that ensure asymptotic normality if the following rate condition is satisfied:
\begin{align}\label{eq:final}
   {n}\, \|\hat{q}-q_0\|^2\, \|\Tcal (\hat{h}-h_0)\|^2 = n\, O\left(\min\left\{\delta_n^{2 \left(1 + \frac{\min\{\beta, 1\}}{1 + \min\{\beta, 1\}}\right)}, \delta_n^{2 \left(1 + \frac{\beta}{2 + \beta}\right)}\right\}\right) =~& o(1)
\end{align}
We can conclude an analogous statement by flipping the role of $h_0$ and $q_0$. The size requirement of the function classes varies depending on the source condition. More specifically, we need 
\begin{align}
    \delta_n = o(n^{-\alpha}),\quad \alpha := \min\left( \frac{1 + \min(\beta,1) }{ 2 + 4 \min(\beta,1) }, \frac{2+\beta}{4 + 4 \beta} \right). 
\end{align}
This rate is illustrated in the blue line in \cref{fig:required_rate}. For $1\leq \beta\leq 2$, it is sufficient to have $\delta_n=n^{-1/3}$, which still permits non-parametric function classes. In the extreme case where $\beta \to 0$, the condition converges to $\delta_n=n^{-1/2}$,
which corresponds to a  parametric rate (\ie, function classes are VC classes). Therefore, our theorem accommodates scenarios where both $h_0$ and $q_0$ exhibit severe ill-posedness. Conversely, in the limit as $\beta\to \infty$, the condition requires $\delta_n=n^{-1/4}$, which is the requirement typically encountered when dealing with functionals of functions defined through regression problems rather than inverse problems (c.f. \citep{chernozhukov2017double}).

\vspace{.5em}\noindent\textbf{Comparison with \citep{bennett2022inference}.} 
To provide a comparison, we examine our condition alongside recent related work by \citep{bennett2022inference}. In \citep{bennett2022inference}, in their study, under the assumption of $\beta=1$, they demonstrate that asymptotically normal inference is feasible as long as
\begin{align}
  {n}O(\delta^3_n) = o(1). 
\end{align}
The corresponding required rate for $\delta_n$ is illustrated in the blue line in \cref{fig:required_rate}. This aligns with our conclusion when instantiating $\beta=1$ in \eqref{eq:final}. However, our condition is significantly more general. Firstly, when $\beta<1$, our condition in \eqref{eq:final} still allows for the possibility of asymptotically normal inference. Specifically, as $\beta \to 0$,
our requirement converges to $\delta_n = o(n^{-1/2})$.
In contrast, \citep{bennett2022inference} do not provide any guarantees when $\beta<1$.
Secondly, when $\beta>1$, we can leverage the potentially larger size of the function classes as $\beta$ increases. Specifically, as $\beta \to \infty$, our requirement becomes $\delta_n^4 = o(n^{-1/4})$. This adaptability to $\beta$ is not obtained in \citep{bennett2022inference}.

\subsection{Source Condition Double Robustness}

The current results do not guarantee double robustness in terms of source conditions. We have presented an estimator that enables asymptotically normal inference when $h_0$ satisfies the $\beta$-source condition and $q_0$ satisfies an arbitrary weak source condition. Similarly, we have established a corresponding statement in a scenario when we interchange their role. In this section, we aim to develop an estimator that achieves double robustness, allowing for asymptotically normal inference in \emph{both} scenarios simultaneously.

To achieve this objective, we propose the following refined methods. For convenience, let 
\begin{align}
    L_n(h) :=~& \max_{f\in \Fcal} \E_n[2(m(W;f)-h(X)\,f(Z)) - f(Z)^2]\\
    \tilde{L}_n(q) :=~& \max_{g \in \Gcal}\E_n[2\,(\tilde m(W;g) - q(Z)\, g(Z)) - g(X)^2 ]
\end{align}
Then we introduce the constrained function classes:
\begin{align}
       \tilde \Hcal :=~& \{h \in \Hcal : L_n(h) - \min_{h\in \Hcal} L_n(h)\leq \mu_{n} \}, \label{eqn:emp-loss-h}
       \tilde \Qcal :=~& \{q \in \Qcal: \tilde{L}_n(q) - \min_{q \in \Qcal}\tilde{L}_n(q)\leq \mu_{n} \}, 
\end{align}
where $\mu_{h,n}$ and $\mu_{q,n}$ are parameters that will be specified later. After defining these version spaces, we follow the same procedure as before, but now using $\tilde \Hcal$ and $\tilde \Qcal$ instead of $\Hcal$ and $\Qcal$, respectively. Here, we construct these version spaces such that with high probability: (1) all the functions in $\tilde \Hcal$ are sufficiently close to $h_0$ in terms of weak metric, even in the absence of the source condition, and (2) $h_{*} \in \tilde \Hcal$ under the $\beta$-source condition (similarly for $\tilde \Qcal$). The first property plays a crucial role in achieving source double robustness, while the second property allows us to apply \cref{thm:adv-l2} to $\tilde \Hcal$ instead of $\Hcal$ and ensure a sufficiently fast rate in terms of strong metric under the $\beta$-source condition.

Formally, the resulting estimators with version spaces possess the following properties. An analogous conclusion can be also derived for $\hat q$. We first consider the non-iterated version. 

\begin{corollary}[Non-iterated version]\label{thm:adv-l3}
Suppose conditions (c) and (d) in \cref{thm:adv-l2}. Then, assuming $h_0 \in \Hcal$ and setting $\mu_n=\Omega(\max(\delta^2_n,\lambda^{\min(\beta+1,2)}))$, with probability $1-\zeta$, we can ensure 
    \begin{align}
        \|\Tcal(\hat{h} - h_0)\|^2 =~& O\left(\mu_n\right).
    \end{align}
If two conditions: (a) a function $h_0$ satisfies $\beta$-source condition, (b) realizability $h_* \in \Hcal$ in \cref{thm:adv-l2} additionally hold, with probability $1-\zeta$, we can ensure 
\begin{align}
     \|\hat{h} - h_0 \|^2 =  O( \delta^2_n/\lambda + \lambda^{\min(\beta,1)}). 
\end{align}
\end{corollary}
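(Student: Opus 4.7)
The plan is to exploit the fact that the data-dependent version space $\tilde \Hcal$ simultaneously (i) contains only functions with small empirical adversarial risk, which will be enough to certify a weak metric rate without any source condition, and (ii) still contains the population Tikhonov minimizer $h_{*}$ whenever the $\beta$-source condition holds, so that the realizability hypothesis of \cref{thm:adv-l2} transfers to the constrained problem. These two properties decouple the two claims of the corollary.

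For the weak metric bound, I would invoke the same localized Rademacher concentration argument that underlies the proof of \cref{thm:adv-l2} (applied to $\shull(m\circ \Fcal)$, $\shull(\Hcal\cdot \Fcal)$, and $\shull(\Fcal)$) to obtain, with probability $1-\zeta$, a two-sided uniform bound of the form
\begin{align*}
\bigl| L_n(h) - \|\Tcal(h-h_0)\|^2 \bigr| \;\leq\; O\bigl(\delta_n^2 + \delta_n\,\|\Tcal(h-h_0)\|\bigr) \qquad \forall h \in \Hcal.
\end{align*}
Since $h_0 \in \Hcal$, this yields $L_n(h_0) \leq O(\delta_n^2)$ and hence $\min_{h\in\Hcal} L_n(h) \leq O(\delta_n^2)$. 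As $\hat h \in \tilde \Hcal$ by construction, we get $L_n(\hat h) \leq \min_h L_n(h) + \mu_n \leq O(\mu_n)$, where we used $\mu_n = \Omega(\delta_n^2)$. Substituting $h = \hat h$ back into the uniform bound and solving the resulting quadratic inequality in $\|\Tcal(\hat h - h_0)\|$ produces $\|\Tcal(\hat h - h_0)\|^2 = O(\mu_n)$, which is the first claim.

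For the strong metric bound, the key step is to verify $h_{*} \in \tilde \Hcal$ with high probability under the source condition. By \cref{lem:bias-tikhonov}, $\|\Tcal(h_{*} - h_0)\|^2 \leq \|w_0\|\,\lambda^{\min\{\beta+1,2\}}$. Applying the same uniform bound at $h = h_{*}$ (which lies in $\Hcal$ by condition (b)) together with AM-GM gives $L_n(h_{*}) \leq O(\lambda^{\min\{\beta+1,2\}} + \delta_n^2)$. Combined with the lower bound $\min_{h\in\Hcal} L_n(h) \geq -O(\delta_n^2)$ (which follows from the same uniform bound via $\delta_n \|\Tcal(h-h_0)\| \leq \tfrac12 \|\Tcal(h-h_0)\|^2 + O(\delta_n^2)$), we conclude $L_n(h_{*}) - \min_h L_n(h) \leq O(\lambda^{\min\{\beta+1,2\}} + \delta_n^2) \leq \mu_n$ by the stipulated choice of $\mu_n$. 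Hence $h_{*} \in \tilde \Hcal$. Once this realizability transfer is in place, the entire proof of \cref{thm:adv-l2} can be rerun verbatim with $\tilde \Hcal$ substituted for $\Hcal$ (the critical radii only shrink since $\tilde \Hcal \subseteq \Hcal$), yielding $\|\hat h - h_0\|^2 = O(\delta_n^2/\lambda + \lambda^{\min(\beta,1)})$.

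The main obstacle is exactly this membership $h_{*} \in \tilde \Hcal$: $\mu_n$ must be chosen large enough to absorb both the fluctuation term $\delta_n^2$ and the Tikhonov bias $\lambda^{\min\{\beta+1,2\}}$ for $h_*$, yet the weak metric argument requires that $\tilde \Hcal$ remain tight enough that every $h \in \tilde \Hcal$ inherits a weak metric rate $O(\mu_n)$. The threshold $\mu_n = \Omega(\max\{\delta_n^2,\lambda^{\min(\beta+1,2)}\})$ is precisely the minimal choice that makes both sides of this balance simultaneously feasible, which is why it appears in the corollary's statement.
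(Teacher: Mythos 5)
Your proposal is correct and takes essentially the same approach as the paper's proof: both establish (i) a uniform link between the empirical adversarial risk $L_n(h)$ and the population weak metric $\|\Tcal(h-h_0)\|^2$ via localized concentration and the $\|\hat f_h - f_h\|^2 = O(\delta_n^2)$ argument, (ii) the weak-metric rate for any $\tilde h\in\tilde\Hcal$ without a source condition, and (iii) the membership $h_*\in\tilde\Hcal$ under the source condition so that \cref{thm:adv-l2} can be rerun on $\tilde\Hcal$. The only difference is presentational: you package the concentration facts into a single two-sided uniform bound $|L_n(h) - \|\Tcal(h-h_0)\|^2| \leq O(\delta_n^2 + \delta_n\|\Tcal(h-h_0)\|)$ and derive both claims from it, whereas the paper keeps the two properties (a weak-rate guarantee for the version space and a containment criterion for $\tilde\Hcal$) as separate statements with their own concentration arguments.
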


Importantly, the first statement concerning the weak metric does not rely on the source condition. Utilizing this theorem, let us consider a scenario where either $h_0$ or $q_0$ satisfies the $\beta$-source condition, but we are uncertain about which one. In such cases, by setting $\mu_n = c (\delta^2_n + \lambda^{\min(\beta+1,2)})$, asymptotically normal inference remains feasible under the condition that
\begin{align}
    n \times \underbrace{(\delta^2_n + \lambda^{\min(\beta+1,2)})}_{\text{weak metric }}\times \underbrace{(\delta^2_n/\lambda + \lambda^{\min(\beta,1)}) }_{\text{strong metric}}=o(1).
\end{align}
By optimally setting $\lambda\sim \delta_n^{\frac{2}{1+\min(\beta, 1)}}$, the above becomes
\begin{align}
    n  \times \underbrace{ \delta^2_n}_{\text{Week metric  }}  \times \underbrace{\delta^\frac{2 \min(\beta,1) }{1 + \min(\beta,1)}_n}_{\text{Strong metric } }= o(1). 
\end{align}
Moreover, since $\beta>0$, the above setting of $\lambda$ also ensures that consistency with respect to the strong metric at some arbitrarily slow rate continues to hold, even when the $\beta$-source condition does not hold for the function but only an $\epsilon$ source condition holds for an arbitrarily small $\epsilon>0$ (see also discussion in the previous section).
This result is noteworthy since, in the previous section using Theorem \ref{thm:adv-l2}, we required prior knowledge regarding which one between $h_0$ and $q_0$ satisfies the source condition.


Next, we consider the iterated version to leverage a higher value of $\beta\geq 1$. At each iteration, we optimize over the constrained spaces ${\tilde \Hcal_t, \tilde \Qcal_t}$, defined similar to $\tilde\Hcal,\tilde\Qcal$, but with an iterate dependent upper bound $\mu_{n,t}$ instead of $\mu_n$.
By replacing $\Hcal$ and $\Qcal$ with $\tilde \Hcal_t$ and $\tilde \Qcal_t$ respectively at each iteration, we obtain estimators $\hat h_t$ and $\hat q_t$. Here is the property of the estimators. 

\begin{corollary}[Iterated version]\label{thm:adv-l3-iter}
Suppose conditions (c), (d) in \cref{thm:adv-l2}. Then, assuming $h_0 \in \Hcal$ and setting $\mu_{n,t} = \Omega(\max(\delta^2_n,\lambda^{\min(\beta+1,2t)}))$, for any $\lambda\leq 1$,  with probability $1-O(\zeta)$, we can ensure 
    \begin{align}
        \|\Tcal(\hat{h}_t - h_0)\|^2 =~& O\left(\mu_{n,t} \right).
    \end{align}
If we further assume that (a) the function $h_0$ satisfies $\beta$-source condition, (b') the realizability assumption that $\forall t \in [\beta/2], h_{*,t} \in \Hcal$ in \cref{thm:adv-l2}, (e) the smoothness assumption that for all $h\in \Hcal$, we have  $\|T(h-h_0)\|_{L_\infty}=O\left(\|T(h-h_0)\|^{\gamma}\right)$, then with probability $1-O(\zeta)$:
\begin{align}
     \|\hat{h}_t - h_0 \|^2 =  O(\delta^2_n/\lambda^{2-\gamma}  + \lambda^{\min(\beta,2t)}). 
\end{align}
\end{corollary}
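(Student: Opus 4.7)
The plan is to extend the Rademacher-complexity machinery used in the proofs of Theorems~\ref{thm:adv-l2} and~\ref{thm:adv-l2-iter} to the data-dependent constrained classes $\tilde{\Hcal}_\tau$. First, I would define a high-probability event $\mathcal{E}$ (of probability at least $1-O(\zeta)$) on which the localized concentration inequalities underlying Theorem~\ref{thm:adv-l2-iter} hold uniformly over $\Hcal$ and $\Fcal$. In particular, on $\mathcal{E}$ I would record two facts: (i) $|L_n(h)-L(h)| \lesssim \delta_n \sqrt{L(h)} + \delta_n^2$ uniformly over $h\in\Hcal$, obtained via the star-hull critical-radius bounds on $\shull(\Hcal\cdot\Fcal)$, $\shull(m\circ\Fcal)$, and $\shull(\Fcal)$; and (ii) $L_n(h_0) = O(\delta_n^2)$, since $h_0\in\Hcal$ satisfies $L(h_0)=0$, so that $\min_{h\in\Hcal} L_n(h) = O(\delta_n^2)$.

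\textbf{Weak metric (no source condition).} On $\mathcal{E}$, the constraint $\hat{h}_t\in\tilde{\Hcal}_t$ together with (ii) gives $L_n(\hat{h}_t) \leq \min_{h\in\Hcal} L_n(h) + \mu_{n,t} = O(\mu_{n,t})$, where I use $\mu_{n,t}=\Omega(\delta_n^2)$. Transferring back to the population via (i) and solving the resulting quadratic in $\sqrt{L(\hat{h}_t)}$ yields $\|\Tcal(\hat{h}_t-h_0)\|^2 = L(\hat{h}_t) = O(\mu_{n,t})$, as claimed.

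\textbf{Strong metric under the source condition.} I would first verify that $h_{*,\tau}\in\tilde{\Hcal}_\tau$ for every $\tau\leq t$ on $\mathcal{E}$. By Lemma~\ref{lem:bias-iter-tikhonov}, $L(h_{*,\tau}) = \|\Tcal(h_{*,\tau}-h_0)\|^2 \leq \|w_0\|^2 \lambda^{\min(\beta+1,2\tau)}$, and (i) applied to $h_{*,\tau}\in\Hcal$ (by assumption b') gives $L_n(h_{*,\tau}) = O(\delta_n^2 + \lambda^{\min(\beta+1,2\tau)}) = O(\mu_{n,\tau})$, so $h_{*,\tau}\in\tilde{\Hcal}_\tau$. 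The iterated procedure is then an instance of the algorithm of Theorem~\ref{thm:adv-l2-iter} run with hypothesis class $\tilde{\Hcal}_\tau\subseteq\Hcal$ at iteration $\tau$ (whose critical radius is bounded by that of $\Hcal$, hence by $\delta_n$), with realizability $h_{*,\tau}\in\tilde{\Hcal}_\tau$ verified. Invoking that theorem produces the generic bound involving $M_{\leq t}$, and smoothness assumption (e) combined with Lemma~\ref{lem:bias-iter-tikhonov} gives $\|\Tcal(h_{*,k}-h_0)\|_{L_\infty}^2 = O(\|\Tcal(h_{*,k}-h_0)\|^{2\gamma}) = O(\lambda^{\gamma\min(\beta+1,2k)})$, so that $M_{\leq t} = O(\lambda^\gamma)$ for $\lambda\leq 1$, $\gamma\leq 1$ and $k\geq 1$. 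Substituting this into the bound of Theorem~\ref{thm:adv-l2-iter}, exactly as in Remark~\ref{rem:smoothness}, yields the claimed $O(\delta_n^2/\lambda^{2-\gamma} + \lambda^{\min(\beta,2t)})$ rate (absorbing the $16^t$ factor into the implicit constant for constant $t$ or $t=\lceil\beta/2\rceil$).

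\textbf{Main obstacle.} The principal technical care is in establishing the localized concentration bound (i): it must couple the empirical process on $\Hcal\cdot\Fcal$ with the adversarial inner maximization over $\Fcal$, which is where the star-hull critical-radius control enters; this is the same machinery already developed in the proof of Theorem~\ref{thm:adv-l2}. A secondary subtlety is that $\tilde{\Hcal}_\tau$ is data-dependent, but since $\tilde{\Hcal}_\tau\subseteq\Hcal$ deterministically, its localized Rademacher complexity is dominated by that of $\Hcal$, so no new concentration tools are required. Finally, the $t$-fold union bound needed to ensure $h_{*,\tau}\in\tilde{\Hcal}_\tau$ for all $\tau\leq t$ simultaneously is already absorbed into the $\log(t/\zeta)$ term built into the critical-radius requirement of Theorem~\ref{thm:adv-l2-iter}.
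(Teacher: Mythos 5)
Your proposal is correct and follows essentially the same route as the paper: the paper likewise reduces to Corollary~\ref{thm:adv-l3}'s two structural properties applied per iterate (uniform weak-metric bound on $\tilde{\Hcal}_\tau$ via an approximate-minimizer version of Lemma~\ref{lem:weak-conv-tikh-biasless}, and inclusion $h_{*,\tau}\in\tilde{\Hcal}_\tau$ from the bias Lemma~\ref{lem:bias-iter-tikhonov}), then invokes Theorem~\ref{thm:adv-l2-iter} with the smoothness assumption bounding $M_{\leq t}=O(\lambda^\gamma)$ exactly as in Remark~\ref{rem:smoothness}. Your packaging of the empirical-process work into the uniform bound (i) $|L_n(h)-L(h)|\lesssim \delta_n\sqrt{L(h)}+\delta_n^2$ and the observations that $\min_{h\in\Hcal}L_n(h)\geq 0$, $\tilde{\Hcal}_\tau\subseteq\Hcal$ (hence dominated complexity) and, implicitly, that $\tilde{\Hcal}_\tau$ is convex (as a sublevel set of the convex $L_n$ intersected with convex $\Hcal$, so the first-order optimality step of Theorem~\ref{thm:adv-l2-iter} applies) is just a compressed restatement of the same argument.
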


Again, in contrast to \cref{thm:adv-l2-iter}, the first statement does not impose any source condition. Using this theorem, let us consider a scenario where either $h_0$ or $q_0$ satisfies the $\beta$-source condition, but we are uncertain about which one. In this case, asymptotically normal inference is possible as long as
\begin{align}
    n \times \underbrace{(\delta^2_n + \lambda^{\min(\beta+1,2t)})}_{\text{weak metric}}\times \underbrace{(\delta^2_n/\lambda^{2-\gamma} + \lambda^{\min(\beta,2t)}) }_{\text{strong metric}}=o(1).
\end{align}
By optimizing $\lambda\sim \delta_n^{\frac{2}{\beta+1}}$, we can conclude that when $(\beta+1)/2 \leq t\leq c'
\log\log(1/\delta_n)$, the asymptotically normal inference for $\hat{\theta}$ is possible as long as 
\begin{align}
    n \times  \underbrace{\delta^{2}_n}_{\text{Weak metric}}\times  \underbrace{\delta^{2\frac{\beta-1 + \gamma}{\beta+1} }_n}_{\text{Strong metric}}    =o(1). 
\end{align}
Moreover, for $\beta\geq 1$, this setting of $\lambda$ is $o(\delta_n)$ and therefore always ensures also consistency with respect to the strong metric, at some arbitrarily slow rate continues to hold, even when the $\beta$-source condition does not hold for the function but only an $\epsilon$ source condition holds for an arbitrarily small $\epsilon>0$.
Interestingly, this conclusion suggests that in the scenario of source double robustness, our focus should be on achieving a fast rate for the weak metric, even if it comes at the expense of the strong metric. \footnote{Recall, by optimizing $\lambda$ just for the strong metric and for $\gamma=0$, we could achieve $\delta_n^{2\beta/(2+\beta)}$, which is faster than $ \delta^{\frac{2(\beta-1)}{\beta+1} }_n$.} 

To summarize, by combining the best of the two product rate conditions from the two Corollaries and choosing the best of the two conditions, based on the known value of $\beta$, we get the size requirement on the function classes:  
\begin{align}
    \delta_n = o(n^{-\alpha}),\quad \alpha := \min\left( \frac{1 + \min(\beta,1) }{ 2 + 4 \min(\beta,1) }, \frac{1+\beta}{2\gamma + 4 \beta} \right). 
\end{align}
Note that the smoothness assumption always holds when $\gamma=0$, and therefore, in the absence of any smoothness condition the condition is derived from the latter formula with $\gamma=0$.

\begin{remark}[Computational Aspects for RKHS function spaces]
    As an example of how to address the computational aspect of the proposed constrained regularized estimator, we investigate the case of RKHS hypothesis spaces. We will consider the computation of $\hat{h}$, but an analogous approach can be taken for $\hat{q}$. At each iteration of the iterative and constrained Tikhonov regularized estimator, we need to solve a computational problem of the form
    $$\min_{h\in \Hcal: L_n(h) \leq c} L_n(h) + \lambda \E_n[(h(X) - \bar{h}(X))^2]$$
    with $L_n(h)$ as defined in Equation~\eqref{eqn:emp-loss-h} and where $\Hcal$ and $\Fcal$ are norm-constrained RKHS, i.e. $\Hcal$ corresponds to an RKHS with some positive definite kernel $k_{\Hcal}(x,x')$ and with the constrained that $\|h\|_{\Hcal}^2\leq B$, with $\|\cdot\|_{\Hcal}$ the RKHS norm and some constant $B\geq 0$. Similarly $\Fcal$ is an RKHS with kernel $k_{\Fcal}(z,z')$ and norm constraint $\|f\|_{\Fcal}^2 \leq B$, with $\|\cdot\|_{\Fcal}$ the RKHS norm. First consider the internal optimization problem for each $h$, which appears within the definition of $L_n$:
    \begin{align}
       \max_{f\in \Fcal} \E_n[2(m(W;f)-h(X)\,f(Z)) - f(Z)^2]
    \end{align}
    By well-known representer theorems \citep{KIMELDORF197182,Scholkopf2001representer} of solutions of optimization problems over RKHS functions, and similar to Proposition~10 of \citep{chernozhukov2020adversarial}, any solution to the above problem has to be of the form $f=\Psi'\beta$, for some operator $\Psi: \Hcal \times \R^{2n}$, where each row of $\Psi$ corresponds to known linear combinations of  evaluations of the infinite dimensional Mercer feature map, associated with kernel $k_{\Fcal}$, at some point $\tilde{Z}_i$ that is some modification of the sample $Z_i$. Most notable $K=\Psi\Psi'\in \R^{2n\times 2n}$ corresponds to an appropriately defined empirical finite dimensional kernel matrix. With this property, the optimization problem can be written as a finite $2n$-dimensional parametric constrained concave optimization problem, which is easy to solve:
    \begin{align}
        \max_{\gamma\in \R^{2n}: \gamma'K_{\Fcal}\gamma \leq B} 2 \gamma'c_1 - \gamma' M \gamma
    \end{align}
    for some appropriately defined fixed vector $c_1$ and appropriately defined $2n\times 2n$ symmetric positive semi-definite matrices $K_{\Fcal}$ and $M$. These fixed finite-dimensional vectors and matrices can be calculated with a polynomial in $n$ evaluations of the kernel and the function $h$. The finite-dimensional maximization problem can be solved by ``lagrangifying'' the constraint, computing a closed form solution for each candidate Lagrangian coefficient and performing a line search over the scalar Lagrangian coefficient (see e.g. Appendix~E.3 of \citep{dikkala2020minimax}). Given this solution $\gamma$ to the maximization problem, 
we can also apply the representer theorems for the minimization problem over $h$ (see e.g. Proposition~12 of \citep{chernozhukov2020adversarial}). The minimizer will have a finite $n$-dimensional representation $\Phi\beta$, where each row of $\Phi$ corresponds to the evaluation of the infinite dimensional Mercer feature map, associated with kernel $k_{\Hcal}$, at a sample point $X_i$. The loss function $L_n$ can then be written in the form $\gamma'c_2 - \beta'c_3\, \gamma'c_4$, where $c_2, c_3, c_4$ are finite dimensional fixed vectors that can be calculated with polynomial evaluations of the kernels $k_{\Hcal}$ and $k_{\Fcal}$. The minimization problem can then also be written as a finite dimensional parametric optimization of the form:
    \begin{align}
        \min_{\substack{\beta\in \R^{n}:
        \beta'K_{\Hcal}\beta\leq B \text{ and } \gamma'c_2 - \beta'c_3\, \gamma'c_4 \leq c}} - \beta'c_3\, \gamma'c_4  + \beta'c_5 + \beta'Q\beta
    \end{align}
    for some fixed polynomially computable vectors $c_1, c_2, c_3, c_4, c_5$ and symmetric positive semi-definite matrices $K_{\Hcal}, Q$ (these vectors relate to evaluations of the kernel at pairs of sample points $X_i, X_i'$ and $\tilde{Z}_i, \tilde{Z}_i'$, as well as evaluations of the fixed function $\bar{h}$ used as a regularization center, at sample points $X_i$). This is a constrained convex minimization problem with convex constraints and can be solved in a computationally efficient manner. In fact, it only involves quadratic and linear constraints and objectives and can solved very fast by modern computational toolboxes for quadratic constrained convex optimization via interior point methods. Moreover, since we only have two constraints, they can be ``lagrangified'' and the Lagrange multipliers calculated via grid search over two scalars, turning it into an un-constrained quadratic program for each candidate pair of Lagrange multipliers, which is solvable in closed form.
\end{remark}

\bibliographystyle{abbrvnat}
\bibliography{ref1}
\appendix

\section{Preliminary Lemmas}

We will present here a concentration inequality lemma that is used throughout the proofs. See \citep{foster2019orthogonal} for a proof of this theorem:
\begin{lemma}[Localized Concentration, \citep{foster2019orthogonal}]\label{lem:concentration}
For any $h\in \Hcal := \times_{i=1}^d \Hcal_i$ be a multi-valued outcome function, that is almost surely absolutely bounded by a constant. Let $\ell(Z; h(X))\in \R$ be a loss function that is $O(1)$-Lipschitz in $h(X)$, with respect to the $\ell_2$ norm. Let $\delta_n=\Omega\left(\sqrt{\frac{d\,\log\log(n) + \log(1/\zeta)}{n}}\right)$ be an upper bound on the critical radius of $\shull(\Hcal_i)$ for $i\in [d]$. Then for any fixed $h_*\in \Hcal$, w.p. $1-\zeta$: 
\begin{align}
    \forall h\in \Hcal: \left|(\E_n - \E)\left[\ell(Z; h(X)) - \ell(Z; h_0(X))\right]\right| = O\left(d\, \delta_n \sum_{i=1}^d \|h_i - h_{i,0}\|_{L_2} + d\,\delta_n^2\right)
\end{align}
If the loss is linear in $h(X)$, i.e. $\ell(Z; h(X) + h'(X)) = \ell(Z; h(X)) + \ell(Z; h'(X))$ and $\ell(Z;\alpha h(X)) = \alpha \ell(Z;h(X))$ for any scalar $\alpha$, then it suffices that we take $\delta_n=\Omega\left(\sqrt{\frac{\log(1/\zeta)}{n}}\right)$ that upper bounds the critical radius of $\shull(\Hcal_i)$ for $i\in [d]$.
\end{lemma}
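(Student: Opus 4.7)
The plan is to derive this by the standard localized empirical process toolkit: symmetrization, Ledoux-Talagrand contraction to reduce to the coordinate classes $\Hcal_i$, a peeling (shelling) argument over the $L_2$ ball around $h_*$, and Talagrand/Bousquet concentration on each shell. The result is essentially a specialization of a general uniform local deviation bound under a critical radius hypothesis (in the spirit of Wainwright 2019, Ch.~14, and \citep{foster2019orthogonal}).

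First I would set $\phi(h) := (\E_n - \E)[\ell(Z;h(X)) - \ell(Z;h_*(X))]$ and partition $\Hcal$ into shells
\begin{align*}
S_k := \Bigl\{h \in \Hcal : \textstyle\sum_{i=1}^d \|h_i - h_{i,*}\|_{L_2} \in (2^{k-1}\delta_n,\, 2^k \delta_n]\Bigr\}, \quad k \geq 1,
\end{align*}
with a central piece $S_0 := \{h : \sum_i \|h_i - h_{i,*}\|_{L_2} \leq \delta_n\}$. By \cref{ass:ubound} the diameter of $\Hcal$ is $O(1)$, so $k$ is capped at $K = O(\log n)$. On each shell the argument has two parts: bound $\E[\sup_{h \in S_k}|\phi(h)|]$, then concentrate around the expectation.

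For the expected supremum, I would symmetrize and apply scalar Ledoux-Talagrand contraction one coordinate at a time, exploiting the $O(1)$-Lipschitzness of $\ell(z;\cdot)$ with respect to $\ell_2$ (which, applied coordinatewise with Lipschitz constant $1$, introduces at most a factor of $d$ in the final sum). This dominates the symmetrized process by $\sum_{i=1}^d \Rcal(\{h_i - h_{i,*} : h \in S_k\})$. Next I would invoke the critical-radius property: for any $r \geq \delta_n$, every $h_i - h_{i,*}$ with $\|h_i - h_{i,*}\|\leq r$ can be rescaled by $\delta_n/r \in [0,1]$ into $\shull(\Hcal_i) - \shull(\Hcal_i)$ at radius $\delta_n$, and the hypothesis $\Rcal(\shull(\Hcal_i), \delta_n) \leq \delta_n^2$ then yields $\Rcal(\{h_i - h_{i,*} : \|h_i - h_{i,*}\| \leq r\}) \lesssim r \delta_n$. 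Summed over $i$ and applied at $r = 2^k \delta_n$, this gives $\E[\sup_{h \in S_k}|\phi(h)|] = O(d\, 2^k\, \delta_n^2)$.

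For concentration, on $S_k$ the Lipschitz property bounds the variance by $O((2^k \delta_n)^2)$ and the envelope by $O(1)$, so Talagrand/Bousquet yields, with probability $1-\zeta_k$,
\begin{align*}
\sup_{h\in S_k}|\phi(h)| = O\Bigl(d\, 2^k \delta_n^2 + 2^k \delta_n \sqrt{\log(1/\zeta_k)/n} + \log(1/\zeta_k)/n\Bigr).
\end{align*}
Taking $\zeta_k = \zeta / (k+1)^2$ and union-bounding over $k = 0,\ldots,K$, the assumed floor $\delta_n = \Omega(\sqrt{(d\log\log n + \log(1/\zeta))/n})$ absorbs all probabilistic tails into the leading $d\, 2^k \delta_n^2$ term. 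Since on $S_k$ we have $\sum_i \|h_i - h_{i,*}\| \gtrsim 2^{k-1}\delta_n$ for $k \geq 1$, the bound $d\, 2^k \delta_n^2 \lesssim d\, \delta_n \sum_i \|h_i - h_{i,*}\|$ follows; the central shell $S_0$ contributes the additive $d\, \delta_n^2$ term, giving the claimed inequality.

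For the linear loss case, $\ell(Z;h(X)) - \ell(Z;h_*(X)) = \ell(Z;(h - h_*)(X))$, so the centered loss class is itself linearly closed under scaling by $[0,1]$; the star-hull rescaling detour is unnecessary and a single Talagrand application without peeling suffices, dropping the $\log\log n$ term and yielding the weaker radius condition $\delta_n = \Omega(\sqrt{\log(1/\zeta)/n})$. The main obstacle will be the contraction step in the multi-valued setting: using scalar Ledoux-Talagrand coordinatewise is precisely what produces the explicit $d$-dependence in the statement, and one has to be careful not to lose an extra logarithmic factor in the peeling when combining $d$ coordinatewise complexities with vector-norm shells. Everything else—symmetrization, Talagrand concentration, geometric union bound—is routine once the shell decomposition and the critical-radius bound on the expected supremum are in place.
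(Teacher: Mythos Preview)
The paper does not supply its own proof of this lemma: it simply states the result and refers the reader to \citep{foster2019orthogonal}. So there is nothing in the paper to compare your argument against. Your sketch is a faithful reconstruction of the standard localized uniform convergence argument (symmetrization, contraction to the coordinate classes, peeling over $L_2$ shells around $h_*$, Talagrand/Bousquet concentration on each shell, geometric union bound), and you correctly trace where each feature of the statement originates---the $d$ factor from the coordinatewise reduction, the additive $d\,\delta_n^2$ from the central shell, and the $\log\log n$ floor from the union bound over $O(\log n)$ shells. Two places deserve a little more care in a full write-up. First, in the contraction step, iterating scalar Ledoux--Talagrand one coordinate at a time requires freezing the remaining coordinates, and the resulting classes are not literally $\shull(\Hcal_i)$; it is cleaner to invoke Maurer's vector contraction, which directly yields the sum over coordinates. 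Second, your linear-loss argument that ``a single Talagrand application without peeling suffices'' is a bit optimistic: linearity lets the rescaling-into-the-star-hull step go through without restriction on $r$, but you still need some ratio-type or normalized-class device to turn a single concentration inequality into a bound that adapts to $\|h-h_*\|$ uniformly over $\Hcal$; the point is only that this can be arranged without the $O(\log n)$ shells that produce the $\log\log n$ term.
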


\section{Proof of Lemma~\ref{lem:bias-tikhonov}}

\begin{proof}
The proof of this lemma has been discussed in \citep[Theorem 1.4.]{cavalier2011inverse}. 
For simplicity, we consider a compact linear operator $\Tcal$. 
The non-compact setting can be analyzed similarly, but with more complex singular value decomposition. 
Recall from \cref{sec: intro} that the compact linear operator $\Tcal$ admits  a countable singular value decomposition $\{\sigma_i, v_i, u_i\}_{i=1}^{\infty}$ with $\sigma_1 \geq \sigma_2 \geq \ldots$. 
Let $h_0$ and $h^*$ be the minimum norm solution $h_0 = \argmin_{h \in \Hcal: \Tcal h = r_0}\|h_0\|^2$ and the regularized target $h^* = \argmin_{h \in \Hcal}\|\Tcal\prns{h - h_0}\|+\lambda\|h\|^2$.  
Then we can write 
$h_0 = \sum_{i=1}^\infty a_{i,0} v_i$ and  $h_*=\sum_{i=1}^\infty a_{i,*} v_i$ for some square summable sequences $\braces{a_{i,0}: i = 1, \dots, }$ and $\braces{a_{i,*}: i = 1, \dots, }$.

First, we note that $h_0$ does not place any weight on eigenfunctions with zero singular value, namely, $a_{i, 0} = 0$ for any $i$ such that $\sigma_i = 0$. Otherwise, $h_0$ wouldn't have the minimum norm. Moreover, since it satisfies the source condition, we have $h_0 = (\Tcal^*\Tcal)^{\beta/2} w_0$ for some $w_0 \in L_2(X)$. It is easy to check that 
\begin{align}
    \|w_0\|^2 = \sum_{i=1}^\infty 1\braces{\sigma_i \ne 0} \frac{\langle h_0 ,v_i\rangle^2}{\sigma_i^{2\beta}} = \sum_{i=1}^\infty 1\braces{\sigma_i \ne 0} \frac{a_{i, 0}^2}{\sigma_i^{2\beta}}.
\end{align}

We now derive the form of the regularized target $h^*$. Note that the Tikhonov regularized objective can be written as follows:
    \begin{align}
        \|\Tcal(h_0 - h)\|^2 + \lambda \|h\|^2 = \sum_{i=1}^\infty \sigma_i^2 (a_{i,0} - a_i)^2 + \lambda a_i^2.
    \end{align}
    Thus the optimal solution can be derived by taking the first order condition for each $i$, as:
    \begin{align}
        -\sigma_i^2 (a_{i,0}-a_i) + \lambda a_i = 0 \implies a_{i,*} = \frac{\sigma_i^2}{\sigma_i^2 + \lambda} a_{i,0}.
    \end{align}
    It follows that 
        \begin{align}
        \|h_* - h_0\|^2 =~& \sum_{i=1}^\infty (a_{i,0} - a_{i,*})^2 = \sum_{i=1}^\infty a_{i,0}^2 \frac{\lambda^2}{(\sigma_i^2 +\lambda)^2}\\
        =~& \sum_{i=1}^\infty 1\{\sigma_i\neq 0\} a_{i,0}^2 \frac{\lambda^2}{(\sigma_i^2 +\lambda)^2}\\
        =~& \sum_{i=1}^\infty 1\{\sigma_i\neq 0\} \frac{a_{i,0}^2}{\sigma_i^{2\beta}} \frac{\sigma_i^{2\beta}\lambda^2}{(\sigma_i^2 +\lambda)^2} \leq \|w_0\|^2_2 \max_{i} \frac{\sigma_i^{2\beta}\lambda^2}{(\sigma_i^2 +\lambda)^2},
    \end{align}
    Now we show the desired conclusion by  showing   \begin{align}
        \frac{\sigma_i^{2\beta}\lambda^2}{(\sigma_i^2 +\lambda)^2} =  \lambda^{\min\{\beta, 2\}}. 
    \end{align} 
    When $\beta \ge 2$, we have 
    \begin{align}
        \frac{\sigma_i^{2\beta}\lambda^2}{(\sigma_i^2 +\lambda)^2} \le \lambda^2 \max_{i}\sigma_i^{2(\beta-2)} = \lambda^2.
    \end{align}
    The last inequality holds because by Jensen's inequality we can easily prove $\|\Tcal\|\leq 1$, so that  the maximum singular value of the operator $\Tcal$ is no larger than $1$. 

    When $\beta < 2$, we consider the function
    \begin{align}
        f(x) = \frac{x^\beta \lambda^2}{(x+\lambda)^2}.
    \end{align}
    By the first order optimality condition, this function  
    is maximized at $x = \lambda \beta (2-\beta)^{-1}$ and its maximum is 
    \begin{align}
        \frac{x^{\beta} \lambda^2}{(x+\lambda)^2} \leq  \lambda^{\beta} \frac{\beta^{\beta}(2-\beta)^{2-\beta}}{4} \leq \lambda^{\beta}. 
    \end{align}
    
    For the second part of the lemma, note that:
    \begin{align}
        \|\Tcal(h_*-h_0)\|^2 = \sum_{i=1}^{\infty} \sigma_i^2 (a_{i,0}-a_{i,*})^2 = \sum_{i=1}^\infty 1\braces{\sigma_i \ne 0} \frac{a_{i,0}^2}{\sigma_i^{2\beta}} \frac{\sigma_i^{2(\beta+1)}\lambda^2}{(\sigma_i^2 +\lambda)^2} \leq \|w_0\|^2 \max_{i} \frac{\sigma_i^{2(\beta+1)}\lambda^2}{(\sigma_i^2 +\lambda)^2}
    \end{align}
    Then we can prove the desired conclusion by following the steps above with $\beta$ replaced by $\beta + 1$. 

\qed\end{proof}

\section{Proof of Theorem~\ref{thm:adv-l2}}

\begin{proof}
    Anytime we denote a norm $\|\cdot\|$ in this proof, we mean $\|\cdot\|_{L_2}$. All constants $c_i$ in the proof, denote appropriately large universal constants. 
    
    \vspace{.5em}\noindent\textbf{Strong convexity with $L_2$ metric.} The first key realization is that the regularized criterion is strongly convex not just with respect to the weak metric, but with respect to the strong $L_2$ metric, due to the Tikhonov regularizer. More formally, consider the loss function:
    \begin{align}
        L(\tau) := \|\Tcal(h_0 - h -\tau \nu)\|^2 + \lambda \|h + \tau\nu\|^2.
    \end{align}
    Note that this loss function is quadratic in $\tau$ and strongly convex with:
    \begin{align}
        \partial_{\tau}^2 L(\tau) = \|\Tcal\nu\|^2 + \lambda \|\nu\|^2.
    \end{align}
    Since $h_*$ optimizes the regularized criterion, we have that if we take $h=h_*$ in the definition of $L(\tau)$ and if we take $\nu=\hat{h} - h_*$, then the function $L(\tau)$ is minimized at $\tau=0$. Moreover, since the space $\Hcal$ is convex, any function $h_* + \tau(\hat{h}-h_*)\in \Hcal$ for all $\tau\in [0,1]$ and therefore the derivative with respect to $\tau$ must be non-negative at $\tau=0$ (as otherwise an infinitesimal step in the direction $\nu$ would have decreased the regularized criterion; contradicting the optimality of $h_*$). Thus by an exact second order Taylor expansion, we have that:
    \begin{align}
        L(1) - L(0) = \partial_{\tau}L(0) + \partial_{\tau}^2 L(0) \geq \partial_{\tau}^2 L(0)
    \end{align}
    Instantiating the definition of $L$ and re-arranging, we then get:
    \begin{align}
        \lambda \|\hat{h}-h_*\|^2 +  \|\Tcal(\hat{h} - h_*)\|^2\leq \|\Tcal(h_0 - \hat{h})\|^2 - \|\Tcal(h_0 - h_*)\|^2  + \lambda \left(\|\hat{h}\|^2 - \|h_*\|^2\right) 
    \end{align}
    
    \vspace{.5em}\noindent\textbf{Convergence of weak metric excess risk.} Now we need to argue that $\|\Tcal(h_0 - \hat{h})\|$ is small. 
    \begin{lemma}[Weak Metric with Regularizer]\label{lem:weak-conv-tikh}
    If $\hat{h}$ optimizes the regularized objective:
    \begin{align}
        \hat{h} = \argmin_{h\in \Hcal} \max_{f\in \Fcal} \E_n[2(m(W;f)-h(X)\,f(Z)) - f(Z)^2] + R_n(h)
    \end{align}
    and the assumptions of Theorem~\ref{thm:adv-l2} are satisfied, then for any $h_*\in \Hcal$:
    \begin{align}
        \|\Tcal(h_0 - \hat{h})\|^2 - \|\Tcal(h_0 - h_*)\|^2 \leq \underbrace{2\|\Tcal(h_0 - h_*)\|^2  + R_n(h_*) - R_n(\hat{h})}_{\text{``Bias''}} + \underbrace{O\left(\delta_n \|\Tcal(\hat{h}-h_*)\| + \delta_n^2\right)}_{\text{``Variance''}}
    \end{align}
    \end{lemma}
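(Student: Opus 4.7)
The plan is to combine the minimax optimality of $\hat h$ with two carefully decomposed empirical-process bounds via localization, exploiting the key identity that the adversarial loss coincides with the squared weak metric at its maximizer $f_h:=\Tcal(h_0-h)$.

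\emph{Structural identity.} I first note that, using the tower law and $\Fcal\subseteq\bar\Qcal$, the population criterion satisfies
\begin{align*}
\Psi(h,f):=\E[2(m(W;f)-h(X)f(Z))-f(Z)^2]=2\langle \Tcal(h_0-h),f\rangle_{L_2}-\|f\|^2 = \|\Tcal(h_0-h)\|^2-\|f-f_h\|^2.
\end{align*}
The closedness assumption gives $f_h\in\Fcal$, so $\max_{f\in\Fcal}\Psi(h,f)=\|\Tcal(h_0-h)\|^2$, attained at $f_h$.

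\emph{Optimality and RHS control.} Minimax optimality of $\hat h$ yields $\max_f\Psi_n(\hat h,f)+R_n(\hat h)\le \max_f\Psi_n(h_*,f)+R_n(h_*)$. For the LHS, plugging $f=f_{\hat h}$ into the maximum gives $\max_f\Psi_n(\hat h,f)\ge\|\Tcal(h_0-\hat h)\|^2+(\Psi_n-\Psi)(\hat h,f_{\hat h})$. For the RHS, I use the identity above to write $\Psi_n(h_*,f)=\|\Tcal(h_0-h_*)\|^2-\|f-f_{h_*}\|^2+(\Psi_n-\Psi)(h_*,f)$. Applying the localized concentration lemma to the classes $\shull(m\circ\Fcal)$, $\shull(\Hcal\cdot\Fcal)$ and $\shull(\Fcal)$ (for the squared term) separately, I obtain $|(\Psi_n-\Psi)(h_*,f)-(\Psi_n-\Psi)(h_*,f_{h_*})|\lesssim \delta_n\|f-f_{h_*}\|+\delta_n^2$. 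The negative quadratic $-\|f-f_{h_*}\|^2$ absorbs the linear-in-$\|f-f_{h_*}\|$ deviation via $\sup_u\{c\delta_n u-u^2\}=O(\delta_n^2)$. For the reference point $(\Psi_n-\Psi)(h_*,f_{h_*})$, mean-squared continuity and the $L_\infty$ boundedness bound each contributing variance by $\|f_{h_*}\|^2=\|\Tcal(h_0-h_*)\|^2$, giving a Bernstein bound $\lesssim\delta_n\|\Tcal(h_0-h_*)\|+\delta_n^2$, which by AM-GM is $\le \|\Tcal(h_0-h_*)\|^2+O(\delta_n^2)$. Collecting gives $\max_f\Psi_n(h_*,f)\le 2\|\Tcal(h_0-h_*)\|^2+O(\delta_n^2)$, hence
\begin{align*}
\|\Tcal(h_0-\hat h)\|^2 \le 2\|\Tcal(h_0-h_*)\|^2 + R_n(h_*)-R_n(\hat h) + O(\delta_n^2) + (\Psi-\Psi_n)(\hat h,f_{\hat h}).
\end{align*}

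\emph{Localized bound on the last variance term.} I decompose $(\Psi-\Psi_n)(\hat h,f_{\hat h})$ as the reference piece $(\Psi-\Psi_n)(h_*,f_{h_*})$ (handled above) plus a localized difference, and apply Lemma~\ref{lem:concentration} componentwise: the linear term $m(W;f)$ (via the critical radius of $\shull(m\circ\Fcal)$ with mean-squared continuity), the bilinear term $h(X)f(Z)$ (via $\shull(\Hcal\cdot\Fcal)$), and the Lipschitz quadratic $f(Z)^2$ (via $\shull(\Fcal)$). Each deviation is bounded by $\delta_n$ times an appropriate $L_2$ norm of the integrand, plus $\delta_n^2$.

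\emph{Main obstacle.} The delicate step is the bilinear term $h(X)f(Z)$: both arguments vary jointly, and naive localization would introduce a $\delta_n\|\hat h-h_*\|_{L_2}$ strong-metric dependence that the lemma must avoid. The resolution is to bound the $L_2$ norm of the product deviation through $L_\infty$ on $\Hcal$ (Assumption~\ref{ass:ubound}) rather than through $L_2$:
\begin{align*}
\|\hat h f_{\hat h}-h_* f_{h_*}\|_{L_2} \le \|\hat h-h_*\|_{L_\infty}\|f_{\hat h}\| + \|h_*\|_{L_\infty}\|f_{\hat h}-f_{h_*}\| \lesssim \|f_{\hat h}\| + \|\Tcal(\hat h-h_*)\|.
\end{align*}
By the triangle inequality $\|f_{\hat h}\|\le \|\Tcal(\hat h-h_*)\|+\|\Tcal(h_0-h_*)\|$, so the overall deviation is $\lesssim \delta_n\|\Tcal(\hat h-h_*)\|+\delta_n\|\Tcal(h_0-h_*)\|+\delta_n^2$. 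AM-GM absorbs $\delta_n\|\Tcal(h_0-h_*)\|$ into $\|\Tcal(h_0-h_*)\|^2 + O(\delta_n^2)$, which folds into the bias term, leaving only the claimed $O(\delta_n\|\Tcal(\hat h-h_*)\|+\delta_n^2)$ as the variance contribution.
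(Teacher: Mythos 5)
Your proof is correct and rests on the same ingredients the paper uses: the closedness condition to get $f_h=\Tcal(h_0-h)\in\Fcal$, minimax optimality of $\hat h$, the localized concentration Lemma~\ref{lem:concentration} applied to $\shull(m\circ\Fcal)$, $\shull(\Hcal\cdot\Fcal)$ and $\shull(\Fcal)$, mean-squared continuity, boundedness to control the bilinear class, and AM-GM (negative-quadratic absorption) plus the triangle inequality on $\|f_{\hat h}\|\leq\|\Tcal(\hat h-h_*)\|+\|\Tcal(h_0-h_*)\|$. The only difference is expository: you make the identity $\Psi(h,f)=\|\Tcal(h_0-h)\|^2-\|f-f_h\|^2$ explicit and center the empirical-process localization at $(h_*,f_{h_*})$, whereas the paper centers at zero (using $\|\hat h f_{\hat h}\|\lesssim\|f_{\hat h}\|$ directly) and carries the sup over $f$ through the chain of inequalities; the resulting bookkeeping is equivalent.
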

    \begin{proof}
    Letting $f_{h}=\Tcal (h_0 -h)$
    and $\Delta(h)=\|T(h_0-h)\|^2$. Noting that $f_h\in \Fcal$ and invoking Lemma~\ref{lem:concentration}:
    \begin{align}
        \Delta(\hat{h}) :=~& \|\Tcal(h_0 - \hat{h})\|^2\\
        =~& \E[2\,(m(W; f_{\hat{h}}) - \hat{h}(X)\, f_{\hat{h}}(Z)) - f_{\hat{h}}(Z)^2] 
\tag{Equation~\eqref{eqn:equivalence-adv-weak}}
        \\
        \leq~& \E_n[2\,(m(W; f_{\hat{h}}) - \hat{h}(X)\, f_{\hat{h}}(Z)) - f_{\hat{h}}(Z)^2] \\
        ~&~~ + O\left(\delta_n \left(\sqrt{\E[m(W;f_{\hat{h}})^2]} + \sqrt{\E[\hat{h}(X)^2 f_{\hat{h}}(Z)^2]}+ \|f_{\hat{h}}\|\right) + \delta_n^2\right)\tag{Lemma~\ref{lem:concentration}} \\
        \leq~& \E_n[2\,(m(W; f_{\hat{h}}) - \hat{h}(X)\, f_{\hat{h}}(Z)) - f_{\hat{h}}(Z)^2] + O\left(\delta_n \|f_{\hat{h}}\| + \delta_n^2\right)\tag{Equation~\eqref{eqn:msc-l2}}\\
        \leq~& \sup_{f\in \Fcal} \E_n[2\,(m(W; f) - \hat{h}(X)\, f(Z)) - f(Z)^2] + O\left(\delta_n \|f_{\hat{h}}\| + \delta_n^2\right) \tag{$f_{\hat{h}}\in \Fcal$}\\
        \leq~& \sup_{f\in \Fcal} \E_n[2\,(m(W; f) - h_*(X)\, f(Z)) - f(Z)^2] + O\left(\delta_n \|f_{\hat{h}}\| + \delta_n^2\right) + \underbrace{R_n(h_*)- R_n(\hat{h})}_{I_n}\\
        \leq~& \sup_{f\in \Fcal} \E[2\,(m(W; f) - h_*(X)\, f(Z)) - f(Z)^2] + O\left(\delta_n \|f\| + \delta_n \|f_{\hat{h}}\| + \delta_n^2\right) + I_n\\
        =~& \sup_{f\in \Fcal} \E[2\,(h_0(X) - h_*(X)) f(Z) - f(Z)^2] + O\left(\delta_n \|f\| + \delta_n \|f_{\hat{h}}\| + \delta_n^2\right) + I_n\\
        =~& \sup_{f\in \Fcal} \E[2\,(h_0(X) - h_*(X)) f(Z) - \frac{1}{2} f(Z)^2] + O\left(\delta_n \|f_{\hat{h}}\| + \delta_n^2\right) + I_n \tag{AM-GM}\\
        \leq~& 2 \|\Tcal(h_0 - h_*)\|^2 + O\left(\delta_n \|f_{\hat{h}}\| + \delta_n^2\right) + I_n 
    \tag{Equation~\eqref{eqn:equivalence-adv-weak}}
    \end{align}
    Noting also that $\|f_{\hat{h}}\| = \|\Tcal(h_0 - \hat{h})\|$, we conclude that:
    \begin{align}
        \|\Tcal(h_0 - \hat{h})\|^2 - \|\Tcal(h_0 - h_*)\|^2\leq \|\Tcal(h_0 - h_*)\|^2 + O\left(\delta_n \|\Tcal (h_0-\hat{h})\| + \delta_n^2\right) + R_n(h_*)- R_n(\hat{h})
    \end{align}
    Since $\|\Tcal(h_0-\hat{h})\|\leq \|\Tcal(h_0-h_*)\| + \|\Tcal(\hat{h}-h_*)\|$, we get:
    \begin{align}
        \|\Tcal(h_0 - \hat{h})\|^2 - \|\Tcal(h_0 - h_*)\|^2\leq 2 \|\Tcal(h_0 - h_*)\|^2 + O\left(\delta_n \|\Tcal (\hat{h}-h_*)\| + \delta_n^2\right) + R_n(h_*)- R_n(\hat{h})
    \end{align}
    \qed\end{proof}

    \noindent\textbf{Combining convexity and convergence.} Applying Lemma~\ref{lem:weak-conv-tikh}, for $R_n(h)=\lambda\|h\|_{2,n}^2:=\lambda \E_n[h(X)^2]$, with the strong convexity lower bound and letting $I_n = \lambda\left(\|h_*\|_{2,n}^2 - \|\hat{h}\|_{2,n}^2\right)$ and $I=\lambda \left(\|h_*\|^2 - \|\hat{h}\|^2\right)$ we get:
    \begin{align}
        \lambda \|\hat{h}-h_*\|^2 +  \|\Tcal(\hat{h} - h_*)\|^2\leq~& 2\|\Tcal(h_0 - h_*)\|^2  + O\left(\delta_n \|\Tcal(\hat{h}-h_*)\| + \delta_n^2\right) + I_n - I
    \end{align}
    Applying the AM-GM inequality to the term $\delta_n\|\Tcal(\hat{h}-h_*)\|$, and re-arranging we can derive:
    \begin{align}
        \lambda \|\hat{h}-h_*\|^2 + \frac{1}{2}\|\Tcal(\hat{h}-h_*)\|^2 \leq~& 2\|\Tcal(h_0 - h_*)\|^2   + O\left(\delta_n^2\right) + I_n - I
    \end{align}
    
    Consider the discrepancy between the empirical and the population regularizers:
    \begin{align}
        \|h_*\|_{2,n}^2 - \|\hat{h}\|_{2,n}^2 - (\|h_*\|^2 - \|\hat{h}\|^2) = (\E_n - \E)[h_*(X)^2 - \hat{h}(X)^2]
    \end{align}
    Since this is the difference of two centered empirical processes, we can also upper bound the latter by $O\left(\delta_n \|\hat{h}-h_*\| + \delta_n^2\right)$. Thus we get:
    \begin{align}
        \lambda \|\hat{h}-h_*\|^2 + \frac{1}{2} \|\Tcal(\hat{h} - h_*)\|^2
        \leq~& 2\|\Tcal(h_0 - h_*)\|^2 + O\left(\lambda \delta_n \|\hat{h}-h_*\| + \delta_n^2\right)
    \end{align}
    Applying the AM-GM inequality to the term $\delta_n \|\hat{h}-h_*\|$ and re-arranging:
    \begin{align}
        \frac{\lambda}{2} \|\hat{h}-h_*\|^2 + \frac{1}{2} \|\Tcal(\hat{h} - h_*)\|^2
        \leq~& 2\|\Tcal(h_0 - h_*)\|^2 + O\left(\lambda \delta_n^2 + \delta_n^2\right)\\
        =~& 2\|\Tcal(h_0 - h_*)\|^2 + O\left(\delta_n^2\right)\label{eqn:middle-step-v1}
    \end{align}

    \noindent\textbf{Adding the bias part.} By the triangle inequality:
    \begin{align}
        \|\hat{h}-h_0\|^2 \leq~& 2 \|\hat{h}-h_*\|^2 + 2 \|h_* - h_0\|^2
        \leq~ \frac{4}{\lambda}\|\Tcal(h_0 - h_*)\|^2 + O\left(\frac{\delta_n^2}{\lambda}\right) + 2 \|h_* - h_0\|^2
    \end{align}
    By the Bias Lemma~\ref{lem:bias-tikhonov}, we have:
    \begin{align}
        \|h_* - h_0\|^2 \leq~& O\left(\|w_0\|\, \lambda^{\min\{\beta, 2\}}\right)
        &
        \|\Tcal(h_* - h_0)\|^2 \leq~& O\left(\|w_0\|\, \lambda^{\min\{\beta+1, 2\}}\right)
    \end{align}
    Thus overall we get the desired theorem:
    \begin{align}
        \|\hat{h}-h_0\|^2 = O\left(\frac{\delta_n^2}{\lambda} + \|w_0\|\, \lambda^{\min\{\beta, 1\}} + \|w_0\|\, \lambda^{\min\{\beta,2\}}\right) = O\left(\frac{\delta_n^2}{\lambda} + \|w_0\|\, \lambda^{\min\{\beta, 1\}}\right)
    \end{align}

    \noindent\textbf{Weak metric rate.} Starting from Equation~\eqref{eqn:middle-step-v1}, and adding the weak metric bias, we also get:
    \begin{align}
        \|\Tcal(\hat{h}-h_0)\|^2 \leq~& 2 \|\Tcal(\hat{h}-h_*)\|^2 + 2\|\Tcal(h_*-h_0)\|^2 \leq O\left(\|\Tcal(h_0 - h_*)\|^2 + \delta_n^2\right)\\
        =~& O\left(\delta_n^2 + \|w_0\|\, \lambda^{\min\{\beta+1, 2\}}\right)
    \end{align}
    
\qed\end{proof}

\section{Proof of Lemma~\ref{lem:bias-iter-tikhonov}}

\begin{proof}

The proof of this lemma has been discussed in \citep[Theorem 1.4.]{cavalier2011inverse}. For completeness, we show the case where the operator $\Tcal$ is compact. However, the extension to the non-compact operator is straightforward.
    Let $h_0 = \sum_{i=1}^\infty a_{i,0} v_i$ and $h=\sum_{i=1}^{\infty} a_i v_i$ and $h_*=\sum_{i=1}^\infty a_{i,*} v_i$. By some algebra, we can show that 
    \begin{align}
     a_{i,*} = \frac{(\sigma_i^2+\lambda)^t - \lambda^t}{(\sigma^2_i +\lambda)^t} a_{i,0}. 
    \end{align}
    Then note that since the minimum norm solution $h_0$ to the inverse problem, does not place any weight on eigenfunctions for which $\sigma_i=0$:
    \begin{align}
        \|h_* - h_0\|^2 =~& \sum_{i=1}^\infty (a_{i,0} - a_{i,*})^2 = \sum_{i=1}^\infty a_{i,0}^2 \frac{\lambda^{2t}}{(\sigma_i^2 +\lambda)^{2t} }\\
        =~& \sum_{i=1}^\infty 1\{\sigma_i\neq 0\} a_{i,0}^2 \frac{\lambda^{2t}}{(\sigma_i^2 +\lambda)^{2t} }\\
        =~& \sum_{i=1}^\infty 1\{\sigma_i\neq 0\} \frac{a_{i,0}^2}{\sigma_i^{2\beta}} \frac{\sigma_i^{2\beta}\lambda^{2t}}{(\sigma_i^2 +\lambda)^{2t} } \leq \|w_0\|^2_2 \max_{i} \frac{\sigma_i^{2\beta}\lambda^{2t}}{(\sigma_i^2 +\lambda)^{2t}}
    \end{align}
    where the third equality holds because $a_{i, 0} = 0$ whenever $\sigma_i = 0$ and the last inequality holds because $\sum_{i}1\{\sigma_i\neq 0\} \frac{a_{i,0}^2}{\sigma_i^{2\beta}}=\|w_0\|^2_2$ (see the proof for ).
    Now by a case analysis, we can conclude the lemma by showing that:
    \begin{align}
        \frac{\sigma_i^{2\beta}\lambda^{2t}}{(\sigma_i^2 +\lambda)^{2t}} =  \lambda^{\min\{\beta, 2t\}} 
    \end{align}
    Take $\beta \geq 2t$. Then note that:
    \begin{align}
        \max_{i}  \frac{\sigma_i^{2\beta}\lambda^{2t}}{(\sigma_i^2 +\lambda)^{2t}}  \leq \lambda^{2t} \max_{i} \sigma_i^{2(\beta - 2t)} =  \lambda^{2t} 
    \end{align}
    since the maximum singular value of the operator is less than $1$ recalling $\|\Tcal\|\leq 1$. 
    Take $\beta < 2t$. Then note that the function:
    \begin{align}
        f(x) = \frac{x^\beta \lambda^{2t} }{(x+\lambda)^{2t} }
    \end{align}
    is maximized (by using the first order condition) at:
    \begin{align}
        x = \lambda \beta (2t-\beta)^{-1}
    \end{align}
    and takes value:
    \begin{align}
        \frac{x^{\beta} \lambda^{2t}}{(x+\lambda)^{2t}} \leq = \lambda^{\beta} \frac{\beta^{\beta}(2-\beta)^{2t-\beta}}{2^{2t}} \leq \lambda^{\beta}. 
    \end{align}
    For the second part of the lemma, note that:
    \begin{align}
        \|\Tcal(h_*-h_0)\|^2 = \sum_{i=1}^{\infty} \sigma_i^2 (a_{i,0}-a_{i,*})^2 = \sum_{i=1}^\infty \frac{a_{i,0}^2}{\sigma_i^{2\beta}} \frac{\sigma_i^{2(\beta+1)}\lambda^{2t}}{(\sigma_i^2 +\lambda)^{2t}} \leq \|w_0\|^2_2 \max_{i} \frac{\sigma_i^{2(\beta+1)}\lambda^{2t}}{(\sigma_i^2 +\lambda)^{2t}}
    \end{align}
    Following identical steps but renaming $\beta\to \beta+1$ yields the second part.
\qed\end{proof}

\section{Proof of Theorem~\ref{thm:adv-l2-iter}}

\begin{proof}
    Anytime we denote a norm $\|\cdot\|$ in this proof, we mean $\|\cdot\|_{L_2}$. All constants $c_i$ in the proof, denote appropriately large universal constants. We fix an iteration $t$ and we will denote with $h_*\leftarrow h_{*,t}$, $\hat{h}\leftarrow \hat{h}_t$, $h_{*,-1}\leftarrow h_{*,t-1}$ and $\hat{h}_{-1} \leftarrow \hat{h}_{t-1}$. 
    
    \vspace{.5em}\noindent\textbf{Strong convexity with $L_2$ metric.} The first key realization is that the regularized criterion is strongly convex not just with respect to the weak metric, but with respect to the strong $L_2$ metric, due to the Tikhonov regularizer. More formally, consider the loss function:
    \begin{align}
        L_t(\tau) :=~& \|\Tcal(h_0 - h -\tau \nu)\|^2 + \lambda \|h + \tau\nu-h_{*,-1}\|^2.
    \end{align}
    Note that this loss function is quadratic in $\tau$ and strongly convex with:
    \begin{align}
        \partial_{\tau}^2 L(\tau) = \|\Tcal \nu\|^2 + \lambda \|\nu\|^2.
    \end{align}
    Since $h_{*}$ optimizes the regularized criterion, we have that if we take $h=h_{*}$ in the definition of $L(\tau)$ and if we take $\nu=\hat{h} - h_{*}$, then the function $L(\tau)$ is minimized at $\tau=0$. Moreover, since the space $\Hcal$ is convex, any function $h_{*} + \tau(\hat{h}-h_{*})\in \Hcal$ for all $\tau\in [0,1]$ and therefore the derivative with respect to $\tau$ must be non-negative at $\tau=0$ (as otherwise an infinitesimal step in the direction $\nu$ would have decreased the regularized criterion; contradicting the optimality of $h_{*}$). Thus by an exact second order Taylor expansion, we have that:
    \begin{align}
        L(1) - L(0) = \partial_{\tau}L(0) + \partial_{\tau}^2 L(0) \geq \partial_{\tau}^2 L(0).
    \end{align}
    Instantiating the definition of $L$ and re-arranging, we then get:
    \begin{align}
        \lambda \|\hat{h}-h_{*}\|^2 +  \|\Tcal(\hat{h} - h_{*})\|^2\leq \|\Tcal(h_0 - \hat{h})\|^2 - \|\Tcal(h_0 - h_{*})\|^2  + \lambda \underbrace{\left(\|\hat{h}-h_{*,-1}\|^2 - \|h_{*}-h_{*,-1}\|^2\right)}_{=:-I}.
    \end{align}
    
    \vspace{.5em}\noindent\textbf{Convergence of weak metric excess risk.} Applying Lemma~\ref{lem:weak-conv-tikh} with $R_n(h)=\lambda\|h-\hat{h}_{-1}\|_{2,n}^2$ we get:
    \begin{align}
        \|\Tcal(h_0 - \hat{h})\|^2 - \|\Tcal(h_0 - h_*)\|^2
        \leq O\left(\|\Tcal(h_0 - h_*)\|^2 + \delta_n \|\Tcal (\hat{h}-h_*)\|^2 + \delta_n^2\right)  + \lambda I_n
    \end{align}
    where $I_n:= \|h_*-\hat{h}_{-1}\|_{2,n}^2 - \|\hat{h}-\hat{h}_{-1}\|_{2,n}^2$. Though this bound suffices to get a good rate in one iteration, in multiple iterations the above bound depends on the bias $\|\Tcal(h_0-h_*)\|$. Thus if we inductively invoke such bounds, we will be getting cumulative bias terms appearing in the final bound for the $t$-th iterate. However, since these bias terms are large for small iterates, due to the poorly centered regularization bias, the above excess risk guarantee for $\hat{h}$ is not good enough. We will need to show an excess weak metric risk guarantee for $\hat{h}$ that does not depend on bias.
    \begin{lemma}[Bias-Less Excess Weak Metric with Regularizer]\label{lem:weak-conv-tikh-biasless}
    Suppose that $\hat{h}$ optimizes the regularized objective:
    \begin{align}
        \hat{h} = \argmin_{h\in \Hcal} \max_{f\in \Fcal} \E_n[2(m(W;f) -h(X)\,f(Z)) - f(Z)^2] + R_n(h)
    \end{align}
    Suppose that the conditions of Theorem~\ref{thm:adv-l2-iter} are satisfied {and let $f_{h}:=\Tcal (h_0 - h)$.} Then for any $h_*\in \Hcal$, w.p. $1-2\zeta/t$:
    \begin{align}
        \|\Tcal(h_0 - \hat{h})\|^2 - \|\Tcal(h_0 - h_*)\|^2 \leq \underbrace{R_n(h_*) - R_n(\hat{h})}_{\text{``Regularization Bias''}} + \underbrace{O\left(\delta_n \|\hat{h}-h_*\|\, \|f_{h_*}\|_{L_\infty} + \delta_n \|\Tcal (\hat{h}-h_*)\| + \delta_n^2\right)}_{\mcE_n:=\text{``Variance''}}.
    \end{align}
    \end{lemma}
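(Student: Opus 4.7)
Let $\Psi_n(h,f) := \E_n[2(m(W;f) - h(X) f(Z)) - f(Z)^2]$, let $\Psi$ be its population analogue, and define $\bar L_n(h) := \max_{f\in \Fcal}\Psi_n(h,f)$ and $\bar L(h) := \max_{f\in\Fcal}\Psi(h,f) = \|\Tcal(h_0-h)\|^2$, where the last equality uses closedness (c) so that $f_h := \Tcal(h_0-h)\in\Fcal$ maximizes $\Psi(h,\cdot)$. By optimality of $\hat h$ for $\bar L_n + R_n$, we get $\bar L_n(\hat h) - \bar L_n(h_*) \leq R_n(h_*) - R_n(\hat h)$, so the lemma reduces to bounding $[\bar L(\hat h) - \bar L(h_*)] - [\bar L_n(\hat h) - \bar L_n(h_*)]$ by $O(\mcE_n)$. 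I plan to decompose $\bar L(h) - \bar L_n(h) = A_1(h) + A_2(h)$ with $A_1(h) := \Psi(h,f_h) - \Psi_n(h,f_h)$ (a centered empirical process at a \emph{deterministic} maximizer) and $A_2(h) := \Psi_n(h,f_h) - \bar L_n(h) \leq 0$. Since $A_2(\hat h)\leq 0$, the deviation is bounded by $[A_1(\hat h) - A_1(h_*)] + [-A_2(h_*)]$.

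\textbf{Slack term $-A_2(h_*)$.} Let $\tilde f_* := \arg\max_{f\in\Fcal}\Psi_n(h_*,f)$. Localized concentration (Lemma~\ref{lem:concentration}, invoking MSC (d) and the critical radii of $\shull(m\circ\Fcal)$, $\shull(\Hcal\cdot\Fcal)$, $\shull(\Fcal)$) applied to $\Psi_n(h_*,\tilde f_*) - \Psi_n(h_*,f_{h_*})$ as a function of $f\in\Fcal$ should give
\[
-A_2(h_*) \leq [\Psi(h_*,\tilde f_*) - \Psi(h_*,f_{h_*})] + O(\delta_n \|\tilde f_* - f_{h_*}\| + \delta_n^2).
\]
Since $\Psi(h_*,\cdot)$ is a concave quadratic with unconstrained maximum at $f_{h_*}$, the bracketed quantity equals $-\|\tilde f_* - f_{h_*}\|^2$, so an AM-GM on $\delta_n\|\tilde f_* - f_{h_*}\|$ absorbs the linear term into half of this negative quadratic and yields $-A_2(h_*) = O(\delta_n^2)$.

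\textbf{Process term $A_1(\hat h) - A_1(h_*)$.} By linearity of $m$,
\[
A_1(\hat h) - A_1(h_*) = (\E - \E_n)\bigl[2 m(W; f_{\hat h} - f_{h_*}) - 2(\hat h\, f_{\hat h} - h_*\, f_{h_*}) - (f_{\hat h}^2 - f_{h_*}^2)\bigr].
\]
The crucial algebraic step is the pairing
\[
\hat h\, f_{\hat h} - h_*\, f_{h_*} = \hat h\,(f_{\hat h} - f_{h_*}) + (\hat h - h_*)\, f_{h_*}.
\]
For the $(\hat h - h_*)\,f_{h_*}$ piece I apply Lemma~\ref{lem:concentration} to the $\|f_{h_*}\|_{L_\infty}$-Lipschitz loss $\ell(Z;h(X)) := f_{h_*}(Z) h(X)$ (rescaled by $\|f_{h_*}\|_{L_\infty}$ to reduce to the $O(1)$-Lipschitz case, using the critical radius of $\shull(\Hcal-\Hcal)$), producing the target contribution $O(\delta_n \|\hat h - h_*\|\,\|f_{h_*}\|_{L_\infty} + \delta_n^2)$. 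Every other piece involves the residual $f_{\hat h} - f_{h_*} = \Tcal(h_* - \hat h)$, and by MSC for the $m$ term and uniform boundedness of $\hat h$ and $f_{\hat h}+f_{h_*}$ for the other two, each has $L_2$-norm $O(\|\Tcal(\hat h - h_*)\|)$ and hence concentrates at rate $O(\delta_n \|\Tcal(\hat h - h_*)\| + \delta_n^2)$.

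\textbf{Main obstacle.} The whole argument hinges on the algebraic pairing in the decomposition of $\hat h f_{\hat h} - h_* f_{h_*}$: $f_{h_*}$ must appear \emph{only} multiplied by the small strong-metric quantity $\hat h - h_*$ so that concentration produces $\|f_{h_*}\|_{L_\infty}\|\hat h - h_*\|$, while every other residual must be $f_{\hat h} - f_{h_*}=\Tcal(h_*-\hat h)$ so that concentration produces the tight $\delta_n\|\Tcal(\hat h - h_*)\|$ factor. Any looser pairing (e.g., bounding $\|f_{\hat h}\|\leq \|f_{h_*}\| + \|\Tcal(\hat h - h_*)\|$ by the triangle inequality in the concentration step) would reintroduce an uncontrolled $\delta_n\|f_{h_*}\|$ term, which is exactly the bias contribution that the iterated Tikhonov analysis in Theorem~\ref{thm:adv-l2-iter} is specifically designed to avoid (since $\|f_{h_*}\|$ does not shrink fast enough across iterations at large $\beta$).
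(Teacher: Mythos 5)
Your proposal is correct and follows essentially the same route as the paper's proof: the $A_1/A_2$ split with $f_h$ as the centering function corresponds to the paper's two applications of Lemma~\ref{lem:concentration} (one for the empirical process at the deterministic maximizer, one for the $\hat f_h$ vs.\ $f_h$ slack), the quadratic computation giving $-A_2(h_*)=\Psi(h_*,\tilde f_*)-\Psi(h_*,f_{h_*})+O(\delta_n\|\tilde f_*-f_{h_*}\|+\delta_n^2)=-\|\tilde f_*-f_{h_*}\|^2+O(\cdots)\le O(\delta_n^2)$ after AM--GM is exactly the paper's argument via $Q_h(\tau)$, and the pairing $\hat h f_{\hat h}-h_* f_{h_*}=\hat h(f_{\hat h}-f_{h_*})+(\hat h-h_*)f_{h_*}$ is precisely how the paper bounds $\sqrt{\E[(\hat h f_{\hat h}-h_* f_{h_*})^2]}$ to isolate the $\|\hat h-h_*\|\,\|f_{h_*}\|_{L_\infty}$ factor. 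The only cosmetic difference is that you apply Lemma~\ref{lem:concentration} term-by-term after decomposing (rescaling by the Lipschitz constant $\|f_{h_*}\|_{L_\infty}$ for one piece) whereas the paper applies it once to the joint loss and decomposes the resulting second-moment term; both land on the same variance bound $\mcE_n$.
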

    \begin{proof}
    Let:
    \begin{align}\label{eqn:game-loss}
        \ell(h, f) = 2(m(W;f) - h(X)\, f(Z)) - f(Z)^2.
    \end{align}
    Note that by Equation~\eqref{eqn:equivalence-adv-weak}:
    \begin{align}\label{eqn:game-loss-equivalence}
        \|\Tcal(h_0 - h)\|^2 =~& \E[2\,(h_0(X) - h(X)) f_{h}(Z) - f_{h}(Z)^2]\\
        =~& \E[2\,(m(W;f) - {h}(X)\, f_{h}(Z)) - f_{h}(Z)^2] = \E[\ell(h, f_h)].
    \end{align}
    Thus we can write:
    \begin{align}
        \|\Tcal(h_0 - \hat{h})\|^2 - \|\Tcal(h_0 - h_*)\|^2=~&  \E[\ell(\hat{h}, f_{\hat{h}}) - \ell(h_*, f_{h_*})].
    \end{align}
    Since the loss function $\ell(h, f)$ is $O(1)$ lipschitz with respect to the functions $(h, f)$, we can apply the localized concentration inequality of Lemma~\ref{lem:concentration} and mean-squared-continuity, to get w.p. $1-\zeta/t$:
    \begin{align}
        \left|(\E_n-\E)[\ell(\hat{h}, f_{\hat{h}}) - \ell(h_*, f_{h_*})]\right| \leq~& O\left(\delta_n \left(\sqrt{\E[m(W; f_{\hat{h}}-f_{h_*})^2] + \E[(\hat{h}(X)\, f_{\hat{h}}(Z)-h_*(X)\,f_{h_*}(Z))^2]}\right)\right)\\
        ~&~~~~ + O\left(\delta_n \|f_{\hat{h}} - f_{h_*}\| + \delta_n^2\right)\\
        \leq~& O\left(\delta_n \sqrt{\E[(\hat{h}(X) - h_*(X))^2 \, f_{h_*}(Z)^2]} + \delta_n \|f_{\hat{h}} - f_{h_*}\| + \delta_n^2\right)\\
        \leq~& O\left(\delta_n \|\hat{h}-h_*\|\, \|f_{h_*}\|_{L_{\infty}} + \delta_n \|f_{\hat{h}} - f_{h_*}\| + \delta_n^2\right) =: \mcE_n
    \end{align}
    Moreover, note that:
    \begin{align}
        \|f_{\hat{h}}-f_{h_*}\| = \|\Tcal(\hat{h}-h_*)\|
    \end{align}
    Thus we get:
    \begin{align}\label{eqn:middle-step}
        \|\Tcal(h_0 - \hat{h})\|^2 - \|\Tcal(h_0 - h_*)\|^2\leq~& \E_n[\ell(\hat{h}, f_{\hat{h}}) - \ell(h_*, f_{h_*})] + \mcE_n
    \end{align}
    If the empirical algorithm was minimizing $\E_n[\ell(h, f_h)]$, then the first term would have been negative and we can ignore it. However, our estimating is optimizing $\sup_{f} \E_n[2(m(W;f)-h(X)\,f(Z))-f(Z)^2]$. We now need to argue the discrepancy between $\E_n[\ell(h, f_h)]$ and $\sup_{f} \E_n[2(m(W;f)-h(X)\,f(Z))-f(Z)^2]$.

    Let $\hat{f}_{h}=\argmax_{f\in \Fcal}\E_n[2(m(W;f)-h(X)\,f(Z))-f(Z)^2]$, then our algorithm is optimizing $\E_n[\ell(h, \hat{f}_h)]$. We consider the discrepancy:
    \begin{align}
        \E_n[\ell(h, \hat{f}_h) - \ell(h, f_h)]
    \end{align}
    Let $Q_{h}(\tau)=\E\left[\ell\left(h, f_h + \tau (\hat{f}_h - f_h)\right)\right]$. Since $f_h$ is an interior first order optimal point for the optimization problem $\sup_f \E[\ell(h,f)]$, it satisfies that $Q_{h}'(0)=0$ for any $h$. Moreover, since $\ell(h,f)$ is quadratic in $f$, we have that $Q_{h}(\tau)$ is quadratic in $\tau$, with $Q_h''(\tau) = -2\|\hat{f}_h-f_h\|^2$. By an exact second order functional Taylor expansion of $Q_h(\tau)$, we have that:
    \begin{align}
        \E[\ell(h,{f}_h) - \ell(h, \hat{f}_h)] = Q_{h}(0) - Q_h(1) = - Q_{h}'(0) - \frac{1}{2} \int_0^1 Q_{h}''(\tau) d\tau = \|\hat{f}_h - f_h\|^2  \label{eqn:quadratic}
    \end{align}

    By the optimality of $\hat{f}_h$ we also have that:
    \begin{align}
        \E_n[\ell(h,f_h) - \ell(h, \hat{f}_h)]\leq 0
    \end{align}
    Thus we conclude that:
    \begin{align}\label{eqn: f-hat-bound}
        \|\hat{f}_h - f_h\|^2 \leq (\E-\E_n)[\ell(h,{f}_h) - \ell(h, \hat{f}_h)]
    \end{align}
    If $\delta_n$ upper bounds the critical radius of $\shull(\Hcal\cdot \Fcal)$, $\shull(m\circ \Fcal)$ and $\shull(\Fcal)$, then by the localized concentration inequality in Lemma~\ref{lem:concentration}, we have, w.p. $1-\zeta/t$, for all $h\in \Hcal$:
    \begin{multline}
        \left|(\E-\E_n)[\ell(h,{f}_h) - \ell(h, \hat{f}_h)]\right| \\
        \leq O\left(\delta_n \left(\sqrt{\E[m(W; f_h-\hat{f}_h)^2] + \E[h(X)^2 (f_h(Z)-\hat{f}_h(Z))^2]} + \|f_h - \hat{f}_h\|\right) + \delta_n^2\right)
    \end{multline}
    By mean-squared-continuity of $m$ and boundedness of the functions:
    \begin{align}\label{eq:strong-concentration}
        \left|(\E-\E_n)[\ell(h,{f}_h) - \ell(h, \hat{f}_h)]\right| \leq~& O\left(\delta_n \|f_h - \hat{f}_h\| + \delta_n^2\right) \leq \frac{1}{2} \|f_h - \hat{f}_h\|^2 + O\left(\delta_n^2\right)
    \end{align}
    Combining Equation~\eqref{eq:strong-concentration} with Equation~\eqref{eqn: f-hat-bound} and re-arranging yields:
        \begin{align}\label{eq:strong}
        \E[\ell(h,{f}_h) - \ell(h, \hat{f}_h)] = \|\hat{f}_h - f_h\|^2 \leq O\left(\delta_n^2\right)
    \end{align}
    Subsequently we also get for all $h\in \Hcal$:
    \begin{align}
        \E_n[\ell(h,{f}_h) - \ell(h, \hat{f}_h)] \leq \E[\ell(h,{f}_h) - \ell(h, \hat{f}_h)] + \frac{1}{2} \|f_h - \hat{f}_h\|^2 + O\left(\delta_n^2\right) = O\left(\delta_n^2\right)
    \end{align}

    Applying this fast concentration to each leading term in Equation~\eqref{eqn:middle-step}:
    \begin{align}
        \|\Tcal(h_0 - \hat{h})\|^2 - \|\Tcal(h_0 - h_*)\|^2\leq \E_n[\ell(\hat{h}, \hat{f}_{\hat{h}}) - \ell(h_*, \hat{f}_{h_*})] + \mcE_n + O(\delta_n^2)
    \end{align}
    Since by the definition of our estimation algorithm, $\hat{h}$ minimizes:
    \begin{align}
        \E_n[\ell({h}, \hat{f}_{{h}})] + R_n(h)
    \end{align}
    we get that the leading term is at most $R_n(h_*)-R_n(\hat{h})$. This concludes the proof.
    \qed\end{proof}
    
    Applying Lemma~\ref{lem:weak-conv-tikh} with $R_n(h)=\lambda\|h-\hat{h}_{-1}\|_{2,n}^2$ and combining with the strong convexity lower bound we get:
    \begin{align}
        \lambda \|\hat{h}-h_*\|^2 +  \|\Tcal(\hat{h} - h_*)\|^2\leq~& \|\Tcal(h_0 - \hat{h})\|^2 - \|\Tcal(h_0 - h_*)\|^2  + \lambda \left(\|\hat{h}\|^2 - \|h_*\|^2\right) \\
        \leq~& O\left(\delta_n \|\hat{h}-h_*\|\, \|f_{h_*}\|_{L_\infty} + \delta_n \|\Tcal(\hat{h} - h_*)\| + \delta_n^2\right) + \lambda\,(I_n - I)
    \end{align}
    Applying the AM-GM inequality to the leading terms and re-arranging, yields:
    \begin{align}
        \frac{\lambda}{2} \|\hat{h}-h_*\|^2 = O\left(\frac{\delta_n^2 \|f_{h_*}\|^2_{L_\infty}}{\lambda} + \delta_n^2\right) + \lambda\, (I_n - I)
    \end{align}

    Consider the discrepancy between the empirical and the ideal regularizers, i.e. $I_n-I$. The two differ first in that they use different centering functions, i.e. $\hat{h}_{-1}$ vs. $h_{*,-1}$ and in using empirical vs. population $L_2$ norms. We separate the two errors by writing: 
    \begin{align}
        I_n - I = I_n - \hat{I} + \hat{I} - I
    \end{align}
    where $\hat{I} = \|h_*-\hat{h}_{-1}\|^2 - \|\hat{h}-\hat{h}_{-1}\|^2$.

    \noindent\textbf{Regularization error due to centering error} We first analyze $\hat{I}-I$.
    \begin{align}
    \hat{I} - I = \|h_*-\hat{h}_{-1}\|^2 - \|\hat{h}-\hat{h}_{-1}\|^2 - \left(\|h_*-h_{*,-1}\|^2 - \|\hat{h}-h_{*,-1}\|^2\right)
    \end{align}
    The function $\|h_*-h\|^2 - \|\hat{h}-h\|^2$, simplifies to:
    \begin{align}
        \|h_*\|^2 - \|\hat{h}\|^2 + 2\langle h, \hat{h} - h_*\rangle
    \end{align}
    Thus $\hat{I}-I$ simplifies to:
    \begin{align}
        |\hat{I}-I| = \left|2 \langle \hat{h}_{-1} - h_{*,-1}, \hat{h} - h_*\rangle\right| \leq 2 \|\hat{h}_{-1} - h_{*,-1}\|\, \|\hat{h}-h_*\|
    \end{align}
    
    \noindent\textbf{Empirical vs. population regularization} Next consider:
    \begin{align}
        I_n - \hat{I} :=~& \|h_*-\hat{h}_{-1}\|_{2,n}^2 - \|\hat{h}-\hat{h}_{-1}\|_{2,n}^2 - (\|h_*-\hat{h}_{-1}\|^2 - \|\hat{h}-\hat{h}_{-1}\|^2)
    \end{align}
    We further split this into two differences of centered empirical processes:
    \begin{align}
        I_1 :=~& \|h_*-\hat{h}_{-1}\|_{2,n}^2 - \|h_*-{h}_{*,-1}\|_{2,n}^2 - (\|h_*-\hat{h}_{-1}\|^2 - \|h_*-h_{*,-1}\|^2)\\
        I_2 :=~& \|h_*-h_{*,-1}\|_{2,n}^2 - \|\hat{h}-\hat{h}_{-1}\|_{2,n}^2 - (\|h_*-h_{*,-1}\|^2 - \|\hat{h}-\hat{h}_{-1}\|^2)
    \end{align}
    Since each of these is the difference of two centered empirical processes, that are also Lipschitz losses (since $h_*,\hat{h},h_{*,-1}, \hat{h}_{-1}$ are uniformly bounded) and since $h_*$ 
    is a population quantity and not dependent on the empirical sample that is used for the $t$-th iterate,
    we can also upper bound these, w.p. $1-2\zeta/t$ by \begin{align}
    I_1 =~& O\left(\delta_n \|\hat{h}_{-1}-h_{*,-1}\| + \delta_n^2\right),\\
    I_2 =~& O\left(\delta_n \|\hat{h}-h_* + h_{*,-1} - \hat{h}_{-1}\| + \delta_n^2\right) = O\left(\delta_n\left(\|\hat{h}-h_*\| + \|\hat{h}_{-1}-h_{*,-1}\|\right) + \delta_n^2\right)
    \end{align}
    where $\delta_n$ is an upper bound on the critical radius of $\shull(\Hcal-\Hcal)$.
    
    Thus we get:
    \begin{align}
        \frac{\lambda}{2}  \|\hat{h}-h_*\|^2 
        \leq~& O\left(\frac{\delta_n^2 \|f_{h_*}\|^2_{L_\infty}}{\lambda} + \delta_n^2 +  \lambda \delta_n \|\hat{h}-h_*\| + \lambda \delta_n\|\hat{h}_{-1} - h_{*,-1}\|\right) + 2\lambda \|\hat{h}_{-1} - h_{*,-1}\|\, \|\hat{h}-h_*\|
    \end{align}
    
    Applying the AM-GM inequality to the last three terms:
    \begin{align}\label{eqn:middle-step-iter}
    \frac{\lambda}{8} \|\hat{h}-h_*\|^2 \leq~& O\left(\frac{\delta_n^2\|f_{h_*}\|_{L_\infty}^2}{\lambda} + \delta_n^2\right) + 2\lambda \|\hat{h}_{-1} - h_{*,-1}\|^2
    \end{align}
    
    The latter also trivially implies that:
    \begin{align}
        \|\hat{h}-h_*\|^2 \leq O\left(\frac{\delta_n^2\|f_{h_*}\|_{L_\infty}^2}{\lambda^2} + \frac{\delta_n^2}{\lambda}\right) + 16\, \|\hat{h}_{-1} - h_{*,-1}\|^2
    \end{align}
    Let $M_t = \max\left\{\lambda, \|\Tcal(h_{*,t} - h_0)\|_{L_{\infty}}^2\right\}$.
    Then the above shows that any $\tau\leq t$:
    \begin{align}
        \|\hat{h}_{\tau}-h_{*,\tau}\|^2 \leq O\left(\frac{\delta_n^2\, M_\tau}{\lambda^2}\right) + 16\, \|\hat{h}_{\tau-1} - h_{*,\tau-1}\|^2
    \end{align}
    Let $\gamma_t = \|\hat{h}_{t} - h_{*,t}\|^2$. Note that for $t=0$, since $\gamma_0=0$. Then, if we let $M_{\leq t} = \max_{\tau\leq t}M_\tau$, by our recursive bound, we have:
    \begin{align}
        \gamma_t \leq C\frac{\delta_n^2\, M_t}{\lambda^2} + 16 \gamma_{t-1} \leq C\frac{\delta_n^2\, M_{\leq t}}{\lambda^2} +  16\gamma_{t-1}
    \end{align}
    By a simple induction, this then yields the closed form bound:
    \begin{align}
        \gamma_t \leq 16^t C \frac{\delta_n^2\, M_{\leq t}}{\lambda^2}
    \end{align}
    
    By the Bias Lemma~\ref{lem:bias-tikhonov}, we have:
    \begin{align}
        \|h_{*,t} - h_0\|^2 \leq~& \|w_0\|\,  \lambda^{\min\{\beta, 2t\}} & 
        \|\Tcal(h_{*,t} - h_0)\|^2 \leq~& \|w_0\|\,  \lambda^{\min\{\beta+1, 2t\}}
    \end{align}
    Adding the bias term we get:
    \begin{align}
        \|\hat{h}_t-h_0\|^2 \leq O\left(\|\hat{h}_t - h_{*,t}\| + \|h_{*,t} -h_0\|\right)=  O\left(16^t\frac{\delta_n^2\, M_{\leq t}}{\lambda^2} + \|w_0\|\, \lambda^{\min\{\beta, 2t\}}\right)
    \end{align}

    \noindent\textbf{Weak metric rate.} 
    Applying Lemma~\ref{lem:weak-conv-tikh} with $R_n(h)=\lambda\|h-\hat{h}_{-1}\|_{2,n}^2$ we get:
    \begin{align}
        \lambda  \|\hat{h}-h_*\|^2 + \|\Tcal(\hat{h}_t - h_*)\|^2 \leq O\left(\|\Tcal(h_0 - h_*)\|^2 + \delta_n \|\Tcal (\hat{h}-h_*)\|^2 + \delta_n^2\right)  + I_n - I
    \end{align}
    Invoking the bound on $I_n-I$ and the AM-GM inequality, we get:
    \begin{align}
        \frac{\lambda}{2}  \|\hat{h}-h_*\|^2 + \frac{1}{2} \|\Tcal(\hat{h}_t - h_*)\|^2 \leq~& O\left(\|\Tcal(h_0 - h_*)\|^2 + \delta_n^2  + \lambda \|\hat{h}_{-1} - h_{*,-1}\|^2\right)\\
        \leq~& O\left(\|\Tcal(h_0 - h_*)\|^2 + \delta_n^2 + \min\left\{\lambda, \frac{16^t\delta_n^2 M_{\leq t-1}}{\lambda}\right\}\right)
    \end{align}
    Invoking the weak metric bias upper bound, we get the desired bound.
\qed\end{proof}

\section{Proof of Corollary~\ref{thm:adv-l3}}

Our goal is to show without any source condition, (1) $\|\Tcal(\tilde h-h_0)\|^2_2 = 2\mu_n + c \delta^2_n$ with probability at least $1-O(\zeta)$ for all $\tilde h\in \tilde{\Hcal}$
and (2) for any $h \in \Hcal$ such that  $\|\Tcal(h-h_0)\|^2_2 = \frac{2}{3}\max(\mu_n-c\delta^2_n,0)$, the function $h$ belongs to $\tilde \Hcal$ with probability at least $1-O(\zeta)$, for some universal constant $c$.
Then, as a corollary by setting $\mu_n \geq c'\delta^2_n$, we can ensure $\|\Tcal(\tilde h- h_0)\|^2_2 = O(\mu_n)$. Moreover, the guarantee for the strong metric under the $\beta$-source condition follows by simply invoking \cref{thm:adv-l2} and conditioning on the above events. Since all statements in \cref{thm:adv-l2} hold in high probability, the result can be concluded via a union bound. The only crucial condition we need to ensure from \cref{thm:adv-l2} is realizability, i.e. $h_* \in \tilde \Hcal$, since this is the only assumption which is affected by restricting to a sub-space of $\Hcal$. We will show that this holds by setting $\mu_n$ appropriately large.
First, we know that under the $\beta$-source condition:
\begin{align}
    \|\Tcal(h_*- h_0)\|^2_2  \leq \|w_0\|^2_2   \lambda^{\min(\beta+1, 2)} 
\end{align}
and the right hand is smaller $c''\,\lambda^{\min(\beta+1, 2)}$ for some universal constant $c''$. Thus by the second property of the first paragraph, we have that for $\mu_n\geq \frac{3}{2} \left(c''\lambda^{\min(\beta + 1, 2)} + c\delta_n^2\right)$, that $h_*$ satisfies the premise of the property and hence it also satisfies the conclusion that $h_*\in \tilde{\Hcal}$.

Thus it remains to show the two properties. Hereafter, we denote 
\begin{align}
   \hat h =  \argmin_{h\in \Hcal} \max_{f\in \Fcal} \E_n[2(Y-h(X))\,f(Z) - f(Z)^2]. 
\end{align}

\noindent\textbf{First statement: $\forall \tilde{h}\in \tilde{\Hcal}: \|\Tcal(\tilde h-h_0)\|^2_2 = \delta^2_n$ with probability $1-\zeta$.} We note that any functio $\tilde{h}\in \tilde{\Hcal}$ is an approximate optimizer of the empirical adversarial objective. We then simply note that the proof of Lemma~\ref{lem:weak-conv-tikh} can be easily modified to any $\mu_n$-approximate minimizer of the empirical criterion and not only for the exact empirical minimizer, at the expense of an extra additive $\mu_n$ in the upper bound (we omit the proof for conciseness, as it is a trivial extension). Applying then this slight extension of Lemma~\ref{lem:weak-conv-tikh} with $\hat{h}=\tilde{h}$ and $R_n(h)=0$ (and consequently with $h_*=h_0$), we get:
\begin{align}
    \|T(\tilde{h}-h_0)\|^2 \leq O\left(\delta_n \|T(\tilde{h}-h_0)\| + \delta_n^2\right) + \mu_n
\end{align}
Applying the AM-GM inequality to the first term on the right hand side and re-arranging we get:
\begin{align}
    \|T(\tilde{h}-h_0)\|^2 \leq O\left( \delta_n^2\right) + 2\mu_n
\end{align}
\noindent\textbf{Second statement. Any function $h \in \Hcal$ such that  $\|\Tcal(h-h_0)\|^2_2 \leq \frac{2}{3}\max(\mu_n - c\, \delta^2_n,0) $,  belongs to $\tilde \Hcal$ with probability at least $1-O(\zeta)$.}
Take any $h$ such that $h \in \Hcal$ such that  $\|\Tcal(h-h_0)\|^2_2 = O(\mu_n)$. Let $f_{h}=\E[h_0(X) - h(X)\mid Z=\cdot]$ and $\ell(h,f)$ as defined in Equation~\eqref{eqn:game-loss}. As noted in Equation~\eqref{eqn:game-loss-equivalence}, we can write for any $h\in \Hcal$ and assuming that $f_h\in \Fcal$, $\|\Tcal(h_0 - h)\|^2  =~ \E[\ell({h}, f_{{h}}) ] = \|f_{ h}\|^2$. Moreover, by Equation~\eqref{eq:strong-concentration} and Equation~\eqref{eq:strong} and Equation~\eqref{eqn:quadratic}, we have that for all $h\in \Hcal$:
\begin{align}
    \left|(\E-\E_n)[\ell(h,{f}_h) - \ell(h, \hat{f}_h)]\right| \leq~& O(\delta_n^2) &
    \E[\ell(h,{f}_h) - \ell(h, \hat{f}_h)] =~& \|f_h - \hat{f}_h\|^2 = O\left(\delta_n^2\right)
\end{align}
which together imply that:
\begin{align}
    \left|\E_n[\ell(h,{f}_h) - \ell(h, \hat{f}_h)]\right| = O\left(\delta_n^2\right) 
\end{align}
Applying the latter for any $h\in \Hcal$ with $\|f_h\|^2\leq \max(\mu_n - c_3 \delta_n^2, 0)$ and for $\hat{h}$, we get:
\begin{align}
     L_n(h) - L_n(\hat{h}) :=~& \E_n[\ell(h,\hat f_{h})] -  \E_n[\ell(\hat h,\hat f_{\hat h})]\\
     \leq~& \E_n[\ell(h, f_{h})] -  \E_n[\ell(\hat h, f_{\hat h})] + O\left(\delta^2_n\right)\\
     \leq~& \E[\ell(h, f_{h}) - \ell(\hat h, f_{\hat h})] + \left|(\E-\E_n)[\ell(h, f_{h})]\right| + \left|(\E-\E_n)[\ell(\hat h, f_{\hat h})]\right| + O\left(\delta^2_n\right)
\end{align}
By localized concentration, uniform boundedness and mean-squared-continuity of the moment, we also have that w.p. $1-\zeta$, for all $h\in \Hcal$:
\begin{align}
    \left|(\E-\E_n)[\ell(h, f_{h})]\right| =~& O\left(\delta_n \left(\sqrt{\E[m(W; f_{{h}} )^2]} + \sqrt{\E[(h(X)\, f_{{h}}(Z))^2]} + \|f_h\|\right) + \delta_n^2\right)\\
    =~& O\left(\delta_n\|f_h\| + \delta_n^2\right)
\end{align}
Thus we conclude that w.p. $1-O(\zeta)$:
\begin{align}
    L_n(h) - L_n(\hat{h})\leq~& \E[\ell(h, f_{h}) - \ell(\hat h, f_{\hat h})] + O\left(\delta_n \|f_h\| + \delta_n \|f_{\hat{h}}\| + \delta_n^2\right)\\
    =~& \|f_h\|^2 - \|f_{\hat{h}}\|^2 + O\left(\delta_n \|f_h\| + \delta_n \|f_{\hat{h}}\| + \delta_n^2\right)\\
    \leq~& \frac{3}{2}\|f_h\|^2 - \frac{1}{2} \|f_{\hat{h}}\|^2 +  O\left(\delta_n^2\right) \tag{AM-GM inequality}\\
    \leq~& \frac{3}{2}\|f_h\|^2 +  O\left(\delta_n^2\right)
    ~\leq~ \frac{3}{2}\|f_h\|^2 +  c\,\delta_n^2
\end{align}
for some universal constant $c$.
Moreover, by assumption we have that for any $\mu_n \geq c\delta_n^2$ that $\|f_h\|^2 \leq \frac{2}{3} \left(\mu_n - c\delta_n^2\right)$. Thus we conclude that w.p. $1-O(\zeta)$:
\begin{align}
    L_n(h) - L_n(\hat{h})\leq~&  \frac{3}{2} \frac{2}{3}\left( \mu_n - c\delta_n^2\right) + c \delta_n^2 \leq \mu_n
\end{align}
in which case $h\in \tilde{\Hcal}$. 

\section{Proof of Corollary~\ref{thm:adv-l3-iter}}

The proof of Corollary~\ref{thm:adv-l3-iter} follows along identical lines as that of Corollary~\ref{thm:adv-l3} applied to every iterate. With the same exact reasoning we can argue that for every iterate $t$: (1) $\|\Tcal(\tilde h-h_0)\|^2_2 = 2\mu_{n,t} + c \delta^2_n$ with probability at least $1-O(\zeta)$ for all $\tilde h\in \tilde{\Hcal}_t$.
and (2) for any $h \in \Hcal$ such that  $\|\Tcal(h-h_0)\|^2_2 = \frac{2}{3}\max(\mu_{n,t}-c\delta^2_n,0)$, the function $h$ belongs to $\tilde \Hcal_t$ with probability at least $1-O(\zeta)$, for some universal constant $c$.
The first property follows by noting that Lemma~\ref{lem:weak-conv-tikh-biasless} can be easily adapted to allow an additive approximation error to the empirical optimization, similar to Lemma~\ref{lem:weak-conv-tikh} that we used in the proof of Corollary~\ref{thm:adv-l3}. The second property follows with identical reasoning as the corresponding property in the proof of Corollary~\ref{thm:adv-l3}.
Then, as a corollary by setting $\mu_{n,t} \geq c'\max\{\delta^2_n, \lambda^{\min(\beta,2t)}\}$, we can ensure $\|\Tcal(\tilde h- h_0)\|^2_2 = O(\mu_{n,t}) $. Moreover, the guarantee for the strong metric under the $\beta$-source condition follows by simply invoking \cref{thm:adv-l2-iter}, using the smoothness assumption to upper bound $M_{\leq t}$ by $O(\lambda^{\gamma})$ (as in Remark~\ref{rem:smoothness}) and conditioning on the above events. Since all statements in \cref{thm:adv-l2-iter} hold in high probability, the result can be concluded via a union bound. The only crucial condition we need to ensure from \cref{thm:adv-l2-iter} is realizability, i.e. $h_{*,\tau} \in \tilde \Hcal_\tau$, for all $\tau\leq t$, since this is the only assumption which is affected by restricting to a sub-space of $\Hcal$.
We will show that this holds by setting $\mu_{t,n}$ appropriately large. First, we know that under the $\beta$-source condition for any $\tau\leq t$:
\begin{align}
    \|\Tcal(h_{*,\tau}- h_0)\|^2_2  \leq \|w_0\|^2_2 \lambda^{\min(\beta+1,2\tau)}.
\end{align}
and the right hand is smaller $c''\,\lambda^{\min(\beta+1,2\tau)}$ for some universal constant $c''$. Thus by the second property of the first paragraph, we have that for $\mu_{n,\tau}\geq \frac{3}{2} \left(c''\lambda^{\min(\beta+1,2\tau)} + c\delta_n^2\right)$, that $h_{*,\tau}$ satisfies the premise of the second property and hence it also satisfies the conclusion that $h_{*,\tau}\in \tilde{\Hcal}_\tau$.

\section{Proof of Lemma~\ref{lem:doubly_robust}}

This follows by the properties of the functions $h_0, q_0$:
\begin{align}
    \theta(h,q) - \theta_0 =~& \theta(h,q) - \theta(h_0,q_0)\\
    =~& \E[\tilde{m}(W;h) + m(W;q) - q(Z)\,h(X)] - \E[\tilde{m}(W;h_0) + m(W;q_0) - q_0(Z)\, h_0(X)]\\
    =~& \E[\tilde{m}(W;h-h_0) + m(W;q-q_0) - q(Z)\,h(X) +  q_0(Z)\, h_0(X)]\\
    =~& \E[q_0(Z)\,(h(X)-h_0(X)) + h_0(X)\,(q(Z)-q_0(Z)) - q(Z)\,h(X) +  q_0(Z)\, h_0(X)]\\
    =~& \E[q_0(Z)\,(h(X)-h_0(X)) + (h_0(X) - h(X))\,q(Z)]\\
    =~& \E[(q_0(Z) - q(Z))\,(h(X)-h_0(X))]
\end{align}

\section{Discussion on Source Condition for $q_0$}\label{app:riesz-smooth}

Note that the source condition for the moment problem that defines $q_0$, is essentially an assumption on the Riesz representer $a_0$. Since the non-parametric problem that defines $q_0$ is of the form $\Tcal^*(a_0 - q_0) = 0$, the source condition for the problem states that:
\begin{align}
    \sum_{i=1}^{\infty} \frac{\langle q_0, u_i\rangle^2}{\sigma_i^{2\beta}} <\infty
\end{align}
Moreover, note that since $\Tcal^* q_0 = a_0$, we have that:
\begin{align}
    a_0 = \sum_{i} \sigma_i \langle q_0, u_i\rangle v_i\implies
    \langle a_0, v_i\rangle = \sigma_i \langle q_0, u_i\rangle
\end{align}
Thus a $\beta$-source condition on $q_0$ can be equivalently expressed as a $(\beta+1)$-source condition on $a_0$:
\begin{align}
    \sum_{i=1}^{\infty} \frac{\langle a_0, v_i\rangle^2}{\sigma_i^{2(\beta+1)}} <\infty
\end{align}
For the existence of $q_0$, we already know that the above must hold for $\beta=0$. Hence, a $\beta$ source condition on $q_0$ is a stronger version of the source condition on $a_0$ that guarantees existence of a solution to the IV problem in Equation~\eqref{eqn:IV-q}.

This implicitly states that the Riesz representer $a_0$ should be primarily supported on the lower spectrum of the conditional expectation operator $\Tcal$. Another way of saying it, is that the functional that we care about is the inner product of the non-parametric regression function and a function that lies on the lower part of the spectrum of the operator, i.e. the functional projects the non-parametric function onto the lower part of the spectrum! 

\end{document}